\newtheorem{theorem}{Theorem}
\newtheorem{lemma}[theorem]{Lemma}
\newtheorem{claim}[theorem]{Claim}
\newtheorem{definition}{Definition}
\g@addto@macro{\UrlBreaks}{%
\do\/%
\do\a\do\b\do\c\do\d\do\e\do\f\do\g\do\h\do\i\do\j\do\k\do\l\do\m%
\do\n\do\o\do\p\do\q\do\r\do\s\do\t\do\u\do\v\do\w\do\x\do\y\do\z%
\do\A\do\B\do\C\do\D\do\E\do\F\do\G\do\H\do\I\do\J\do\K\do\L\do\M%
\do\N\do\O\do\P\do\Q\do\R\do\S\do\T\do\U\do\V\do\W\do\X\do\Y\do\Z%
\do\0\do\1\do\2\do\3\do\4\do\5\do\6\do\7\do\8\do\9%
}
\newcommand*{\wLoG}{w.l.o.g.}
\newcommand*{\WLoG}{W.l.o.g.}
\newcommand*{\Th}{^{\textrm{th}}}
\let\eps\varepsilon
\newcommand*{\defeq}{:=}
\newcommand*{\analogue}{analog}
\newcommand*{\Analogue}{Analog}
\newcommand*{\ceil}[1]{\left\lceil #1 \right\rceil}
\newcommand*{\smallceil}[1]{\lceil #1 \rceil}
\DeclareMathOperator*{\E}{E}
\DeclareMathOperator*{\argmin}{argmin}
\DeclareMathOperator{\opt}{opt}
\newcommand*{\optprog}[3]{
\begin{array}{*3{>{\displaystyle}l}}
#1 & \multicolumn{2}{>{\displaystyle}l}{#2}
#3 \end{array}}
\algnewcommand{\LineComment}[1]{\State \textcolor{gray}{\texttt{//} \textit{#1}}}
\newcommand{\algmargin}{\the\ALG@thistlm}
\algnewcommand{\parState}[1]{\State\mbox{\hspace*{2em}%
\parbox[t]{\dimexpr\linewidth-\algmargin-2em}{\hspace*{-2em}\strut #1\strut}}}
\newcommand*{\thmdep}[2]{}
\newcommand*{\Null}{\texttt{null}}
\newcommand*{\nonrot}{^{\mathrm{nr}}}
\newcommand*{\rot}{^{\mathrm{r}}}
\newcommand*{\thin}{skewed}
\newcommand*{\Thin}{Skewed}
\newcommand*{\rankLemmaNote}{Rank Lemma:
the number of non-zero variables in an extreme-point solution
to a linear program is at most the number of
non-trivial constraints~\cite[Lemma 2.1.4]{iterative-methods}.}
\newcommand*{\Acal}{\mathcal{A}}
\newcommand*{\Ccal}{\mathcal{C}}
\newcommand*{\Hhat}{\widehat{H}}
\newcommand*{\Htild}{\widetilde{H}}
\newcommand*{\Itild}{\widetilde{I}}
\newcommand*{\Ihat}{\widehat{I}}
\newcommand*{\Jtild}{\widetilde{J}}
\newcommand*{\Jhat}{\widehat{J}}
\newcommand*{\Scal}{\mathcal{S}}
\newcommand*{\Tcal}{\mathcal{T}}
\newcommand*{\What}{\widehat{W}}
\newcommand*{\Wtild}{\widetilde{W}}
\newcommand*{\Xtild}{\widetilde{X}}
\newcommand*{\APoG}{\mathrm{APoG}}
\newcommand*{\WLarge}{W^{(L)}}
\newcommand*{\WSmall}{W^{(S)}}
\newcommand*{\WhatLarge}{\What^{(L)}}
\newcommand*{\HLarge}{H^{(L)}}
\newcommand*{\HSmall}{H^{(S)}}
\newcommand*{\HhatLarge}{\Hhat^{(L)}}
\newcommand*{\epsLarge}{\eps_1}
\newcommand*{\epsSmall}{\eps_2}
\newcommand*{\epsCont}{\eps_{\mathrm{cont}}}
\newcommand*{\nW}{n_W}
\newcommand*{\nH}{n_H}
\newcommand*{\nC}{n_C}
\newcommand*{\Qbest}{Q_{\mathrm{best}}}
\newcommand*{\Imed}{I_{\mathrm{med}}}
\DeclareMathOperator{\size}{size}
\newcommand*{\optSP}{\opt_{\mathrm{SP}}}
\DeclareMathOperator{\hsum}{hsum}
\DeclareMathOperator{\hmax}{hmax}
\DeclareMathOperator{\wsum}{wsum}
\DeclareMathOperator{\wmax}{wmax}
\DeclareMathOperator{\fopt}{fopt}
\DeclareMathOperator{\fcopt}{fcopt}
\DeclareMathOperator{\level}{level}
\DeclareMathOperator{\lingroupWide}{\mathtt{lingroupWide}}
\DeclareMathOperator{\lingroupTall}{\mathtt{lingroupTall}}
\DeclareMathOperator{\thinGPack}{\mathtt{skewed4Pack}}
\DeclareMathOperator{\greedyPack}{\mathtt{greedyPack}}
\DeclareMathOperator{\thinCPack}{\mathtt{skewedCPack}}
\DeclareMathOperator{\iterPackings}{\mathtt{iterPackings}}
\DeclareMathOperator{\round}{\mathtt{round}}
\DeclareMathOperator{\greedyCPack}{\mathtt{greedyCPack}}
\DeclareMathOperator{\FP}{FP}
\newcommand*{\ropenInterval}[1]{[#1)}
\definecolor{myBlue}{HTML}{00b0f0}
\definecolor{myGreen}{HTML}{92d050}
\definecolor{myLightBlue}{HTML}{ccf1ff}
\definecolor{myLightGreen}{HTML}{e6f5d6}
\definecolor{shadedLightBlue}{HTML}{b4d9e4}
\definecolor{shadedLightGreen}{HTML}{ceddc0}
\title{Tight Approximation Algorithms for\\Geometric Bin Packing with Skewed Items}
\date{}
\author{
Arindam Khan\thanks{Indian Institute of Science, Bengaluru, India. {\tt arindamkhan@iisc.ac.in}.}
\and
Eklavya Sharma\thanks{Indian Institute of Science, Bengaluru, India. {\tt eklavyas@iisc.ac.in}.}
}
\begin{document}

\maketitle

\begin{abstract}
\noindent%
In the \emph{Two-dimensional Bin Packing (2BP)} problem,
we are given a set of rectangles of height and width at most one
and our goal is to find an axis-aligned nonoverlapping packing
of these rectangles into the minimum number of unit square bins.
The problem admits no APTAS and the current best approximation ratio
is $1.406$ by Bansal and Khan [SODA'14].
A well-studied variant of the problem is \emph{Guillotine Two-dimensional Bin Packing (G2BP)},
where all rectangles must be packed in such a way that
every rectangle in the packing can be obtained by
recursively applying a sequence of end-to-end axis-parallel cuts,
also called \emph{guillotine cuts}.
Bansal, Lodi, and Sviridenko [FOCS'05] obtained an APTAS for this problem.
Let $\lambda$ be the smallest constant such that for every set $I$ of items,
the number of bins in the optimal solution to G2BP for $I$
is upper bounded by $\lambda\opt(I) + c$,
where $\opt(I)$ is the number of bins in the optimal solution to 2BP for $I$ and $c$ is a constant.
It is known that $4/3 \le \lambda \le 1.692$.
Bansal and Khan [SODA'14] conjectured that $\lambda = 4/3$.
The conjecture, if true, will imply a $(4/3+\varepsilon)$-approximation algorithm for 2BP.
According to convention, for a given constant $\delta>0$,
a rectangle is \emph{large} if both its height and width are at least $\delta$,
and otherwise it is called \emph{skewed}.
We make progress towards the conjecture by showing $\lambda = 4/3$ for \emph{skewed instance},
i.e., when all input rectangles are skewed.
Even for this case, the previous best upper bound on $\lambda$ was roughly 1.692.
We also give an APTAS for 2BP for skewed instance,
though general 2BP does not admit an APTAS.

\end{abstract}

\section{Introduction}

Two-dimensional Bin Packing (2BP)  is a well-studied problem in combinatorial optimization.
It finds numerous applications in logistics, databases, and cutting stock. 
In 2BP, we are given a set of $n$ rectangular items and square bins of side length 1.
The $i\Th$ item is characterized by its width $w(i) \in (0,1]$ and height $h(i) \in (0,1]$.
Our goal is to find an axis-aligned nonoverlapping packing of
these items into the minimum number of square bins of side length 1.
There are  two well-studied variants: (i) where the items cannot be rotated, and
(ii) they can be rotated by 90 degrees.

As is conventional in bin packing, we focus on asymptotic approximation algorithm.
For any optimization problem, the asymptotic approximation ratio (AAR)
of algorithm $\Acal$ is defined as $\lim_{m \to \infty} \sup_{I: \opt(I) = m} ({\Acal(I)}/{\opt(I)})$, 
where $\opt(I)$ is the optimal objective value 
and $\Acal(I)$ is the objective value of the solution output by algorithm $\Acal$, respectively, on input $I$.
Intuitively,  AAR captures the algorithm's behavior
when $\opt(I)$ is large.
We call a bin packing algorithm $\alpha$-asymptotic-approximate iff its AAR is at most $\alpha$.
An Asymptotic Polynomial-Time Approximation Scheme (APTAS) is an algorithm
that accepts a parameter $\eps$ and has AAR of $(1+\eps)$.

2BP is a generalization of classical 1-D bin packing problem \cite{HobergR17, bp-aptas}. 
However, unlike 1-D bin packing, 2BP does not admit an APTAS unless P=NP \cite{bansal2006bin}.
In 1982, Chung, Garey, and Johnson~\cite{chung1982packing} gave an approximation algorithm
with AAR 2.125 for 2BP. Caprara~\cite{caprara2008} obtained
a $T_{\infty}(\approx 1.691)$-asymptotic-approximation algorithm.
Bansal, Caprara, and Sviridenko~\cite{rna} introduced the Round and Approx framework
to obtain an AAR of $1+\ln(T_{\infty})$ ($\approx 1.525$).
Then Jansen and Praedel~\cite{JansenP2013} obtained an AAR of 1.5.
The present best AAR is $1+\ln(1.5)$ ($\approx 1.405$),
due to Bansal and Khan~\cite{bansal2014binpacking},
and works for both the cases with and without rotations.
The best lower bounds on the AAR for 2BP are
1 + 1/3792 and 1 + 1/2196 \cite{chlebik2009hardness},
for the versions with and without rotations, respectively.

In the context of geometric packing, guillotine cuts are well-studied  and heavily used in practice \cite{sweeney1992cutting}. 
The notions of {\em guillotine cuts} and {\em $k$-stage packing} were introduced 
by Gilmore and Gomory in their seminal paper  \cite{gilmore1965multistage} on cutting stock problem. 
In $k$-stage packing each stage consists of either vertical or horizontal (but not both) axis-parallel end-to-end cuts, also called guillotine cuts. 
In each stage, each of the rectangular regions obtained on the previous stage is considered separately
and can be cut again by using guillotine cuts. In $k$-stage packing,
the minimum number of cuts to obtain each rectangle from the initial packing is at most $k$, plus an
additional cut to trim (i.e., separate the rectangles themselves from waste area).
Note that in the cutting process we change the orientation (vertical or horizontal) of the cuts $k-1$ times. 
2-stage packing, also called {\em shelf packing}, has been studied extensively. 
In {\em guillotine packing}, the packing of items in each bin should be \emph{guillotinable},
i.e., items have to be packed in alternate horizontal and vertical stages
but there is no limit on the number of stages that can be used.
See \cref{sec:guill-examples} for examples.
Caprara et al.~\cite{caprara2005fast} gave an APTAS for 2-stage 2BP.
Bansal et al.~\cite{bansal2005tale} showed an APTAS for guillotine 2BP.

The presence of an APTAS for guillotine 2BP raises an important question:
can the optimal solution to guillotine 2BP be used as a good approximate solution to 2BP?
Formally, let $\opt(I)$ and $\opt_g(I)$ be the minimum number of bins and the
minimum number of guillotinable bins, respectively, needed to pack items $I$.
Let $\lambda$ be the smallest constant such that for some constant $c$
and for every set $I$ of items, we get $\opt_g(I) \le \lambda\opt(I) + c$.
Then $\lambda$ is called the Asymptotic Price of Guillotinability (APoG).
It is easy to show that $\APoG \ge 4/3$\footnote{Consider
a set $I$ of items containing $2m$ rectangles of width 0.6 and height 0.4 and
$2m$ rectangles of width 0.4 and height 0.6.
Then $\opt(I) = m$ and $\opt_g(I) = \ceil{4m/3}$.}.
Bansal and Khan~\cite{bansal2014binpacking} conjectured that $\APoG = 4/3$.
If true, this would imply a $(4/3+\eps)$-asymptotic-approximation algorithm for 2BP.
However, the present upper bound on APoG is only $T_\infty$ ($\approx1.691$),
due to Caprara's HDH algorithm~\cite{caprara2008} for 2BP, which produces a 2-stage packing.

APTASes are known for some special cases for 2BP,
such as when all items are squares~\cite{bansal2006bin} or
when all rectangles are small in both dimensions
\cite{coffman1980performance} (see \cref{thm:nfdh-small-2} in \cref{sec:nfdh}).
Another important class is {\em skewed} rectangles.
We say that a rectangle is $\delta$-large if,
for some constant $\delta>0$, its width and height are more than $\delta$;
otherwise, the rectangle is $\delta$-skewed.
We just say that a rectangle is large or skewed when $\delta$ is clear from the context.
An instance of 2BP is skewed if all the rectangles in the input are skewed.
Skewed instances are important in geometric packing (see Section \ref{subs:prior}).
This special case is practically relevant~\cite{galvez2020tight}:
e.g., in scheduling, it captures scenarios where
no job can consume a significant amount of a shared resource
(energy, memory space, etc.) for a significant amount of time.
Even for skewed instance for 2BP, the best known AAR is 1.406 \cite{bansal2014binpacking}.
Also, for skewed instance, the best known upper bound on APoG is $T_\infty \approx 1.691$.

\subsection{Related Works}
\label{subs:prior}
Multidimensional packing and covering problems are fundamental in combinatorial optimization \cite{CKPT17}. 
Vector packing (VP) is another variant of bin packing, where the input is a set of vectors in $[0, 1]^d$ and the
goal is to partition the vectors into the minimum number of parts (bins) such that in each part, the sum of 
vectors is at most 1 in every coordinate. 
The present best approximation algorithm attains an AAR of $(0.807+\ln(d+1))$ \cite{bansal2016improved}
and there is a matching $\Omega(\ln d)$-hardness \cite{sandeep2021optimal}.  Generalized multidimensional 
packing \cite{aco-gvbp, aco-gvks} generalizes both geometric and vector packing. 

In two-dimensional strip packing (2SP) \cite{coffman1980performance, steinberg1997strip}, we are given a 
set of rectangles and a  bounded width strip. The goal is to obtain an axis-aligned nonoverlapping packing 
of all rectangles such that the height of the packing is minimized.
The best-known approximation ratio for SP is $5/3+\eps$ \cite{harren20115} and it is NP-hard to obtain 
better than 3/2-approximation. 
However, there exist APTASes for the problem, for both the cases
with and without rotations \cite{kenyon1996strip, jansen2005strip}.
In two-dimensional knapsack (2GK) \cite{jansen2004rectangle}, the rectangles have associated profits and 
our goal is to pack the maximum profit subset into a unit square knapsack.
The present best polynomial-time (resp.~pseudopolynomial-time) approximation ratio
for 2GK is 1.809~\cite{galvez2017approximating} (resp.\ 4/3~\cite{GalSocg21}).
These geometric packing problem are studied for $d$-dimensions ($d\ge 2$) \cite{eku-hdhk} as well.

Both 2SP and 2GK are also well-studied under guillotine packing. 
Seiden and Woeginger~\cite{seiden2005two} gave an APTAS for guillotine 2SP.
Khan et al.~\cite{KhanSocg21} have recently given a pseudopolynomial-time
approximation scheme for guillotine 2GK.
Recently, guillotine cuts \cite{pach2000cutting} have received attention due to
their connection with the maximum independent set of rectangles (MISR) problem~\cite{AdamaszekHW19}.
In MISR, we are given a set of possibly overlapping rectangles and the goal is to find the maximum cardinality set of rectangles so that there is
no pairwise overlap. It was noted in  \cite{khan2020guillotine, abed2015guillotine}  that for any set of $n$ non-overlapping axis-parallel rectangles,  if there is a guillotine cutting sequence 
separating $\alpha n$ of them, then it implies a $1/\alpha$-approximation for MISR. 

Skewed instance is an important special case in these problems.
In some problems, such as MISR and 2GK, if all items are $\delta$-large then we can solve them exactly in polynomial time.
So, the inherent difficulty of these problems lies in skewed instances.
For VP, hard instances are again skewed, e.g.,
Bansal, Eli\'a\v{s} and Khan~\cite{bansal2016improved} showed that
hard instances for 2-D VP (for a class of algorithms called {\em rounding based algorithms})
are skewed instances, where one dimension is $1-\eps$ and the other dimension is $\eps$.  
Galvez el al.~\cite{galvez2020tight} recently studied strip packing when all items are skewed.
For skewed instances, they showed $(3/2-\eps)$ hardness of approximation and a matching $(3/2+\eps)$-approximation algorithm. 
For 2GK, when the height of each item is at most $\eps^3$,
a $(1-72\eps)$-approximation algorithm is known~\cite{fishkin2005efficient}.

\subsection{Our Contributions}

We study 2BP for the special case of $\delta$-\thin{} rectangles,
where $\delta \in (0, 1/2]$ is a constant.

First, we make progress towards the conjecture \cite{bansal2014binpacking} that $\APoG = 4/3$.
Even for \thin{} rectangles, we only knew $4/3 \le \APoG \le T_{\infty}(\approx 1.691)$.
We resolve the conjecture for \thin{} rectangles,  by giving lower and upper bounds of roughly $4/3$ 
when $\delta$ is a small constant.

Specifically, we give an algorithm for 2BP, called $\thinGPack_{\eps}$,
that takes a parameter $\eps \in (0, 1)$ as input.
For a set $I$ of $\delta$-\thin{} rectangles, we show that when
$\delta$ and $\eps$ are close to 0, $\thinGPack_{\eps}(I)$ outputs a
4-stage packing of $I$ into roughly $4\opt(I)/3 + O(1)$ bins.

\begin{restatable}{theorem}{rthmThinGPack}
\label{thm:thin-gpack}
Let $I$ be a set of $\delta$-skewed items, where $\delta \in (0, 1/2]$.
Then $\thinGPack_{\eps}(I)$ outputs a 4-stage packing of $I$
in time $O((1/\eps)^{O(1/\eps)} + n\log n)$.
Furthermore, the number of bins used is at most 
$({4}/{3})(1+8\delta)(1+7\eps)\opt(I) + ({8}/{\eps^2}) + 30$.
\end{restatable}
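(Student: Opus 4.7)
The plan is to realize $\thinGPack_{\eps}$ as a modular composition of primitives developed earlier in the paper, then to add up their guarantees. The overall scheme has four phases.

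First, I would invoke $\remMed$ to pick a threshold $\mu \in (0, \delta]$ so that the items split cleanly into ``wide'' pieces (width $\ge \mu$, height $< \delta$), ``tall'' pieces (width $< \delta$, height $\ge \mu$), and possibly ``small'' pieces (both dimensions $< \mu$). This choice makes the removed medium items few enough to fit into $O(1/\eps^2)$ extra bins, absorbing the $(1+O(\eps))$ factor, and produces structurally clean wide/tall/small classes to work with.

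Second, I would apply $\lingroupWide$ and $\lingroupTall$ to round the wide and tall items into only $O(1/\eps)$ distinct types at an additive cost of $O(1/\eps^2)$ bins, then enumerate all relevant 4-stage packing templates on the rounded types in time $(1/\eps)^{O(1/\eps)}$ and keep the best one. Any small items left over would be fed greedily (e.g., via NFDH as in \cref{thm:nfdh-small-2}) into the slack of the chosen 4-stage packing and then into fresh bins; because items are $\delta$-skewed, this greedy step only loses a $(1+O(\delta))$ factor.

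Third, and this is the main obstacle, I need to show that at least one of the enumerated 4-stage templates packs the wide and tall items into at most $(4/3)(1+O(\delta+\eps))\opt(I) + O(1)$ bins. The strategy is to start from an (unknown) optimum 2BP packing of $I$, reorganize each bin so that wide items sit in horizontal containers and tall items in vertical containers (using the skewness crucially: a wide item and a tall item cannot share a container), and then combine bins in triples to produce four 4-stage bins, matching the lower-bound pairing argument of the $4/3$-instance in the footnote but used in reverse. After rounding the container dimensions to an $O(1/\eps)$-size set, one such configuration must be present in the enumeration, so the minimum-bin template among them inherits the $(4/3)(1+O(\delta+\eps))$ bound.

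Finally, I would multiply out and add up the multiplicative and additive losses from the three phases---linear grouping, small-item NFDH, and the $4/3$ structural step---verify that they telescope to $(4/3)(1+8\delta)(1+7\eps)\opt(I) + 8/\eps^2 + 30$, and note that the running time is dominated by the initial sort ($O(n \log n)$) and the template enumeration ($(1/\eps)^{O(1/\eps)}$). The 4-stage property is preserved throughout because each primitive either outputs wide items in horizontal shelves or tall items in vertical shelves, and the greedy small-item step uses at most two further levels of guillotine cuts inside leftover rectangles.
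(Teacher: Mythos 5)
Your Phase~3 is where the argument breaks down, and it is also where your route diverges most from the paper's. You propose to start from an (unknown) optimal 2BP packing, reorganize each bin so that wide items lie in horizontal containers and tall items in vertical ones, and then recombine bins ``in triples'' to get the $4/3$ factor, invoking the footnote's lower-bound instance ``in reverse.'' None of these steps is substantiated: in a general optimal packing of skewed items, wide and tall items can interleave in a non-guillotinable way, and there is no obvious way to repack a single optimal bin into containers without blowing up the count --- indeed, if each bin could be so repacked with negligible loss, the APoG would be close to 1, not $4/3$. The ``combine triples into four 4-stage bins'' claim is exactly the hard structural statement you would need to prove, and the footnote's lower bound gives no converse. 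The paper sidesteps this entirely: it never reasons about the optimal packing's structure. Instead, after linear grouping it builds shelves via a configuration LP (\cref{thm:shelves}), runs a concrete greedy shelf-packer $\greedyPack$, and proves (\cref{thm:greedy-pack-bins}) that $\greedyPack$ uses at most $\max\bigl(\lceil \hsum(\Wtild)\rceil, \lceil\wsum(\Htild)\rceil, \tfrac{4}{3}a(\Wtild\cup\Htild)+\tfrac{8}{3}\bigr)$ bins. Each of these three quantities is an (almost) immediate lower bound on $\fopt(\Ihat)$, and hence on $(1+\eps)\opt(I) + O(1)$, so the $4/3$ bound drops out with no need to transform an optimal solution. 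The $4/3$ constant itself emerges from the telescoping area argument in \cref{thm:area-bound}, which shows that full $\greedyPack$ bins have average occupied area at least $3/4 - o(1)$ --- quite unlike anything in your sketch.

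Two smaller mismatches. First, $\thinGPack$ does not call $\remMedHyp$; medium-item removal is used only in $\thinCPack$. In $\thinGPack$, items are classified simply by $h(i)\le\delta_H$ (wide) versus $h(i)>\delta_H$ (tall, hence $w(i)\le\delta_W$ by skewness), and the ``small in both dimensions'' case never arises. Introducing a medium-removal step and a separate small-item class changes the algorithm, and your additive $O(1/\eps^2)$ loss bookkeeping would then need to be reconciled against a different partition. Second, you propose to ``enumerate all relevant 4-stage packing templates'' and pick the best; the paper does not enumerate packings of whole bins. It solves a single fractional-strip-packing LP (whose support is bounded by the Rank Lemma) to produce a small number of shelves, and then packs those shelves with $\greedyPack$. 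The $(1/\eps)^{O(1/\eps)}$ in the running time comes from the number of configurations in that LP, not from a brute-force search over bin templates.
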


A tighter analysis shows that
when $\delta \le 1/16$ and $\eps \le 10^{-4}$,
then $\thinGPack$ has AAR $(76/45)(1+7\eps) < T_{\infty}$,
which improves upon the best-known bound on APoG for the general case.

The lower bound of $4/3$ on APoG can be extended to \thin{} items.
We formally prove this in \cref{sec:apog-lb}.
Hence, our bounds on APoG are tight for \thin{} items.
Our result indicates that to improve the bounds for APoG in the general case,
we should focus on $\delta$-large items.

Our bounds on APoG also hold when items can be rotated.
See \cref{sec:guill-rot} for details.

Our other main result is an APTAS for 2BP for \thin{} items. 
Formally, we give an algorithm for 2BP, called $\thinCPack$,
and we show that for every constant $\eps \in (0, 1)$,
there exists a constant $\delta \in (0, \eps)$ such that the algorithm has an AAR of $1+\eps$
when all items in the input are $\delta$-\thin{} rectangles.
$\thinCPack$ can also be extended to the case where items can be rotated. %

The best-known AAR for 2BP is $1 + \ln(1.5) + \eps$.
Our result indicates that to improve upon algorithms for 2BP,
one should focus on $\delta$-large items.

In \cref{sec:guill-thin}, we describe the $\thinGPack$ algorithm
and prove \cref{thm:thin-gpack}.
In \cref{sec:thin-bp}, we describe the $\thinCPack$ algorithm
and prove that it has an AAR of $1+\eps$.

\section{Preliminaries}
\label{sec:preliminaries}

Let $[n] \defeq \{1, 2, \ldots, n\}$, for $n \in \mathbb{N}$.
For a rectangle $i$, its area $a(i) \defeq w(i)h(i)$.
For a set $I$ of rectangles, let $a(I) \defeq \sum_{i \in I} a(i)$.
An \emph{axis-aligned packing} of an item $i$ in a bin
is specified by a pair $(x(i), y(i))$, where $x(i), y(i) \in [0,1]$,
so that $i$ is placed in the region
$[x(i), x(i)+w(i)] \times [y(i), y(i)+h(i)]$.
A packing of rectangles in a bin is called \emph{nonoverlapping} iff for any two
distinct items $i$ and $j$, the rectangles
$(x(i), x(i)+w(i)) \times (y(i), y(i)+h(i))$ and
$(x(j), x(j)+w(j)) \times (y(j), y(j)+h(j))$ are disjoint.
Equivalently, items may only intersect at their boundaries.\\

\noindent \textbf{Next-Fit Decreasing Height (NFDH):}
The NFDH algorithm~\cite{coffman1980performance}
is a simple algorithm for 2SP and 2BP. We will use the following results on NFDH.
We defer the proofs to \cref{sec:nfdh}.

\begin{restatable}{lemma}{rthmNfdhSmall}
\label{thm:nfdh-small}
Let $I$ be a set of items where each item $i$ has $w(i) \le \delta_W$
and $h(i) \le \delta_H$. Let there be a bin of width $W$ and height $H$.
If $a(I) \le (W - \delta_W)(H - \delta_H)$, then NFDH can pack $I$ into the bin.
\end{restatable}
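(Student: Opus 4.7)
The plan is to proceed by contradiction: assume NFDH fails to pack $I$ into the bin and derive $a(I) > (W-\delta_W)(H-\delta_H)$. Recall that NFDH sorts items in non-increasing order of height and packs them left-to-right onto a sequence of horizontal shelves; the current item is placed flush-left on the current shelf if it fits, otherwise a new shelf is opened directly above. If NFDH fails, let $i^*$ be the first item it cannot place, with $w^* := w(i^*) \le \delta_W$ and $h^* := h(i^*) \le \delta_H$, and suppose shelves $1,\dots,s$ have already been created at that moment, with heights $H_1 \ge H_2 \ge \dots \ge H_s$ (each equal to the height of the first, tallest, item of its shelf) and occupied widths $W_1,\dots,W_s$.

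First I would extract two geometric constraints from the failure at $i^*$. Since $i^*$ fits neither horizontally on shelf $s$ nor vertically as a new shelf on top, $W_s + w^* > W$ and $H_1 + \dots + H_s + h^* > H$; together with $w^* \le \delta_W$ and $h^* \le \delta_H$, this gives $W_s > W - \delta_W$ and $\sum_{k=1}^{s} H_k > H - \delta_H$. The same reasoning applied at the moment shelf $k+1$ was opened yields $W_k > W - \delta_W$ for every $k < s$ as well.

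Next I would lower-bound the packed area using the sorting. Because items are processed in non-increasing height order, every item on shelf $k$ has height at least $H_{k+1}$ (with the convention $H_{s+1} := h^*$), so the area on shelf $k$ is at least $W_k \cdot H_{k+1}$. Summing over shelves,
\[
a(I) \;\ge\; \sum_{k=1}^{s} W_k\, H_{k+1}
\;>\; (W - \delta_W)\bigl(H_2 + H_3 + \dots + H_s + h^*\bigr).
\]

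The subtle point is that the sum on the right is missing the leading term $H_1$, so this bound is not immediately $(W-\delta_W)(H-\delta_H)$. This is exactly where the skewed hypothesis pays off: since every item has height at most $\delta_H$, so does the tallest one, i.e.\ $H_1 \le \delta_H$, whence
\[
H_2 + \dots + H_s + h^* \;=\; \Bigl(\sum_{k=1}^{s} H_k\Bigr) + h^* - H_1 \;>\; H - H_1 \;\ge\; H - \delta_H.
\]
Combining the two displays gives $a(I) > (W - \delta_W)(H - \delta_H)$, contradicting the hypothesis. Degenerate situations ($W \le \delta_W$, $H \le \delta_H$, or $I = \emptyset$) are trivially handled separately: either the hypothesis is vacuous or there is nothing to pack.
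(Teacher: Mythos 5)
Your proof is correct and uses essentially the same argument as the paper: each occupied shelf $k$ carries area at least $(W-\delta_W)H_{k+1}$, the sum telescopes to $(W-\delta_W)(\sum_k H_k - H_1 + h^*)$, and the skewed hypothesis supplies $H_1 \le \delta_H$ to finish. The paper packages the same calculation as a reusable strip-packing height bound (\cref{thm:nfdh-strip-small}) and then plugs in $a(I) \le (W-\delta_W)(H-\delta_H)$, whereas you inline it as a contradiction; this is a difference in presentation rather than in substance.
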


\begin{lemma}
\label{thm:nfdh-wide-tall}
\label{thm:nfdh-tall}
\label{thm:nfdh-wide}
Let $I$ be a set of rectangular items. Then NFDH uses less than
$(2a(I)+1)/(1-\delta)$ bins to pack $I$ when $h(i) \le \delta$ for each item $i$
and $2a(I)/(1-\delta) + 3$ bins when $w(i) \le \delta$ for each item $i$.
\end{lemma}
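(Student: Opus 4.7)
The plan is to analyze NFDH in two stages: first as a strip-packing procedure that lays items out into shelves $s_1,\ldots,s_m$ (in the order created, with heights $H_1 \ge H_2 \ge \cdots \ge H_m$ since items are processed by decreasing height), and then as a one-dimensional next-fit procedure that stacks those shelves into bins. The whole argument will rest on two structural observations about shelf transitions. The leading item of $s_{j+1}$ failed to fit width-wise on $s_j$, so $\sum_{i\in s_j} w(i) > 1 - w_{j+1,1}$, where $w_{j+1,1}$ is the width of that leading item; and because items are height-sorted globally, every item on $s_j$ has height at least $H_{j+1}$. Multiplying these gives the master inequality $a(s_j) > H_{j+1}(1 - w_{j+1,1})$ for $j<m$.

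For the short-items case ($h(i)\le\delta$), I would add the area $H_{j+1}w_{j+1,1}$ of the leading item of $s_{j+1}$ to both sides, obtaining $a(s_j)+H_{j+1}w_{j+1,1} > H_{j+1}$; summing over $j$ and observing that each item's area appears at most twice on the left (once in some $a(s_j)$, once as the leading item of some shelf) yields $\sum_{j\ge 2} H_j < 2a(I)$, so the total shelf height is less than $2a(I)+\delta$. For the bin stage, each of the first $N-1$ bins is closed because the next shelf of height $\le\delta$ did not fit, forcing the occupied height $T_k$ in bin $k$ to exceed $1-\delta$. Adding up, $(N-1)(1-\delta)<2a(I)+\delta$, which rearranges to $N<(2a(I)+1)/(1-\delta)$.

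For the thin-items case ($w(i)\le\delta$), the width bound $w_{j+1,1}\le\delta$ plugs directly into the master inequality to give $a(s_j)>H_{j+1}(1-\delta)$; summing then produces $\sum_{j\ge 2}H_j<a(I)/(1-\delta)$, and adding $H_1\le 1$ gives $\sum_j H_j<a(I)/(1-\delta)+1$. The small-shelf-height trick from the previous case fails because shelves can now be tall, so I would instead exploit the global monotonicity $H_1\ge H_2\ge\cdots$: the first shelf of bin $k+1$ has height at most the first shelf of bin $k$, which in turn is at most $T_k$; combined with the bin-closing inequality $T_k + (\text{first shelf of bin }k{+}1) > 1$, this forces $2T_k>1$ and hence $T_k>1/2$ for every $k<N$. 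Therefore $(N-1)/2<a(I)/(1-\delta)+1$, giving $N<2a(I)/(1-\delta)+3$.

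The main subtlety is the second case: without the global monotonicity of shelf heights \emph{across} bins, a tall shelf at the top of bin $k+1$ could leave bin $k$ almost empty and wreck the bookkeeping. So the critical observation is that NFDH's decreasing-height ordering persists across bin boundaries, not merely within a bin. Everything else is a routine telescoping argument on shelves, applied once in the width-area direction and once in the height-bin direction.
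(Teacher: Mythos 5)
Your proof is correct and establishes both bounds. The paper proves this lemma differently: it factors the argument through two previously stated lemmas — one bounding the total shelf height produced by NFDH as a strip-packer (its Lemma 19, from Coffman et al., for the short-items case, and its Lemma 20 for the thin-items case), and one bounding the number of bins used by Next-Fit to stack the shelves (its Lemmas 17 and 18). You instead derive everything from a single master inequality $a(s_j) > H_{j+1}(1 - w_{j+1,1})$, specializing it twice; in effect you are re-deriving those supporting lemmas inline. The content is equivalent, but your presentation is self-contained where the paper's is modular.

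One remark on the bin-stacking step in the thin-items case: you go to some trouble to establish $T_k > 1/2$ for every non-final bin $k$ via the global monotonicity of shelf heights across bins. This is correct, but it is stronger than needed and more specific to NFDH than necessary. The standard Next-Fit pairing argument — that $T_k + T_{k+1} > 1$ for consecutive bins, because the shelf opening bin $k+1$ did not fit in bin $k$ and its height is at most $T_{k+1}$ — already gives $N - 1 < 2\sum_k T_k$, which is the bound you need, and it requires no ordering assumption at all. The paper invokes exactly this general Next-Fit bound. So your "main subtlety" is actually avoidable; still, the route you chose is sound and yields the same $N < 2H + 1$ conclusion.
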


If we swap the coordinate axes in NFDH, we get the
Next-Fit Decreasing Width (NFDW) algorithm.
\Analogue{}s of the above results hold for NFDW.\\

\noindent \textbf{Slicing Items:}
We will consider variants of 2BP where some items can be \emph{sliced}.
Formally, slicing a rectangular item $i$ using a horizontal cut is the operation of
replacing $i$ by two items $i_1$ and $i_2$ such that
$w(i) = w(i_1) = w(i_2)$ and $h(i) = h(i_1) + h(i_2)$.
Slicing using vertical cut is defined analogously.
Allowing some items to be sliced may reduce the number of bins required to pack.
See \cref{fig:frac-pack} for an example.

\begin{figure}[htb]
\centering
\begin{tikzpicture}[
myarrow/.style = {->,>={Stealth},semithick},
mybrace/.style = {decoration={amplitude=3pt,brace,mirror,raise=1pt},semithick,decorate},
every node/.style = {scale=0.8},
scale=0.8
]
\draw (0,0) rectangle +(3,3);
\draw[fill={black!30}] (0,0) rectangle +(1.2,3);
\draw[fill={black!10}] (3,0) rectangle +(-1.5,1.2);
\draw[fill={black!10}] (3,1.2) rectangle +(-1.5,1.2);
\draw[mybrace] (0,0) -- node[below=1pt] {0.4} +(1.2,0);
\draw[mybrace] (1.5,0) -- node[below=1pt] {0.5} +(1.5,0);
\draw[mybrace] (3,0) -- node[right=2pt] {0.4} +(0,1.2);
\draw[mybrace] (3,1.2) -- node[right=2pt] {0.4} +(0,1.2);

\draw[fill={black!30}] (-8,0) rectangle +(1.2,3);
\path (-8,0) -- node[below=0pt] {0.4} +(1.2,0);
\path (-8,0) -- node[left=0pt] {1} +(0,3);
\node at (-6.2,1.5) {+};
\draw[fill={black!10}] (-5,1) rectangle +(3,1.2);
\path (-5,1) -- node[left=0pt] {0.4} +(0,1.2);
\draw[dashed] (-3.5,2.5) -- (-3.5,0.5);
\node[rotate=90,transform shape] at (-3.5,0.3) {\large\ding{34}};
\draw[mybrace] (-5,1) -- node[below=1pt] {0.5} +(1.5,0);
\draw[mybrace] (-3.5,1) -- node[below=1pt] {0.5} +(1.5,0);
\draw[myarrow] (-1.5,1.5) -- (-0.5,1.5);
\end{tikzpicture}

\caption{Packing two items into a bin, where one item is sliced using a vertical cut.
If slicing were forbidden, two bins would be required.}
\label{fig:frac-pack}
\end{figure}
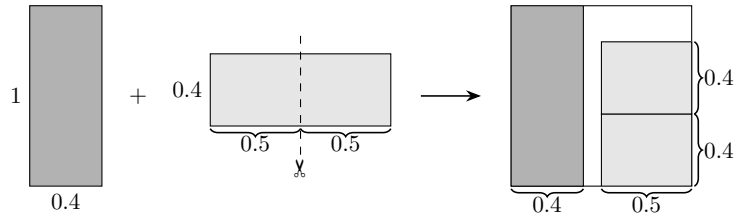

Alamdari et al.~\cite{alamdari2013smart} explored algorithms for
a variant of 2SP where items can be sliced using vertical cuts,
which has applications in smart-grid electricity allocation.
Many packing algorithms \cite{kenyon1996strip,JansenP2013,bansal2005tale}
solve the sliceable version of the problem as a subroutine.

\section{Guillotinable Packing of \Thin{} Rectangles}
\label{sec:guill-thin}

An item is called $(\delta_W, \delta_H)$-\thin{} iff its width is at most $\delta_W$
or its height is at most $\delta_H$.
In this section, we consider the problem of obtaining tight upper and lower bounds
on APoG for $(\delta_W, \delta_H)$-\thin{} items.
We will describe the $\thinGPack$ algorithm and prove \cref{thm:thin-gpack}.

\subsection{Packing With Slicing}

Before describing $\thinGPack$,
let us first look at a closely-related variant of this problem,
called the \emph{sliceable 2D bin packing problem}, denoted as S2BP.
In this problem, we are given two sets of rectangular items, $\Wtild$ and $\Htild$, where
items in $\Wtild$ have width more than $1/2$, and items in $\Htild$ have height more than $1/2$.
$\Wtild$ is called the set of wide items and $\Htild$ is called the set of tall items.
We are allowed to \emph{slice} items in $\Wtild$ using horizontal cuts
and slice items in $\Htild$ using vertical cuts, and our task is to pack
$\Wtild \cup \Htild$ into the minimum number of bins without rotating the items.
See \cref{fig:bp-vs-sbp} for an example that illustrates the difference
between 2BP and S2BP.

\begin{figure}[htb]
\begin{subfigure}{0.45\textwidth}
\centering
\begin{tikzpicture}[
witem/.style={draw,fill={black!30}},
hitem/.style={draw,fill={black!10}},
bin/.style={draw,thick},
myarrow/.style={->,>={Stealth},thick},
scale=0.7,
]
\begin{scope}
\node at (0.3, 2.0) {$\Wtild$:};
\node at (0.3, 0.6) {$\Htild$:};
\path[hitem] (1.0, 0.0) rectangle +(1.2, 1.2);
\path[hitem] (2.4, 0.0) rectangle +(1.2, 1.2);
\path[witem] (1.0, 1.4) rectangle +(1.2, 1.2);
\path[witem] (2.4, 1.4) rectangle +(1.2, 1.2);
\end{scope}
\draw[myarrow] (3.9, 1.3) -- (5.2, 1.3);
\begin{scope}[xshift={5.5cm},yshift={-0.8cm}]
\path[hitem] (0.0, 0.0) rectangle +(1.2, 1.2);
\path[hitem] (2.2, 0.0) rectangle +(1.2, 1.2);
\path[witem] (0.0, 2.2) rectangle +(1.2, 1.2);
\path[witem] (2.2, 2.2) rectangle +(1.2, 1.2);
\path[bin] (0.0, 0.0) rectangle +(2, 2);
\path[bin] (2.2, 0.0) rectangle +(2, 2);
\path[bin] (0.0, 2.2) rectangle +(2, 2);
\path[bin] (2.2, 2.2) rectangle +(2, 2);
\end{scope}
\end{tikzpicture}

\caption{Packing items into 4 bins without slicing.}
\end{subfigure}
\hfil
\begin{subfigure}{0.5\textwidth}
\centering
\begin{tikzpicture}[
witem/.style={draw,fill={black!30}},
hitem/.style={draw,fill={black!10}},
bin/.style={draw,thick},
myarrow/.style={->,>={Stealth},thick},
cutline/.style={draw={black!50!red},dashed,semithick},
scale=0.7,
]
\begin{scope}
\node at (0.3, 2.0) {$\Wtild$:};
\node at (0.3, 0.6) {$\Htild$:};
\path[hitem] (1.0, 0.0) rectangle +(1.2, 1.2);
\path[hitem] (2.4, 0.0) rectangle +(1.2, 1.2);
\path[witem] (1.0, 1.4) rectangle +(1.2, 1.2);
\path[witem] (2.4, 1.4) rectangle +(1.2, 1.2);
\path[cutline] (2.3, 2.0) -- (3.7, 2.0);
\node[xscale=-1,transform shape] at (3.9, 1.98) {\large\ding{34}};
\path[cutline] (3.0, -0.1) -- (3.0, 1.3);
\node[rotate=90,transform shape] at (3.02, -0.3) {\large\ding{34}};
\end{scope}
\draw[myarrow] (3.9, 1.3) -- (5.2, 1.3);
\begin{scope}[xshift={5.5cm},yshift={0.3cm}]
\path[witem] (0.0, 0.0) rectangle +(1.2, 1.2);
\path[hitem] (2.2, 0.0) rectangle +(1.2, 1.2);
\path[witem] (0.0, 1.2) rectangle +(1.2, 0.6);
\path[witem] (2.2, 1.2) rectangle +(1.2, 0.6);
\path[hitem] (1.2, 0.0) rectangle +(0.6, 1.2);
\path[hitem] (3.4, 0.0) rectangle +(0.6, 1.2);
\path[bin] (0.0, 0.0) rectangle +(2, 2);
\path[bin] (2.2, 0.0) rectangle +(2, 2);
\end{scope}
\end{tikzpicture}

\caption{Packing items into 2 bins by horizontally slicing an item in $\Wtild$
and vertically slicing an item in $\Htild$.}
\end{subfigure}
\caption[2BP vs.~S2BP]{Example illustrating 2BP vs.~S2BP.
There are 2 wide items ($\Wtild$) and 2 tall items ($\Htild$).
The items are squares of side length 0.6 and the bins are squares of side length 1.}
\label{fig:bp-vs-sbp}
\end{figure}
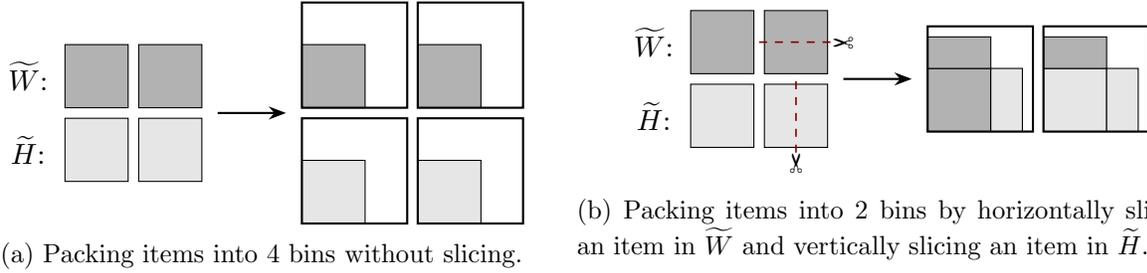

We first describe a simple $4/3$-asymptotic-approximation algorithm
for S2BP, called $\greedyPack$, that outputs a 2-stage packing.
Later, we will show how to use $\greedyPack$ to design $\thinGPack$.

We assume that the bin is a square of side length 1. Since we can slice items,
we allow items in $\Wtild$ to have height more than 1
and items in $\Htild$ to have width more than 1.

For $X \subseteq \Wtild$, $Y \subseteq \Htild$, define $\hsum(X) \defeq \sum_{i \in X} h(i)$; 
$\wsum(Y) \defeq \sum_{i \in Y} w(i)$;
 $\wmax(X) \defeq \max_{i \in X} w(i) \textrm{ if } X \neq \emptyset$, and  $0  \textrm{ if } X = \emptyset$;
$\hmax(Y) \defeq \max_{i \in Y} h(i) \textrm{ if } Y \neq \emptyset$, and  $0 \textrm{ if } Y = \emptyset$.

In the algorithm $\greedyPack(\Wtild, \Htild)$, we first sort items $\Wtild$ in decreasing order
of width and sort items $\Htild$ in decreasing order of height.
Suppose $\hsum(\Wtild) \ge \wsum(\Htild)$. Let $X$ be the largest prefix of $\Wtild$
of total height at most 1, i.e., if $\hsum(\Wtild) > 1$,
then $X$ is a prefix of $\Wtild$ such that $\hsum(X) = 1$ (slice items if needed),
and $X = \Wtild$ otherwise.
Pack $X$ into a bin such that the items touch the right edge of the bin.
Then we pack the largest possible prefix of $\Htild$
into the empty rectangular region of width $1 - \wmax(X)$ in the left side of the bin.
We call this a type-1 bin. See \cref{fig:greedy-pack:1} for an example.
If $\hsum(\Wtild) < \wsum(\Htild)$, we proceed analogously in a coordinate-swapped way,
i.e., we first pack tall items in the bin and then pack wide items in the remaining space.
Call this bin a type-2 bin.
We pack the rest of the items into bins in the same way.

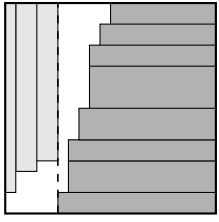
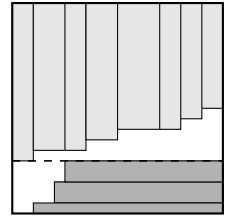
\begin{figure}[htb]
\begin{subfigure}{0.45\textwidth}
    \centering
    \tikzset{mytransform/.style={scale=0.7}}
    \tikzset{wItem/.style={draw,fill={black!30}}}
    \tikzset{hItem/.style={draw,fill={black!10}}}
    \ifcsname pGameL\endcsname\else\newlength{\pGameL}\fi
\setlength{\pGameL}{0.2cm}
\tikzset{bin/.style={draw,thick}}
\begin{tikzpicture}[mytransform]
\path[wItem] (5\pGameL, 0\pGameL) rectangle +(15\pGameL, 2\pGameL);
\path[wItem] (6\pGameL, 2\pGameL) rectangle +(14\pGameL, 3\pGameL);
\path[wItem] (6\pGameL, 5\pGameL) rectangle +(14\pGameL, 2\pGameL);
\path[wItem] (7\pGameL, 7\pGameL) rectangle +(13\pGameL, 3\pGameL);
\path[wItem] (8\pGameL, 10\pGameL) rectangle +(12\pGameL, 4\pGameL);
\path[wItem] (8\pGameL, 14\pGameL) rectangle +(12\pGameL, 2\pGameL);
\path[wItem] (9\pGameL, 16\pGameL) rectangle +(11\pGameL, 2\pGameL);
\path[wItem] (10\pGameL, 18\pGameL) rectangle +(10\pGameL, 2\pGameL);
\path[hItem] (1\pGameL, 4\pGameL) rectangle +(2\pGameL, 16\pGameL);
\path[hItem] (0\pGameL, 2\pGameL) rectangle +(1\pGameL, 18\pGameL);
\path[hItem] (3\pGameL, 5\pGameL) rectangle +(2\pGameL, 15\pGameL);
\draw[semithick,dashed] (5\pGameL, 0\pGameL) -- +(0\pGameL, 20\pGameL);
\path[bin] (0\pGameL, 0\pGameL) rectangle (20\pGameL, 20\pGameL);
\end{tikzpicture}

    \caption{A type-1 bin produced by $\greedyPack$.
Wide items are packed on the right. Tall items are packed on the left.}%
\label{fig:greedy-pack:1}
\end{subfigure}
\hfill
\begin{subfigure}{0.45\textwidth}
    \centering
    \tikzset{mytransform/.style={xscale=-0.7,yscale=0.7,rotate=90}}
    \tikzset{wItem/.style={draw,fill={black!10}}}
    \tikzset{hItem/.style={draw,fill={black!30}}}
    \ifcsname pGameL\endcsname\else\newlength{\pGameL}\fi
\setlength{\pGameL}{0.2cm}
\tikzset{bin/.style={draw,thick}}
\begin{tikzpicture}[mytransform]
\path[wItem] (5\pGameL, 0\pGameL) rectangle +(15\pGameL, 2\pGameL);
\path[wItem] (6\pGameL, 2\pGameL) rectangle +(14\pGameL, 3\pGameL);
\path[wItem] (6\pGameL, 5\pGameL) rectangle +(14\pGameL, 2\pGameL);
\path[wItem] (7\pGameL, 7\pGameL) rectangle +(13\pGameL, 3\pGameL);
\path[wItem] (8\pGameL, 10\pGameL) rectangle +(12\pGameL, 4\pGameL);
\path[wItem] (8\pGameL, 14\pGameL) rectangle +(12\pGameL, 2\pGameL);
\path[wItem] (9\pGameL, 16\pGameL) rectangle +(11\pGameL, 2\pGameL);
\path[wItem] (10\pGameL, 18\pGameL) rectangle +(10\pGameL, 2\pGameL);
\path[hItem] (1\pGameL, 4\pGameL) rectangle +(2\pGameL, 16\pGameL);
\path[hItem] (0\pGameL, 2\pGameL) rectangle +(1\pGameL, 18\pGameL);
\path[hItem] (3\pGameL, 5\pGameL) rectangle +(2\pGameL, 15\pGameL);
\draw[semithick,dashed] (5\pGameL, 0\pGameL) -- +(0\pGameL, 20\pGameL);
\path[bin] (0\pGameL, 0\pGameL) rectangle (20\pGameL, 20\pGameL);
\end{tikzpicture}

    \caption{A type-2 bin produced by $\greedyPack$.
Tall items are packed above. Wide items are packed below.}%
\label{fig:greedy-pack:2}
\end{subfigure}
\caption{Examples of type-1 and type-2 bins produced by $\greedyPack$.}
\label{fig:greedy-pack}
\end{figure}

\begin{claim}
\label{thm:greedy-pack}
$\greedyPack(\Wtild, \Htild)$ outputs a 2-stage packing of $\Wtild \cup \Htild$.
It runs in $O(m + |\Wtild|\log|\Wtild| + |\Htild|\log|\Htild|)$ time,
where $m$ is the number of bins used.
Furthermore, it slices items in $\Wtild$ by making at most $m-1$ horizontal cuts
and slices items in $\Htild$ by making at most $m-1$ vertical cuts.
\end{claim}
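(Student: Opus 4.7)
The plan is to verify in turn the three assertions of the claim: the 2-stage property, the running time, and the slicing counts. The main point is that each individual bin produced by $\greedyPack$ admits a 2-stage guillotine decomposition, and that the last bin contributes no slicing.

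For the 2-stage property I would analyze the geometry of a single bin. In a type-1 bin there are two regions: on the right, a column of width $\wmax(X)$ containing the wide items of the prefix $X$ stacked vertically (each touching the right edge, with possibly some waste to its left inside the column); on the left, a region of width $1 - \wmax(X)$ containing the prefix of tall items placed side-by-side, each touching the top of the bin. A valid 2-stage cutting sequence is then obtained by taking stage 1 to consist of the vertical cuts at $x = 1 - \wmax(X)$ and at every boundary between adjacent tall items, and stage 2 to consist, within each resulting strip, of horizontal cuts: inside the right strip these separate the stacked wide items from each other, and inside each left sub-strip a single horizontal cut separates the tall item from the empty space below it. Each resulting cell contains at most one item, with trim cuts removing any residual waste inside a cell. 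Type-2 bins are handled symmetrically after swapping the two coordinate axes.

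The running time breaks into a one-time sorting phase of cost $O(|\Wtild|\log|\Wtild| + |\Htild|\log|\Htild|)$ to order $\Wtild$ by decreasing width and $\Htild$ by decreasing height, followed by an iterative bin-construction phase in which pointers are advanced through the two sorted lists; each item is touched a constant number of times in total and each bin adds $O(1)$ overhead, for an overall bound of $O(m + |\Wtild|\log|\Wtild| + |\Htild|\log|\Htild|)$.

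For the slicing bounds, I would argue that in any single bin at most one item of $\Wtild$ is sliced horizontally (the final item of the wide prefix $X$ in a type-1 bin, or the analogous last wide item in the secondary step of a type-2 bin) and, symmetrically, at most one item of $\Htild$ is sliced vertically. The crux is that the last bin (bin $m$) contains no slicing at all: any slice made there would leave a nonempty remainder of the sliced item still to be packed, forcing an $(m+1)$-st bin and contradicting the hypothesis that $m$ is the total count. Combining these two observations yields at most $m-1$ horizontal cuts on items of $\Wtild$ and at most $m-1$ vertical cuts on items of $\Htild$. The main obstacle I expect is making the 2-stage decomposition argument clean, in particular verifying that all vertical separators required in a type-1 bin (between left and right regions, between adjacent tall items, and among the wide items) can be realized as a single unidirectional stage 1, so that the second stage is purely horizontal and trims handle intra-cell waste; the other two parts follow by direct bookkeeping once the per-bin behavior is pinned down.
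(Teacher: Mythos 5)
Your proof is correct. The paper states this claim without giving an explicit proof, and the reasoning you supply is exactly the argument the algorithm description and \cref{fig:greedy-pack} suggest: a vertical first stage (at $x = 1-\wmax(X)$ and between adjacent tall items) followed by a horizontal second stage, with trim cuts absorbing the residual waste to the left of narrower wide items; the sort-plus-linear-scan accounting for the running time; and the observation that the $m$-th bin cannot slice any item (since a sliced remainder would force an $(m+1)$-st bin), giving at most one horizontal and one vertical slice in each of the first $m-1$ bins.
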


Since items in $\Wtild$ have width more than $1/2$,
no two items can be placed side-by-side.
Hence, $\smallceil{\hsum(\Wtild)} = \opt(\Wtild) \le \opt(\Wtild \cup \Htild)$.
Similarly, $\smallceil{\wsum(\Htild)} \le \opt(\Wtild \cup \Htild)$.
So, if all bins have the same type, $\greedyPack$ uses
$\max(\smallceil{\hsum(\Wtild)}, \smallceil{\wsum(\Htild)}) = \opt(\Wtild \cup \Htild)$ bins.
We will now focus on the case where
some bins have type 1 and some have type 2.

\begin{definition}
In a type-1 bin, let $X$ and $Y$ be the wide and tall items, respectively.
The bin is called \emph{full} iff $\hsum(X) = 1$ and $\wsum(Y) = 1 - \wmax(X)$.
Define fullness for type-2 bins analogously.
\end{definition}

We first show that full bins pack items of a large total area,
and then we show that if some bins have type 1 and some bins have type 2,
then there can be at most 2 non-full bins.
This will help us get an upper-bound on the number of bins used by $\greedyPack(\Wtild, \Htild)$
in terms of $a(\Wtild \cup \Htild)$.

\begin{lemma}
\label{thm:area-bound}
Let there be $m_1$ type-1 full bins.
Let $J_1$ be the items in them.
Then $m_1 \le 4a(J_1)/3 + 1/3$.
\end{lemma}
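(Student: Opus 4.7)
The plan is to lower-bound $a(J_1) = \sum_{b=1}^{m_1} [a(X_b) + a(Y_b)]$ bin-by-bin and then invert. For each type-1 full bin $b$, let $v_b := \wmax(X_b)$. Since $\greedyPack$ traverses wide items in decreasing width order, $v_1 \ge v_2 \ge \cdots \ge v_{m_1}$; moreover, every wide item in bin $b$ has width at least $v_{b+1}$, with the convention $v_{m_1+1} := 1/2$ (a valid lower bound since all wide widths exceed $1/2$). This comparison remains correct when a single wide item is sliced between two consecutive bins, because both bins then receive slices of that same item.

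Given this, I bound the two shelves separately. For the wide shelf of a full bin, $\hsum(X_b) = 1$, so
\[ a(X_b) = \sum_{j \in X_b} w(j) h(j) \;\ge\; v_{b+1} \cdot \hsum(X_b) \;=\; v_{b+1}. \]
For the tall shelf, the fullness condition gives $\wsum(Y_b) = 1 - v_b$, and since each tall item has height greater than $1/2$, we get $a(Y_b) > (1 - v_b)/2$.

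Summing over $b = 1, \ldots, m_1$ and letting $S := \sum_{b=1}^{m_1} v_b$, the wide contributions telescope to $\sum_{b=1}^{m_1} v_{b+1} = S - v_1 + 1/2$, and the tall contributions sum to $(m_1 - S)/2$, giving $a(J_1) \ge m_1/2 + S/2 - v_1 + 1/2$. Combining this with $S \ge v_1 + (m_1 - 1)/2$ (using $v_b \ge 1/2$ for each $b$) and $v_1 \le 1$ yields $a(J_1) \ge (3m_1 - 1)/4$, which rearranges to the claimed inequality.

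The only delicate step is the cross-bin comparison $w^{\#}_b \ge v_{b+1}$ (the smallest width in bin $b$ is at least the largest width in bin $b+1$); this is the single structural property of $\greedyPack$ that the argument uses, and it is what makes the wide-shelf sum telescope. Once this and the two per-shelf area bounds are in place, the rest is straightforward linear arithmetic, with the slack $1/3$ arising solely from the boundary terms $v_1 \le 1$ and $v_{m_1+1} = 1/2$.
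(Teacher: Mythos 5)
Your proof is correct and follows essentially the same route as the paper's: you define the same per-bin maximum widths (your $v_b$ is the paper's $\ell_j$), invoke the same cross-bin monotonicity from the decreasing-width sort, lower-bound the two shelves using fullness and $v_{m_1+1}=1/2$, and arrive at the same bound $a(J_1)\ge(3m_1-1)/4$. The only cosmetic difference is that the paper telescopes term-by-term via the substitution $\ell_{j+1}\ge \ell_{j+1}/2 + 1/4$, whereas you sum everything into $S$ first and then bound $S$ from below; these are algebraically equivalent.
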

\begin{proof}
In the $j\Th$ full bin of type 1, let $X_j$ be the items from $\Wtild$
and $Y_j$ be the items from $\Htild$. Let
$\ell_j \defeq \wmax(X_j)  \textrm{ if } j \le m_1$
 and $\ell_{m_1+1} \defeq 1/2$. 
Since all items have their larger dimension more than $1/2$,
$\ell_j \ge 1/2$ and $\hmax(Y_j) > 1/2$, for any $j \in [m_1]$.

$a(X_j) \ge \ell_{j+1}$, since $X_j$ has height 1 and width at least $\ell_{j+1}$.
$a(Y_j) \ge (1-\ell_j)/2$, since $Y_j$ has width $1 - \ell_j$ and height more than $1/2$.
Therefore,
$a(J_1) = \sum_{j=1}^{m_1} (a(X_j) + a(Y_j))
\ge \sum_{j=1}^{m_1} (\ell_{j+1} + (1-\ell_j)/2)
\ge \sum_{j=1}^{m_1} \left(({\ell_{j+1}}/{2}) + ({1}/{4})
    + ({1}/{2}) - ({\ell_j}/{2})\right)  
= ({3m_1}/{4}) + ({1}/{4}) - ({\ell_1}/{2})
\ge ({3m_1-1}/{4})$.\\
In the above inequalities, we used that $\ell_{j+1} \ge 1/2$ 
 and $\ell_1 \le 1$.

Therefore, $m_1 \le 4a(J_1)/3 + 1/3$.
\end{proof}

An \analogue{} of \cref{thm:area-bound} can be proven for type-2 bins.
Note that \cref{thm:area-bound} implies that the average area of full bins is close to $3/4$.
It is possible for an individual full bin to have area close to 1/2,
but the number of such bins is small, due to the telescopic sum in \cref{thm:area-bound}.

Let $m$ be the number of bins used by $\greedyPack(\Wtild, \Htild)$.
After $j$ bins have been packed, let $A_j$ be the height of the remaining items in $\Wtild$
and $B_j$ be the width of the remaining items in $\Htild$.
Let $t_j$ be the type of the $j\Th$ bin (1 for type-1 bin and 2 for type-2 bin).
So $t_j = 1 \iff A_{j-1} \ge B_{j-1}$.

We first show that $|A_{j-1} - B_{j-1}| \le 1 \implies |A_j - B_j| \le 1$.
This means that once the difference between $\hsum(\Wtild)$ and $\wsum(\Htild)$ becomes at most 1,
it continues to stay at most 1.
Next, we show that $t_j \neq t_{j+1} \implies |A_{j-1} - B_{j-1}| \le 1$.
This means that if some bins have type 1 and some have type 2,
then the difference between $\hsum(\Wtild)$ and $\wsum(\Htild)$ will eventually become at most 1.
In the first non-full bin, we will use up all the wide items or the tall items.
We will show that the remaining items have total height or total width at most 1,
so we can have at most 1 more non-full bin.
Hence, there can be at most 2 non-full bins when we have both type-1 and type-2 bins.

In the $j\Th$ bin, let $a_j$ be the height of items from $\Wtild$
and $b_j$ be the width of items from $\Htild$.
Hence, for all $j \in [m]$,
$A_{j-1} = A_j + a_j$ and $B_{j-1} = B_j + b_j$.

\begin{lemma}
\label{thm:diff-capture}
$|A_{j-1} - B_{j-1}| \le 1 \implies |A_j - B_j| \le 1$.
\end{lemma}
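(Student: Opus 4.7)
My plan is a short case analysis using explicit formulas for $a_j$ and $b_j$. Since $\greedyPack$ is symmetric under swapping the roles of wide and tall items, I will treat only $t_j=1$; the case $t_j=2$ follows by an identical argument with the two coordinate axes interchanged. Under $t_j=1$ we have $A_{j-1}\ge B_{j-1}$, so the hypothesis becomes $0\le A_{j-1}-B_{j-1}\le 1$.

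Write $w^{*}\defeq\wmax(X_j)$, where $X_j\subseteq\Wtild$ is the prefix of wide items packed into bin $j$. Because every item of $\Wtild$ has width more than $1/2$, one has $1/2<w^{*}\le 1$ (the set $X_j$ is nonempty, since $t_j=1$ together with bin $j$ actually being used forces $A_{j-1}>0$). Reading off the algorithm, the height cap for wide items is $1$ and the width cap for tall items is $1-w^{*}$, so
\[
a_j \;=\; \min(1,\,A_{j-1}), \qquad b_j \;=\; \min(1-w^{*},\,B_{j-1}).
\]

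I would then split into three cases by which cap is binding. In the full-bin case ($a_j=1$, $b_j=1-w^{*}$) we get $a_j-b_j=w^{*}$, so $A_j-B_j=(A_{j-1}-B_{j-1})-w^{*}\in[-1,\,1/2]$. When the wide-item cap is not binding ($a_j<1$), we have $A_{j-1}<1$, so $A_j=0$ and $B_j\le B_{j-1}\le A_{j-1}<1$. When the tall-item cap is not binding but the wide-item cap is ($a_j=1$, $b_j<1-w^{*}$), we have $B_j=0$ and $b_j=B_{j-1}$, and from $A_{j-1}-B_{j-1}\le 1$ we get $A_{j-1}\le 1+B_{j-1}<1+(1-w^{*})<3/2$, so $A_j=A_{j-1}-1\in[0,1/2)$. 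In each of the three cases $|A_j-B_j|\le 1$.

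The argument is essentially arithmetic; the only real obstacle is cleanly separating the subcases according to which of the two caps each of $a_j,b_j$ hits. The crucial numerical ingredient is $w^{*}>1/2$, i.e., the fact that wide items have width greater than $1/2$: without it, $a_j-b_j$ in the full-bin case could reach $1$ and the invariant would fail. This is also why the constant $1$ appearing in the lemma is the right one and not something smaller.
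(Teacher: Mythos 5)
Your proof is correct, but it takes a different route from the paper's. The paper proves $a_j \ge b_j$ by a short contradiction: if $a_j < b_j$, then $a_j < 1$, so all wide items were exhausted, giving $A_j = 0$ and hence $A_{j-1} = a_j < b_j \le B_{j-1}$, contradicting $A_{j-1} \ge B_{j-1}$. Then it observes $0 \le A_{j-1}-B_{j-1} \le 1$ and $0 \le a_j - b_j \le 1$, and concludes $A_j - B_j = (A_{j-1}-B_{j-1}) - (a_j - b_j) \in [-1,1]$ in one line. You instead make $a_j = \min(1, A_{j-1})$ and $b_j = \min(1-w^*, B_{j-1})$ explicit and split into three cases according to which caps bind. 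Your route is more mechanical and gives sharper per-case information (e.g., $A_j < 1/2$ when only the tall cap slackens), while the paper's is shorter and more opaque about what actually constrains the dynamics.

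One remark in your last paragraph is off. You claim $w^* > 1/2$ is the crucial ingredient and that without it $a_j - b_j$ could reach $1$ and break the invariant. But $a_j - b_j$ reaching $1$ is harmless — it just gives $A_j - B_j = (A_{j-1}-B_{j-1}) - 1 \in [-1,0]$, still within range — and in fact larger $w^*$ makes $a_j - b_j$ larger, not smaller, in your full-bin case. Redoing your three cases with $w^*$ allowed to be anything in $(0,1]$ still yields $|A_j - B_j| \le 1$ in each case; the paper's proof does not invoke $w^* > 1/2$ at all, only $a_j, b_j \le 1$. What $w^* > 1/2$ does buy you is the stronger intermediate bound $A_j < 1/2$ in your third case, but that strengthening is not needed for the lemma as stated. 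The proof itself is unaffected by this mischaracterization.
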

\begin{proof}
\WLoG{}, assume $A_{j-1} \ge B_{j-1}$. So, $t_j = 1$. Suppose $a_j < b_j$.
Then $a_j < 1$, so we used up $\Wtild$ in the $j\Th$ bin. Therefore,
$A_j = 0 \implies A_{j-1} = a_j < b_j \le b_j + B_j = B_{j-1}$,
which contradicts. Hence, $a_j \ge b_j$.
As $0 \le (A_{j-1} - B_{j-1}), (a_j - b_j) \le 1$, we get
$A_j - B_j = (A_{j-1} - B_{j-1}) - (a_j - b_j) \in [-1, 1]$.
\end{proof}

\begin{lemma}
\label{thm:tdiff-implies-adiff}
$t_j \neq t_{j+1} \implies |A_{j-1} - B_{j-1}| \le 1$.
\end{lemma}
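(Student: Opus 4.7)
The plan is a case analysis on bin $j$, split on whether it is full. By the coordinate symmetry of $\greedyPack$, I assume without loss of generality that $t_j = 1$ and $t_{j+1} = 2$; then $A_{j-1} \ge B_{j-1}$, so the target inequality is $A_{j-1} - B_{j-1} \le 1$, and the hypothesis $t_{j+1} = 2$ gives $A_j < B_j$. Let $X$ and $Y$ be the wide and tall items packed into bin $j$, so that $a_j = \hsum(X)$ and $b_j = \wsum(Y)$.

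If bin $j$ is full, then by definition $a_j = 1$ and $b_j = 1 - \wmax(X) \in [0,1]$. Substituting $A_j = A_{j-1} - a_j$ and $B_j = B_{j-1} - b_j$ into $A_j < B_j$ yields $A_{j-1} - B_{j-1} < 1 - b_j \le 1$, as required.

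If bin $j$ is not full, fullness fails for one of two reasons. Either $\hsum(X) < 1$, in which case the greedy prefix has already picked up every remaining wide item, so $A_j = 0$ and $A_{j-1} = a_j < 1$, giving $A_{j-1} - B_{j-1} \le A_{j-1} < 1$. Or $\hsum(X) = 1$ but $\wsum(Y) < 1 - \wmax(X)$; by the slicing convention, the only way the greedy prefix of tall items can stop short of filling the residual width $1 - \wmax(X)$ is that no tall items remain, so $B_j = 0$. But then $A_j \ge 0 = B_j$ forces $t_{j+1} = 1$, contradicting $t_{j+1} = 2$, so this second sub-case does not occur.

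The main delicate step is ruling out the second sub-case of the non-full analysis. One has to be explicit that vertical slicing lets every not-yet-exhausted tall item contribute to $\wsum(Y)$, so $\wsum(Y) < 1 - \wmax(X)$ really forces $B_j = 0$. Granting that, the bound $\le 1$ comes cheaply: in the full case it is driven by $b_j \le 1$, and in the non-full case by the fact that the wide items themselves had total height less than $1$ at the start of bin $j$. The bound is really extracted from the type switch, not from a volumetric argument.
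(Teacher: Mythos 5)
Your proof is correct, but it takes a longer route than necessary. The paper proves the lemma in a single algebraic step: from $t_j = 1$ we have $A_{j-1} \ge B_{j-1}$, and from $t_{j+1} = 2$ we have $A_j < B_j$, i.e., $A_{j-1} - a_j < B_{j-1} - b_j$; combining gives $0 \le A_{j-1} - B_{j-1} < a_j - b_j \le 1$, where the last inequality holds because $a_j \le 1$ for every type-1 bin (the algorithm never packs wide items of total height exceeding $1$) and $b_j \ge 0$ trivially. This dispenses entirely with the case split on whether bin $j$ is full. Your case analysis is sound---the sub-cases are exhaustive, and the elimination of the $\hsum(X) = 1$, $\wsum(Y) < 1-\wmax(X)$ sub-case via $B_j = 0$ is valid---but what you flag as the main delicate step is effort the uniform bound $a_j \le 1$ renders unnecessary; in the paper's argument that sub-case (were it to occur) would be handled no differently from any other. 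One small slip in your closing summary: your full-case derivation $1 - b_j \le 1$ rests on $b_j \ge 0$, not on $b_j \le 1$ as you wrote.
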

\begin{proof}
\WLoG{}, assume $t_j = 1$ and $t_{j+1} = 2$. Then
\[ A_{j-1} \ge B_{j-1} \textrm{ and } A_j < B_j
\implies B_{j-1} \le A_{j-1} < B_{j-1} + a_j - b_j
\implies A_{j-1} - B_{j-1} \in \ropenInterval{0, 1}. \qedhere \]
\end{proof}

\begin{lemma}
\label{thm:non-full-ub}
If all bins don't have the same type, then there can be at most 2 non-full bins.
\end{lemma}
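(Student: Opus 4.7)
The plan is to isolate the mechanism by which a bin produced by $\greedyPack$ fails to be full, and then to bound the number of such bins. Inspecting the definition, a type-1 bin is non-full precisely when either (i) its wide prefix $X$ exhausts the remaining items of $\Wtild$, giving $A_k = 0$ and $a_k < 1$, or (ii) its tall prefix $Y$ exhausts the remaining items of $\Htild$, giving $B_k = 0$; the symmetric dichotomy holds for type-2 bins. Thus every non-full bin corresponds to a moment when $\Wtild$ or $\Htild$ is completely consumed, and the real task is to show that there is at most one further non-full bin after the first such exhaustion.

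I would let $k$ be the index of the first non-full bin and assume \wLoG{} that $t_k = 1$. In the first subcase $A_k = 0$, the argument uses no invariant: non-fullness yields $A_{k-1} = a_k < 1$, and the selection rule $A_{k-1} \ge B_{k-1}$ for type-1 bins immediately gives $B_{k-1} < 1$, so $B_k < 1$ and the remaining tall items fit in a single additional (type-2) bin. This already accounts for at most two non-full bins.

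The second subcase, $B_k = 0$ with $A_k > 0$, is where the hypothesis of the lemma is needed. Because $\Htild$ is exhausted at bin $k$, every later bin has $A > B = 0$ and is type 1, so if \emph{every} bin up to $k$ were also type 1 the hypothesis would be violated. Hence some index $j \le k-1$ satisfies $t_j \neq t_{j+1}$; \cref{thm:tdiff-implies-adiff} then gives $|A_{j-1} - B_{j-1}| \le 1$, and iterating \cref{thm:diff-capture} propagates the bound to $|A_{k-1} - B_{k-1}| \le 1$. Since $A_k > 0$ forces the wide prefix to use full height ($a_k = 1$), and $B_k = 0$ forces $B_{k-1} = b_k$, the invariant simplifies to $A_k \le b_k \le 1 - \wmax(X_k) < 1$, so the remaining wide items fit in one additional (type-1) bin, for a total of at most two non-full bins.

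The main obstacle is the second subcase: the bound on $A_k$ depends critically on the invariant $|A_{k-1} - B_{k-1}| \le 1$ being in force at the very moment $\Htild$ is exhausted, and the only way to secure this is to invoke the hypothesis to produce an earlier type switch so that \cref{thm:diff-capture} can carry the invariant forward to bin $k-1$. Without the hypothesis no such seed exists, and the argument for this subcase breaks down.
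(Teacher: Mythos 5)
Your proof is correct and uses the same key mechanism as the paper's: invoke the hypothesis to find a type switch, seed the invariant $|A - B| \le 1$ via \cref{thm:tdiff-implies-adiff}, propagate it with \cref{thm:diff-capture} to the first non-full bin, and conclude that at most one more bin is needed. Your case decomposition is slightly finer — fixing the type of the first non-full bin and splitting on which of $\Wtild,\Htild$ ran out, which lets you notice that one subcase needs no hypothesis at all — but the substance is the paper's argument.
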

\begin{proof}
Let there be $p$ full bins. %
Assume \wLoG{} that in the $(p+1)\Th$ bin, we used up all items from $\Wtild$ but not $\Htild$.
Hence, $A_{p+1} = 0$ and $\forall i \ge p+2$, $t_i = 2$.
Since all bins don't have the same type, $\exists k \le p+1$ such that
$t_k = 1$ and $t_{k+1} = 2$.
By \cref{thm:tdiff-implies-adiff,thm:diff-capture}, we get $|A_{p+1} - B_{p+1}| \le 1$,
implying  $B_{p+1} \le 1$.
Hence, the $(p+1)\Th$ bin will use up all tall items,
implying at most 2 non-full bins.
\end{proof}

\begin{theorem}
\label{thm:greedy-pack-bins}
The number of bins $m$ used by $\greedyPack$ is at most
\\ $\max\left(\smallceil{\hsum(\Wtild)}, \smallceil{\wsum(\Htild)},
\frac{4}{3}a(\Wtild \cup \Htild) + \frac{8}{3}\right)$.
\end{theorem}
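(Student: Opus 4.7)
The plan is to do a case split on whether all bins produced by $\greedyPack$ have the same type or not, and in each case to bound $m$ by one of the three expressions in the max.

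First I would handle the uniform-type case. Suppose every bin is type-1 (the type-2 case is symmetric). Then by construction, for all $j \in [m]$ we have $A_{j-1} \ge B_{j-1}$, and each bin except possibly the last packs a prefix of $\Wtild$ of total height exactly $1$, since $\greedyPack$ only produces a non-full type-1 bin when $\Wtild$ has been exhausted. Hence $\hsum(\Wtild) > m - 1$, which gives $m \le \smallceil{\hsum(\Wtild)}$. The analogous bound with $\wsum(\Htild)$ holds in the all-type-2 case, so in either case $m$ is dominated by the first two terms of the max.

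Next I would handle the mixed case, where at least one bin has type $1$ and at least one has type $2$. By \cref{thm:non-full-ub}, at most $2$ bins are non-full; let $m_1$ and $m_2$ be the numbers of full type-1 and full type-2 bins respectively, with $J_1, J_2$ the items packed inside them. Applying \cref{thm:area-bound} to the full type-1 bins and its coordinate-swapped \analogue{} to the full type-2 bins gives
\[ m_1 \le \tfrac{4}{3} a(J_1) + \tfrac{1}{3}, \qquad m_2 \le \tfrac{4}{3} a(J_2) + \tfrac{1}{3}. \]
Since $J_1 \cup J_2 \subseteq \Wtild \cup \Htild$ and these item sets are disjoint, summing yields $m_1 + m_2 \le \frac{4}{3} a(\Wtild \cup \Htild) + \frac{2}{3}$. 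Adding the at most two non-full bins gives $m \le \frac{4}{3} a(\Wtild \cup \Htild) + \frac{8}{3}$, which is the third term of the max. Combining both cases proves the theorem.

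There is no serious obstacle here, since \cref{thm:area-bound,thm:non-full-ub} do the heavy lifting; the only mild subtlety is ensuring in the uniform-type case that the last bin is the only non-full one (so that $m-1$ bins are truly packed up to height $1$ with wide items), which follows because $\greedyPack$ creates a non-full type-1 bin only after all wide items have been exhausted.
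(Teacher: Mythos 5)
Your proof is correct and takes essentially the same approach as the paper's: both split into the uniform-type and mixed-type cases, bound the uniform case by the appropriate ceiling expression, and bound the mixed case by combining \cref{thm:area-bound} with \cref{thm:non-full-ub}. Your fleshed-out argument for the uniform case (that all but the last bin pack wide items to height exactly $1$, hence $m \le \smallceil{\hsum(\Wtild)}$) is what the paper leaves implicit, though one small terminological point: a type-1 bin that packs wide items to height $1$ need not be ``full'' in the paper's defined sense if the tall items ran out first, but your reasoning only uses the height-$1$ property, so it is sound.
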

\begin{proof}
If all bins have the same type, then $m \le \max(\smallceil{\hsum(\Wtild)}, \smallceil{\wsum(\Htild)})$.

Let there be $m_1$ (resp.~$m_2$) full bins of type 1 (resp.~type 2)
and let $J_1$ (resp.~$J_2$) be the items inside those bins.
Then by \cref{thm:area-bound}, we get $m_1 \le 4a(J_1)/3 + 1/3$ and $m_2 \le 4a(J_2)/3 + 1/3$.
Hence, $m_1 + m_2 \le 4a(\Wtild \cup \Htild)/3 + 2/3$.
If all bins don't have the same type, then by \cref{thm:non-full-ub},
there can be at most 2 non-full bins, so $\greedyPack(\Wtild, \Htild)$
uses at most $4a(\Wtild \cup \Htild)/3 + 8/3$ bins.
\end{proof}

\subsection{The \texorpdfstring{$\thinGPack$}{skewed4Pack} Algorithm}
\label{sec:thin-gpack}

We now return to the 2BP problem.
$\thinGPack$ is an algorithm for 2BP takes as input a set $I$ of rectangular items
and a parameter $\eps \in (0, 1)$ where $\eps^{-1} \in \mathbb{Z}$.
It outputs a 4-stage bin packing of $I$.
$\thinGPack$ has the following outline:
\begin{enumerate}[A.]
\item Use linear grouping \cite{bp-aptas,kenyon1996strip} to round up the width or height of each item in $I$.
    This gives us a new instance $\Ihat$.
\item Pack $\Ihat$ into $1/\eps^2 + 1$ shelves,
    after possibly \emph{slicing} some items.
    Each shelf is a rectangular region with width or height more than $1/2$ and is fully packed, i.e.,
    the total area of items in a shelf equals the area of the shelf.
    If we treat each shelf as an item, we get a new instance $\Itild$.
\item Compute a packing of $\Itild$ into bins, after possibly slicing some items,
    using $\greedyPack$.
\item Pack most of the items of $I$ into the shelves in the bins. We will prove that
    the remaining items have very small area, so they can be packed separately using NFDH.
\end{enumerate}

\noindent \textbf{A. Item Classification and Rounding.}
Define $W \defeq \{i \in I: h(i) \le \delta_H \}$ and $H \defeq I - W$.
Items in $W$ are called \emph{wide} and items in $H$ are called \emph{tall}.
Let $\WLarge \defeq \{i \in W: w(i) > \eps \}$ and $\WSmall \defeq W - \WLarge$.
Similarly, let $\HLarge \defeq \{i \in H: h(i) > \eps \}$ and $\HSmall \defeq H - \HLarge$.

We will now use \emph{linear grouping}~\cite{bp-aptas,kenyon1996strip}
to round up the widths of items $\WLarge$ and the heights of items $\HLarge$
to get items $\WhatLarge$ and $\HhatLarge$, respectively.
By \cref{thm:lingroup-n} in \cref{sec:lingroup},
items in $\WhatLarge$ have at most $1/\eps^2$ distinct widths
and items in $\HhatLarge$ have at most $1/\eps^2$ distinct heights.

Let $\What \defeq \WhatLarge \cup \WSmall$,
$\Hhat \defeq \HhatLarge \cup \HSmall$,
and $\Ihat \defeq \What \cup \Hhat$.
Let $\fopt(\Ihat)$ be the minimum number of bins needed to pack $\Ihat$,
where items in $\WhatLarge$ can be sliced using horizontal cuts,
items in $\HhatLarge$ can be sliced using vertical cuts,
and items in $\WSmall \cup \HSmall$ can be sliced using both vertical and horizontal cuts.
Then the following lemma follows from \cref{thm:lingroup-repack} in \cref{sec:lingroup}.

\begin{lemma}
\label{thm:lingroup-opt-compare}
$\fopt(\Ihat) < (1+\eps)\opt(I) + 2$.
\end{lemma}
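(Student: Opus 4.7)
My plan is to transform an optimal 2BP packing of $I$ (using $\opt(I)$ bins) into a sliceable packing of $\Ihat$ using at most $(1+\eps)\opt(I)+2$ bins, via two independent applications of linear grouping. Since the items in $\WSmall$ and $\HSmall$ are identical in $I$ and $\Ihat$, the task reduces to replacing the placements of $\WLarge$ by those of $\WhatLarge$ and of $\HLarge$ by those of $\HhatLarge$, absorbing the extra area into a few additional bins.

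First I would invoke \cref{thm:lingroup-repack} on $\WLarge$: linear grouping sorts $\WLarge$ by width, partitions it into $1/\eps^2$ groups, and rounds widths up to the maximum within each group, yielding $\WhatLarge$ with at most $1/\eps^2$ distinct widths. The content of \cref{thm:lingroup-repack} that I rely on is that the rounded instance admits a sliceable packing (using only horizontal slices on $\WhatLarge$) with at most $\opt(\WLarge)+(\eps/2)\opt(I)+1$ bins, where the additive term covers a separate repacking of the first (widest) group whose total height is controlled by a $1/\eps^2$-fraction of $\hsum(\WLarge)$. Applied to the optimal packing of $I$, this yields a sliceable packing of $\WhatLarge \cup \HLarge \cup \WSmall \cup \HSmall$ in at most $(1+\eps/2)\opt(I)+1$ bins. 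I would then apply the coordinate-swapped analogue to $\HLarge$, producing $\HhatLarge$ with at most $1/\eps^2$ distinct heights and costing a further $(\eps/2)\opt(I)+1$ bins via vertical slicing. The two roundings act on disjoint item sets and introduce slices along orthogonal axes, so they do not conflict; summing their costs gives the target bound of $(1+\eps)\opt(I)+2$ bins.

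The main obstacle I anticipate is checking that the slicing modes allowed in the definition of $\fopt(\Ihat)$ (horizontal cuts on $\WhatLarge$, vertical cuts on $\HhatLarge$, both on $\WSmall \cup \HSmall$) are consistent with the slicing directions actually introduced by linear grouping; this should be immediate because rounding widths of wide items only creates horizontal slices and rounding heights of tall items only creates vertical slices. Once that compatibility is confirmed, and a careful accounting of the integer overflow absorbs the fractional part of $\eps\,\opt(I)$ to promote the weak inequality to a strict one, the bound follows from two direct invocations of \cref{thm:lingroup-repack}.
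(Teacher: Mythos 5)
Your high-level strategy is the same as the paper's — the paper proves \cref{thm:lingroup-opt-compare} as an immediate corollary of \cref{thm:lingroup-repack}, applied with $W = \WLarge$, $H = \HLarge$, $S = \WSmall \cup \HSmall$ — and you correctly identify that lemma as the engine. But your accounting of the $\eps$-term does not hold up, and as written the argument would give $(1+2\eps)\opt(I)+2$ rather than $(1+\eps)\opt(I)+2$.

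Concretely: when you round $\WLarge$ alone and peel off the widest group, that group has total height at most $\eps^2 h_L$ where $h_L = \hsum(\WLarge)$, and repacking it needs $\smallceil{\eps^2 h_L}$ bins. The only area bound available in isolation is $\opt(I) \ge a(\WLarge) > \eps\, h_L$, which yields $\eps^2 h_L < \eps\,\opt(I)$ — not $(\eps/2)\opt(I)$. Your claim that each of the two steps costs only $(\eps/2)\opt(I)+1$ extra bins would require $\opt(I) \ge 2\eps h_L$ (and symmetrically $\opt(I) \ge 2\eps w_L$), which fails e.g.\ when wide items dominate the instance. The factor you want comes from treating the two peeled groups \emph{together}: the total extra cost is $\smallceil{\eps^2 h_L} + \smallceil{\eps^2 w_L} < \eps^2(h_L + w_L) + 2$, and then one uses the \emph{combined} area inequality $\opt(I) \ge a(\WLarge) + a(\HLarge) \ge \eps(h_L + w_L)$ to get $\eps^2(h_L + w_L) \le \eps\,\opt(I)$. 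This is precisely what the proof of \cref{thm:lingroup-repack} does, and it is why that lemma is stated for $W$, $H$, and $S$ simultaneously rather than as two coordinate-swapped halves. So you should invoke \cref{thm:lingroup-repack} once, with $W = \WLarge$, $H = \HLarge$, $S = \WSmall \cup \HSmall$, rather than splitting it into two separate $(\eps/2)$-budgeted passes.
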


\noindent \textbf{B. Creating Shelves.}
We will use ideas from Kenyon and R\'emila's 2SP algorithm~\cite{kenyon1996strip}
to pack $\Ihat$ into \emph{shelves}.
Roughly, we solve a linear program to compute an optimal strip packing of $\What$,
where the packing is 3-stage. The first stage of cuts gives us shelves
and the second stage gives us containers.
From each shelf, we trim off space that doesn't belong to any container.
See \cref{sec:guill-thin-extra:shelves} for details.
Let $\Wtild$ be the shelves thus obtained.
Analogously, we can pack items $\Hhat$ into shelves $\Htild$.
Shelves in $\Wtild$ are called \emph{wide shelves}
and shelves in $\Htild$ are called \emph{tall shelves}.
Let $\Itild \defeq \Wtild \cup \Htild$.
We can interpret each shelf in $\Itild$ as a rectangular item.
We allow slicing $\Wtild$ and $\Htild$ using horizontal cuts and vertical cuts, respectively.
In \cref{sec:guill-thin-extra:shelves}, we prove the following facts.%

\begin{restatable}{lemma}{rthmCreateShelves}
\label{thm:shelves}
$\Itild$ has the following properties:
(a) $|\Wtild| \le 1+1/\eps^2$ and $|\Htild| \le 1+1/\eps^2$; 
(b) Items in $\Wtild$ have width more than $1/2$
    and items in $\Htild$ have height more than $1/2$;
(c) $a(\Itild) = a(\Ihat)$;
(d) $\max(\smallceil{\hsum(\Wtild)}, \smallceil{\wsum(\Htild)}) \le \fopt(\Ihat)$.
\end{restatable}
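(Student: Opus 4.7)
The plan is to build $\Wtild$ (and $\Htild$ symmetrically) by solving a Kenyon--R\'emila-style strip-packing linear program on $\What$ and extracting the shelves from an extreme-point optimum. There is one LP variable per ``configuration''---a feasible side-by-side arrangement in a unit-width strip of items drawn from the at most $1/\eps^2$ distinct widths of $\WhatLarge$---with a covering constraint per width class and total-height objective. The Rank Lemma guarantees that an extreme-point optimum is supported on at most $1/\eps^2$ configurations, and each becomes a shelf of $\Wtild$; items of $\WSmall$ are inserted (with slicing) into the leftover vertical sub-strips inside those shelves, and any residual $\WSmall$ mass is packed into at most one extra ``slack'' shelf. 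This yields $|\Wtild| \le 1+1/\eps^2$, and symmetrically for $\Htild$, so (a) holds.

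For (c), after filling leftover sub-strips with sliced $\WSmall$ items and trimming each shelf to the union of its containers, every container is fully covered by item area, hence $a(\Wtild) = a(\What)$; symmetrically $a(\Htild) = a(\Hhat)$, so $a(\Itild) = a(\Ihat)$. For (d), $\hsum(\Wtild)$ equals the LP objective value (absorbing the one slack shelf into the ceiling), and the LP is a relaxation of any integral packing of $\Ihat$ into $N$ bins: slicing the wide regions of the $N$ bins horizontally and stacking them produces a feasible fractional strip packing of $\What$ of height at most $N$. Thus $\hsum(\Wtild) \le \fopt(\Ihat)$, and the symmetric bound $\wsum(\Htild) \le \fopt(\Ihat)$ finishes (d) after taking ceilings.

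The main obstacle is (b), since the Kenyon--R\'emila LP alone does not a priori force the trimmed shelf width to exceed $1/2$: a configuration the LP chooses could use only items of small width, leaving a narrow shelf after trimming. To handle this, the construction must be refined---for example, by restricting the LP to configurations whose total width lies in $(1/2,1]$ and handling the excluded ``narrow-wide'' items (those with $\eps < w(i) \le 1/2$) either inside the symmetric tall-shelf construction for $\Hhat$ or by a separate NFDW-style sub-packing---and then verifying that this refinement neither inflates the shelf count in (a), breaks the area identity in (c), nor increases the height sum in (d). Making this bookkeeping precise, particularly the interaction between sliced $\WSmall$ fillers and the trimmed shelf boundaries, is the principal technical content that I would defer to the detailed construction in \cref{sec:guill-thin-extra:shelves}.
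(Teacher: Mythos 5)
Your overall framework—a Kenyon--R\'emila-type configuration LP for wide items, the Rank Lemma to bound the number of shelves, then symmetrize for tall items—is the same as the paper's. The main differences are (i) your LP is posed on $\WhatLarge$ alone and you append a separate ``slack'' shelf for residual $\WSmall$, whereas the paper's LP carries an explicit indicator $S_0$ and an area-equality constraint $\sum_{S:S_0=1}(1-\sum_j S_j w_j)x_S = a(\WSmall)$, so that the Rank Lemma (applied to $1/\eps^2+1$ equality constraints) gives $|\Wtild|\le 1+1/\eps^2$ in one shot; and (ii) the paper uses equality covering constraints $\sum_S S_j x_S = h(\WhatLarge_j)$, which is what guarantees each container is exactly filled and hence gives (c) cleanly. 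Your (d) argument is fine once you observe that if the slack shelf is nonempty then $\hsum(\Wtild) = a(\What) \le \optSP(\What)$, but these are exactly the bookkeeping headaches the paper sidesteps by rolling $\WSmall$ into the LP.

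The genuine gap is (b), which you correctly identify as the crux but do not prove. Your proposed fix—restrict the LP to configurations of total width in $(1/2,1]$—would work, but the worry you attach to it is spurious: no width class is ``excluded.'' Any item with $\eps < w_j \le 1/2$ still appears in valid wide configurations (e.g.\ take $\smallceil{1/(2w_j)}$ copies of width $w_j$ side by side), since the LP covers sliceable \emph{lengths} $h(\WhatLarge_j)$, not indivisible items. So the detour through $\Hhat$ or a separate NFDW sub-packing is unnecessary and, as stated, incoherent (a wide item cannot go into the tall-shelf construction). More importantly, the paper shows that no restriction is needed at all: at LP optimality, $x^*_S > 0 \implies w(S) > 1/2$ for free. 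If some $S$ with $x^*_S>0$ had $w(S)\le 1/2$, replace it by $S'$ with $S'_j = 2S_j$ (still a valid configuration since $w(S')=2w(S)\le 1$, and $S'_0=S_0=0$ since $w(S)\le 1/2$ forces $S_0=0$ in the paper's LP) at height $x^*_S/2$; this covers the same item lengths while halving the shelf's contribution to the objective, contradicting optimality. This is \cref{thm:width-gt-half} in the paper, and it is strictly simpler than re-engineering the configuration set.
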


\noindent \textbf{C. Packing Shelves Into Bins.}
So far, we have packed $\Ihat$ into shelves $\Wtild$ and $\Htild$.
We will now use $\greedyPack(\Wtild, \Htild)$ to pack the shelves into bins.
By \cref{thm:greedy-pack}, we get a 2-stage packing of $\Wtild \cup \Htild$
into $m$ bins, where we make at most $m-1$ horizontal cuts in $\Wtild$
and at most $m-1$ vertical cuts in $\Htild$.
The horizontal cuts (resp.~vertical cuts) increase the number of wide shelves
(resp.~tall shelves) from at most $1 + 1/\eps^2$
to at most $m + 1/\eps^2$.
By \cref{thm:greedy-pack-bins}, \cref{thm:shelves}(d)
and \cref{thm:shelves}(c), we get
$m \le \max\left(\smallceil{\hsum(\Wtild)}, \smallceil{\wsum(\Htild)},
    \frac{4}{3}a(\Itild) + \frac{8}{3}\right)
\le \frac{4}{3}\fopt(\Ihat) + \frac{8}{3}$.

\noindent \textbf{D. Packing Items Into Containers.}
So far, we have a packing of shelves into $m$ bins,
where the shelves contain slices of items $\Ihat$.
We will now repack a large subset of the items $\Ihat$ into the shelves
without slicing them. See \cref{fig:thin-gpack-output} for an example output.
We will do this using a standard greedy algorithm.
See \cref{sec:guill-thin-extra:pack-into-containers} for details of the 
algorithm and proof of the following lemma.

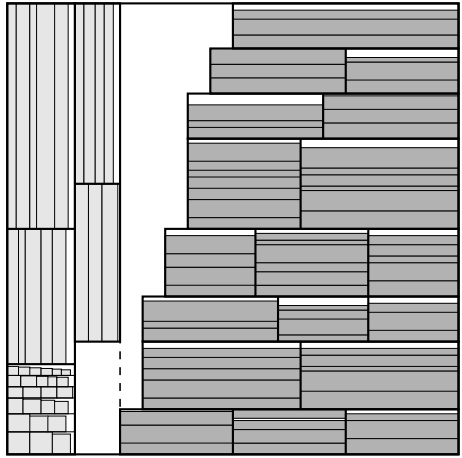
\begin{figure}[htb]
\centering
\ifcsname myu\endcsname\else\newlength{\myu}\fi
\setlength{\myu}{0.3cm}
\tikzset{mytransform/.style={}}
\tikzset{sepline/.style={draw,thick}}
\tikzset{halfsepline/.style={draw}}
\tikzset{wShelf/.style={draw,thick}}
\tikzset{hShelf/.style={draw,thick}}
\tikzset{wItem/.style={draw,fill={black!30},very thin}}
\tikzset{hItem/.style={draw,fill={black!10},very thin}}
\tikzset{bin/.style={draw,thick}}
\tikzset{binGrid/.style={draw,step=1\myu,{black!20}}}
\begin{tikzpicture}[mytransform]

\path[hItem] (0.0\myu, 0.0\myu) rectangle +(1.0\myu, 1.0\myu);
\path[hItem] (1.0\myu, 0.0\myu) rectangle +(1.0\myu, 1.0\myu);
\path[hItem] (2.0\myu, 0.0\myu) rectangle +(0.8\myu, 0.9\myu);
\path[hItem] (0.0\myu, 1.0\myu) rectangle +(1.0\myu, 0.8\myu);
\path[hItem] (1.0\myu, 1.0\myu) rectangle +(0.8\myu, 0.7\myu);
\path[hItem] (1.8\myu, 1.0\myu) rectangle +(0.8\myu, 0.7\myu);
\path[hItem] (0.0\myu, 1.8\myu) rectangle +(0.7\myu, 0.7\myu);
\path[hItem] (0.7\myu, 1.8\myu) rectangle +(0.8\myu, 0.65\myu);
\path[hItem] (1.5\myu, 1.8\myu) rectangle +(0.6\myu, 0.6\myu);
\path[hItem] (2.1\myu, 1.8\myu) rectangle +(0.6\myu, 0.55\myu);
\path[hItem] (0.0\myu, 2.5\myu) rectangle +(0.7\myu, 0.5\myu);
\path[hItem] (0.7\myu, 2.5\myu) rectangle +(0.8\myu, 0.5\myu);
\path[hItem] (1.5\myu, 2.5\myu) rectangle +(0.7\myu, 0.5\myu);
\path[hItem] (2.2\myu, 2.5\myu) rectangle +(0.7\myu, 0.5\myu);
\path[hItem] (0.0\myu, 3.0\myu) rectangle +(0.6\myu, 0.5\myu);
\path[hItem] (0.6\myu, 3.0\myu) rectangle +(0.7\myu, 0.48\myu);
\path[hItem] (1.3\myu, 3.0\myu) rectangle +(0.5\myu, 0.46\myu);
\path[hItem] (1.8\myu, 3.0\myu) rectangle +(0.4\myu, 0.44\myu);
\path[hItem] (2.2\myu, 3.0\myu) rectangle +(0.5\myu, 0.42\myu);
\path[hItem] (0.0\myu, 3.5\myu) rectangle +(0.5\myu, 0.4\myu);
\path[hItem] (0.5\myu, 3.5\myu) rectangle +(0.5\myu, 0.37\myu);
\path[hItem] (1.0\myu, 3.5\myu) rectangle +(0.5\myu, 0.34\myu);
\path[hItem] (1.5\myu, 3.5\myu) rectangle +(0.5\myu, 0.31\myu);
\path[hItem] (2.0\myu, 3.5\myu) rectangle +(0.4\myu, 0.28\myu);
\path[hItem] (2.4\myu, 3.5\myu) rectangle +(0.4\myu, 0.25\myu);

\path[hItem] (0.0\myu, 4\myu) rectangle +(0.5\myu, 6\myu);
\path[hItem] (0.5\myu, 4\myu) rectangle +(0.3\myu, 6\myu);
\path[hItem] (0.8\myu, 4\myu) rectangle +(0.7\myu, 6\myu);
\path[hItem] (1.5\myu, 4\myu) rectangle +(0.5\myu, 6\myu);
\path[hItem] (2.0\myu, 4\myu) rectangle +(0.6\myu, 6\myu);
\path[hItem] (0.0\myu, 10\myu) rectangle +(0.4\myu, 10\myu);
\path[hItem] (0.4\myu, 10\myu) rectangle +(0.6\myu, 10\myu);
\path[hItem] (1.0\myu, 10\myu) rectangle +(0.3\myu, 10\myu);
\path[hItem] (1.3\myu, 10\myu) rectangle +(0.8\myu, 10\myu);
\path[hItem] (2.1\myu, 10\myu) rectangle +(0.6\myu, 10\myu);

\path[hItem] (3.0\myu, 5\myu) rectangle +(0.6\myu, 7\myu);
\path[hItem] (3.6\myu, 5\myu) rectangle +(0.6\myu, 7\myu);
\path[hItem] (4.2\myu, 5\myu) rectangle +(0.7\myu, 7\myu);
\path[hItem] (3.0\myu, 12\myu) rectangle +(0.4\myu, 8\myu);
\path[hItem] (3.4\myu, 12\myu) rectangle +(0.5\myu, 8\myu);
\path[hItem] (3.9\myu, 12\myu) rectangle +(0.4\myu, 8\myu);
\path[hItem] (4.3\myu, 12\myu) rectangle +(0.4\myu, 8\myu);

\path[wItem] (5\myu, 0.0\myu) rectangle +(5\myu, 0.5\myu);
\path[wItem] (5\myu, 0.5\myu) rectangle +(5\myu, 0.8\myu);
\path[wItem] (5\myu, 1.3\myu) rectangle +(5\myu, 0.6\myu);
\path[wItem] (10\myu, 0.0\myu) rectangle +(5\myu, 0.5\myu);
\path[wItem] (10\myu, 0.5\myu) rectangle +(5\myu, 0.6\myu);
\path[wItem] (10\myu, 1.1\myu) rectangle +(5\myu, 0.4\myu);
\path[wItem] (10\myu, 1.6\myu) rectangle +(5\myu, 0.4\myu);
\path[wItem] (15\myu, 0.0\myu) rectangle +(5\myu, 0.7\myu);
\path[wItem] (15\myu, 0.7\myu) rectangle +(5\myu, 0.8\myu);
\path[wItem] (15\myu, 1.5\myu) rectangle +(5\myu, 0.3\myu);

\path[wItem] (6\myu, 2.0\myu) rectangle +(7\myu, 0.5\myu);
\path[wItem] (6\myu, 2.5\myu) rectangle +(7\myu, 0.8\myu);
\path[wItem] (6\myu, 3.3\myu) rectangle +(7\myu, 0.5\myu);
\path[wItem] (6\myu, 3.8\myu) rectangle +(7\myu, 0.5\myu);
\path[wItem] (6\myu, 4.3\myu) rectangle +(7\myu, 0.4\myu);
\path[wItem] (13\myu, 2.0\myu) rectangle +(7\myu, 0.8\myu);
\path[wItem] (13\myu, 2.8\myu) rectangle +(7\myu, 0.9\myu);
\path[wItem] (13\myu, 3.7\myu) rectangle +(7\myu, 0.2\myu);
\path[wItem] (13\myu, 3.9\myu) rectangle +(7\myu, 0.5\myu);
\path[wItem] (13\myu, 4.4\myu) rectangle +(7\myu, 0.3\myu);

\path[wItem] (6\myu, 5.0\myu) rectangle +(6\myu, 0.6\myu);
\path[wItem] (6\myu, 5.6\myu) rectangle +(6\myu, 0.3\myu);
\path[wItem] (6\myu, 5.9\myu) rectangle +(6\myu, 0.9\myu);
\path[wItem] (12\myu, 5.0\myu) rectangle +(4\myu, 0.3\myu);
\path[wItem] (12\myu, 5.3\myu) rectangle +(4\myu, 0.7\myu);
\path[wItem] (12\myu, 6.0\myu) rectangle +(4\myu, 0.4\myu);
\path[wItem] (12\myu, 6.4\myu) rectangle +(4\myu, 0.2\myu);
\path[wItem] (16\myu, 5.0\myu) rectangle +(4\myu, 0.5\myu);
\path[wItem] (16\myu, 5.5\myu) rectangle +(4\myu, 0.8\myu);
\path[wItem] (16\myu, 6.3\myu) rectangle +(4\myu, 0.4\myu);

\path[wItem] (7\myu, 7.0\myu) rectangle +(4\myu, 0.5\myu);
\path[wItem] (7\myu, 7.5\myu) rectangle +(4\myu, 0.8\myu);
\path[wItem] (7\myu, 8.3\myu) rectangle +(4\myu, 0.6\myu);
\path[wItem] (7\myu, 8.9\myu) rectangle +(4\myu, 0.8\myu);
\path[wItem] (11\myu, 7.0\myu) rectangle +(5\myu, 0.5\myu);
\path[wItem] (11\myu, 7.5\myu) rectangle +(5\myu, 0.6\myu);
\path[wItem] (11\myu, 8.1\myu) rectangle +(5\myu, 0.4\myu);
\path[wItem] (11\myu, 8.5\myu) rectangle +(5\myu, 0.8\myu);
\path[wItem] (11\myu, 9.3\myu) rectangle +(5\myu, 0.2\myu);
\path[wItem] (11\myu, 9.5\myu) rectangle +(5\myu, 0.3\myu);
\path[wItem] (16\myu, 7.0\myu) rectangle +(4\myu, 0.7\myu);
\path[wItem] (16\myu, 7.7\myu) rectangle +(4\myu, 0.8\myu);
\path[wItem] (16\myu, 8.5\myu) rectangle +(4\myu, 0.3\myu);
\path[wItem] (16\myu, 8.8\myu) rectangle +(4\myu, 0.5\myu);
\path[wItem] (16\myu, 9.3\myu) rectangle +(4\myu, 0.4\myu);

\path[wItem] (8\myu, 10.0\myu) rectangle +(5\myu, 0.5\myu);
\path[wItem] (8\myu, 10.5\myu) rectangle +(5\myu, 0.8\myu);
\path[wItem] (8\myu, 11.3\myu) rectangle +(5\myu, 0.5\myu);
\path[wItem] (8\myu, 11.8\myu) rectangle +(5\myu, 0.5\myu);
\path[wItem] (8\myu, 12.3\myu) rectangle +(5\myu, 0.4\myu);
\path[wItem] (8\myu, 12.6\myu) rectangle +(5\myu, 0.4\myu);
\path[wItem] (8\myu, 13.0\myu) rectangle +(5\myu, 0.8\myu);
\path[wItem] (13\myu, 10.0\myu) rectangle +(7\myu, 0.8\myu);
\path[wItem] (13\myu, 10.8\myu) rectangle +(7\myu, 0.9\myu);
\path[wItem] (13\myu, 11.7\myu) rectangle +(7\myu, 0.2\myu);
\path[wItem] (13\myu, 11.9\myu) rectangle +(7\myu, 0.5\myu);
\path[wItem] (13\myu, 12.4\myu) rectangle +(7\myu, 0.3\myu);
\path[wItem] (13\myu, 12.7\myu) rectangle +(7\myu, 0.9\myu);

\path[wItem] (8\myu, 14.0\myu) rectangle +(6\myu, 0.5\myu);
\path[wItem] (8\myu, 14.5\myu) rectangle +(6\myu, 0.3\myu);
\path[wItem] (8\myu, 14.8\myu) rectangle +(6\myu, 0.7\myu);
\path[wItem] (14\myu, 14.0\myu) rectangle +(6\myu, 0.7\myu);
\path[wItem] (14\myu, 14.7\myu) rectangle +(6\myu, 0.6\myu);
\path[wItem] (14\myu, 15.3\myu) rectangle +(6\myu, 0.6\myu);

\path[wItem] (9\myu, 16.0\myu) rectangle +(6\myu, 0.7\myu);
\path[wItem] (9\myu, 16.7\myu) rectangle +(6\myu, 0.6\myu);
\path[wItem] (9\myu, 17.3\myu) rectangle +(6\myu, 0.7\myu);
\path[wItem] (15\myu, 16.0\myu) rectangle +(5\myu, 0.6\myu);
\path[wItem] (15\myu, 16.6\myu) rectangle +(5\myu, 0.8\myu);
\path[wItem] (15\myu, 17.4\myu) rectangle +(5\myu, 0.2\myu);

\path[wItem] (10\myu, 18.0\myu) rectangle +(10\myu, 0.6\myu);
\path[wItem] (10\myu, 18.6\myu) rectangle +(10\myu, 0.7\myu);
\path[wItem] (10\myu, 19.3\myu) rectangle +(10\myu, 0.4\myu);

\path[sepline] (0\myu, 10\myu) -- +(3\myu, 0);
\path[halfsepline] (0\myu, 1.0\myu) -- +(3\myu, 0);
\path[halfsepline] (0\myu, 1.8\myu) -- +(3\myu, 0);
\path[halfsepline] (0\myu, 2.5\myu) -- +(3\myu, 0);
\path[halfsepline] (0\myu, 3.0\myu) -- +(3\myu, 0);
\path[halfsepline] (0\myu, 3.5\myu) -- +(3\myu, 0);

\path[sepline] (0\myu, 4\myu) -- +(3\myu, 0);
\path[sepline] (3\myu, 12\myu) -- +(2\myu, 0);
\path[sepline] (10\myu, 0\myu) -- +(0, 2\myu);
\path[sepline] (15\myu, 0\myu) -- +(0, 2\myu);
\path[sepline] (13\myu, 2\myu) -- +(0, 3\myu);
\path[sepline] (12\myu, 5\myu) -- +(0, 2\myu);
\path[sepline] (16\myu, 5\myu) -- +(0, 2\myu);
\path[sepline] (11\myu, 7\myu) -- +(0, 3\myu);
\path[sepline] (16\myu, 7\myu) -- +(0, 3\myu);
\path[sepline] (13\myu, 10\myu) -- +(0, 4\myu);
\path[sepline] (14\myu, 14\myu) -- +(0, 2\myu);
\path[sepline] (15\myu, 16\myu) -- +(0, 2\myu);

\path[wShelf] (5\myu, 0\myu) rectangle +(15\myu, 2\myu);
\path[wShelf] (6\myu, 2\myu) rectangle +(14\myu, 3\myu);
\path[wShelf] (6\myu, 5\myu) rectangle +(14\myu, 2\myu);
\path[wShelf] (7\myu, 7\myu) rectangle +(13\myu, 3\myu);
\path[wShelf] (8\myu, 10\myu) rectangle +(12\myu, 4\myu);
\path[wShelf] (8\myu, 14\myu) rectangle +(12\myu, 2\myu);
\path[wShelf] (9\myu, 16\myu) rectangle +(11\myu, 2\myu);
\path[wShelf] (10\myu, 18\myu) rectangle +(10\myu, 2\myu);
\path[hShelf] (0\myu, 0\myu) rectangle +(3\myu, 20\myu);
\path[hShelf] (3\myu, 5\myu) rectangle +(2\myu, 15\myu);
\draw[semithick,dashed] (5\myu, 0\myu) -- +(0\myu, 20\myu);
\path[bin] (0\myu, 0\myu) rectangle (20\myu, 20\myu);
\end{tikzpicture}

\caption[A type-1 bin in the packing of $\Ihat$ computed by $\thinGPack$.]%
{A type-1 bin in the packing of $\Ihat$ computed by $\thinGPack$.
The packing contains 5 tall containers in 2 tall shelves
and 18 wide containers in 8 wide shelves.}%
\label{fig:thin-gpack-output}
\end{figure}

\begin{restatable}{lemma}{rthmDiscardAreaUb}
\label{thm:discard-area-ub}
Let $P$ be a packing of $\Itild$ into $m$ bins, where we sliced wide shelves by making
at most $m-1$ horizontal cuts and sliced tall shelves by making at most $m-1$ vertical cuts.
Then we can (non-fractionally) pack a large subset of items $\Ihat$
into the shelves in $P$ such that
the unpacked items (also called \emph{discarded items}) from $\What$ have area less than
$\eps \hsum(\Wtild) + \delta_H(1 + \eps)(m + 1/\eps^2)$,
and the unpacked items from $\Hhat$ have area less than
$\eps \wsum(\Htild) + \delta_W(1 + \eps)(m + 1/\eps^2)$.
\end{restatable}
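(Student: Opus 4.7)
By symmetry, I would argue only the bound on unpacked items from $\What$; the bound for $\Hhat$ is obtained by swapping the coordinate axes. The first step is to count wide shelf slices: \cref{thm:shelves}(a) gives $|\Wtild|\le 1+1/\eps^2$, and $\greedyPack$ introduces at most $m-1$ horizontal cuts by \cref{thm:greedy-pack}, so there are at most $\nW \defeq m+1/\eps^2$ wide shelf slices. Each slice inherits, from the Kenyon--R\'emila construction in Step B, a partition into containers: for each rounded width $w_c$ arising in $\WhatLarge$ there is at most one container of width $w_c$, plus a region designated for the items of $\WSmall$.

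The repacking is greedy and proceeds in two phases. In Phase~1, I group the items of $\WhatLarge$ by their rounded width $w_c$ and, for each $w_c$, view the collection of containers of width $w_c$ across all shelf slices as one-dimensional bins whose capacities equal their heights. Since items of rounded width $w_c$ fit the container width exactly and have height at most $\delta_H$, a first-fit-decreasing-height procedure fills each such container to within $\delta_H$ of its capacity, wasting at most $\delta_H w_c$ area per container. Letting $k_c$ denote the number of containers of width $w_c$, the total wasted area from $\WhatLarge$ is at most $\delta_H \sum_c k_c w_c$. The key observation is that within a single shelf slice the container widths sum to at most the shelf width, and shelf widths are at most $1$; summing over shelf slices gives $\sum_c k_c w_c\le \nW$, so this waste is at most $\delta_H \nW$.

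In Phase~2 I would run NFDH on $\WSmall$ inside the residual regions of the shelf slices. Because every item in $\WSmall$ has width at most $\eps$ and height at most $\delta_H$, \cref{thm:nfdh-small} says that a region of width $W$ and height $H$ packs items of total area at least $(W-\eps)(H-\delta_H)$, so the shortfall per region is at most $\eps H + \delta_H W$ (up to the $\eps\delta_H$ second-order saving). Summing over shelf slices and using $\sum_s W_s \le \nW$ together with $\sum_s H_s \le \hsum(\Wtild)$, the total waste contributed by $\WSmall$ is at most $\eps \hsum(\Wtild) + \delta_H \nW$.

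Adding the two phases gives a discarded area of roughly $\eps\hsum(\Wtild) + 2\delta_H\nW$, which is already the right shape. The main obstacle I anticipate is sharpening the coefficient of $\delta_H\nW$ from $2$ to the claimed $(1+\eps)$. I would do this by merging the two phases within each shelf slice: rather than having Phase~1 leave a slack strip of height $\delta_H$ at the top of each large-item container and Phase~2 create its own NFDH shortfall, I would place the NFDH layer of small items directly into the slack strip left by the first-fit stacking, so that the $\delta_H$ boundary rows are charged only once. The residual factor $(1+\eps)$ then arises from the $(1+\eps)$ multiplicative slack introduced by the linear-grouping step (and by the $\eps H$ term inherited from the NFDH analysis). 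Executing this merged layout and re-bounding the wasted area shelf-slice by shelf-slice should yield the stated inequality.
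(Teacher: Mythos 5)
Your overall strategy matches the paper's: greedily assign items of each rounded width to the containers of that width in the sliced shelves, discard the one overflowing item per container, and use an NFDH-type area bound for the small items in the leftover regions. So the framework is sound.

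However, the way you account for the $\delta_H$-terms contains a genuine bookkeeping error, and your proposed repair does not fix it. In Phase~1 you bound the waste from $\WhatLarge$ by $\delta_H\sum_c k_c w_c$, and you observe that within each shelf slice the widths of the type-$j$ containers sum to at most the shelf width $\le 1$, giving $\sum_c k_c w_c \le \nW$. In Phase~2 you bound the waste from $\WSmall$ by $\eps\hsum(\Wtild)+\delta_H\sum_s W_s$, where $W_s$ is the width of the type-$0$ region in slice $s$, and again $\sum_s W_s\le \nW$. Adding these gives $2\delta_H\nW$. The missing observation is that within a single shelf slice, the type-$j$ container widths and the type-$0$ region width together sum to at most the shelf width $\le 1$; so the combined $\delta_H$-contribution of slice $s$ is at most $\delta_H\cdot 1$, and summing over all $\nW$ slices gives $\delta_H\nW$, not $2\delta_H\nW$. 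In other words, the two sums you bounded separately by $\nW$ actually have a joint bound of $\nW$. With this correction, the discarded area per slice $s$ is at most $\eps h_s + \delta_H + \eps\delta_H$, where the extra $\eps\delta_H$ is the second-order term in the NFDH shortfall from \cref{thm:nfdh-small}. Summing over slices gives exactly $\eps\hsum(\Wtild)+\delta_H(1+\eps)(m+1/\eps^2)$.

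The factor $(1+\eps)$ therefore comes from the $\eps\delta_H$ term of NFDH, not from the linear-grouping slack as you conjecture. Your proposed fix of interleaving the NFDH layer of small items into the slack strips of the large-item containers is both unnecessary and problematic: the type-$0$ region and the type-$j$ containers are disjoint rectangular regions within a shelf, so items assigned to the type-$0$ region cannot be packed inside the slack strips of other containers without overlapping. Once the widths are summed jointly per shelf slice as above, no merging of phases is needed.
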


We will pack the wide discarded items into new bins using NFDH
and pack the tall discarded items into new bins using NFDW.

Finally, we prove the performance guarantee of $\thinGPack_{\eps}(I)$.
\begin{lemma}
\label{thm:thin-gpack-strong}
Let $I$ be a set of $(\delta_W, \delta_H)$-skewed items.
Then $\thinGPack_{\eps}(I)$ outputs a 4-stage packing of $I$
in time $O((1/\eps)^{O(1/\eps)} + n\log n)$
and uses less than $\alpha(1+\eps)\opt(I) + 2\beta$ bins, where
$\Delta \defeq \frac{1}{2}\left(\frac{\delta_H}{1-\delta_H}
    + \frac{\delta_W}{1-\delta_W}\right),\;
     \alpha \defeq \frac{4}{3}(1+4\Delta)(1+3\eps),\;
     \beta \defeq \frac{2\Delta(1+\eps)}{\eps^2} + \frac{10}{3}
    + \frac{19\Delta}{3} + \frac{16\Delta\eps}{3}$.
\end{lemma}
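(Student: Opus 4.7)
The plan is to carry out the bin-count accounting for Steps A--D of $\thinGPack$ and then bound the extra bins used to re-pack the discarded items via NFDH/NFDW. All of the hard combinatorial work has already been encapsulated in \cref{thm:lingroup-opt-compare,thm:shelves,thm:greedy-pack-bins,thm:discard-area-ub,thm:nfdh-wide-tall}; what remains is to stitch these together and track the error carefully.

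First I assemble the chain of bounds from Steps A--C. By \cref{thm:lingroup-opt-compare}, $\fopt(\Ihat) < (1+\eps)\opt(I)+2$. By \cref{thm:shelves}(c,d) together with \cref{thm:greedy-pack-bins}, $\greedyPack(\Wtild,\Htild)$ produces a 2-stage packing of the shelves into $m \le \frac{4}{3}\fopt(\Ihat)+\frac{8}{3}$ bins, and the slicing during $\greedyPack$ keeps the total number of shelves on each side bounded by $m+1/\eps^2$. Within each shelf sit the containers created in Step B, and within each container sit rounded items, so the repacking of Step D yields a 4-stage packing of the non-discarded portion of $\Ihat$.

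Next I handle the discarded items. Applying \cref{thm:discard-area-ub} together with $\hsum(\Wtild), \wsum(\Htild) \le \fopt(\Ihat)$ bounds the total discarded areas $D_W$ and $D_H$ by
\[
D_W < \eps\,\fopt(\Ihat) + \delta_H(1+\eps)\bigl(m+\tfrac{1}{\eps^2}\bigr), \qquad D_H < \eps\,\fopt(\Ihat) + \delta_W(1+\eps)\bigl(m+\tfrac{1}{\eps^2}\bigr).
\]
Since each discarded item from $\What$ has height at most $\delta_H$ and each discarded item from $\Hhat$ has width at most $\delta_W$, \cref{thm:nfdh-wide-tall} packs the wide discards via NFDH into strictly fewer than $(2D_W+1)/(1-\delta_H)$ bins and the tall discards via NFDW into strictly fewer than $2D_H/(1-\delta_W)+3$ bins, each in 2 stages, so the overall packing stays 4-stage.

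Writing $\Delta_H \defeq \delta_H/(1-\delta_H)$ and $\Delta_W \defeq \delta_W/(1-\delta_W)$ so that $1/(1-\delta_H)=1+\Delta_H$ and $\Delta_H+\Delta_W = 2\Delta$, summing the three contributions collects the total bin count $T$ into the form
\[
T < m\bigl(1+4\Delta(1+\eps)\bigr) + 4\eps(1+\Delta)\fopt(\Ihat) + \frac{4\Delta(1+\eps)}{\eps^2} + O(1).
\]
Plugging in $m \le \frac{4}{3}\fopt(\Ihat)+\frac{8}{3}$ and then $\fopt(\Ihat) < (1+\eps)\opt(I)+2$, and using $1+3\eps+4\Delta+7\Delta\eps \le (1+4\Delta)(1+3\eps)$ to fold the coefficient of $\opt(I)$ into the stated $\alpha$, delivers $T < \alpha(1+\eps)\opt(I) + 2\beta$ once the constants are collected. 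The runtime is immediate: $O(n\log n)$ for linear grouping and sorting in Step A; $(1/\eps)^{O(1/\eps)}$ for the Kenyon--R\'emila-style LP over the $O(1/\eps^2)$ rounded widths and heights in Step B; the bound from \cref{thm:greedy-pack} for Step C; and $O(n\log n)$ for the greedy container repacking and NFDH/NFDW passes in Step D. The main obstacle is the algebra in the penultimate step: the overheads $(1+\Delta_H)$, $(1+\Delta_W)$ and $(1+\eps)$ enter asymmetrically through shelf slicing and linear grouping, and care is needed to absorb the $\Delta\eps$ cross-terms cleanly into the product $(1+4\Delta)(1+3\eps)$ rather than leaving them as uncontrolled residuals.
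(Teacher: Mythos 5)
Your proof takes the same route as the paper's, citing the same lemma chain, but contains one substantive slip in the accounting of the tall discards. You write that NFDW packs the tall discards into fewer than $2D_H/(1-\delta_W)+3$ bins, quoting the second branch of \cref{thm:nfdh-wide-tall}, which is the \emph{NFDH} bound for items of width $\le\delta$. NFDW, however, is coordinate-swapped NFDH, so for items with width at most $\delta_W$ the applicable bound is the swapped ``$h(i)\le\delta$'' branch, giving $(2a(H^d)+1)/(1-\delta_W)$; this is what the paper uses. Since $\delta_W\le 1/2$, the paper's form is tighter than yours by $3 - 1/(1-\delta_W) = 2-\Delta_W\in[1,2]$ (with $\Delta_W\defeq\delta_W/(1-\delta_W)$). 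The paper's tighter form produces the additive constant $2(1+\Delta)$ in the intermediate inequality, which is exactly what is needed for the constants to collect to $\alpha\fopt(\Ihat)+2(\beta-1)$ before invoking \cref{thm:lingroup-opt-compare}. With your $+3$ variant the additive constant becomes $4+\Delta_H$, which overshoots, so the chain does not collapse to the stated $\beta$ without enlarging it. Everything else in your sketch --- the 4-stage claim, the runtime breakdown, the discard-area bound from \cref{thm:discard-area-ub} combined with \cref{thm:shelves}(d), the use of \cref{thm:greedy-pack-bins}, and the cross-term absorption $1+3\eps+4\Delta+7\Delta\eps\le(1+4\Delta)(1+3\eps)$ --- faithfully tracks the paper's proof of \cref{thm:thin-gpack-strong}.
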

\begin{proof}
The discarded items are packed using NFDH or NFDW, which output a 2-stage packing.
Since $\greedyPack$ outputs a 2-stage packing of the shelves
and the packing of items into the shelves is a 2-stage packing,
the bin packing of non-discarded items is a 4-stage packing.
The time taken by $\thinGPack$ is at most $O((1/\eps)^{O(1/\eps)} + n\log n)$.

Suppose $\greedyPack$ uses at most $m$ bins. Then by \cref{thm:greedy-pack-bins},
$m \le 4\fopt(\Ihat)/3 + 8/3$.
Let $W^d$ and $H^d$ be the items discarded from $W$ and $H$, respectively.
By \cref{thm:discard-area-ub} and \cref{thm:shelves}(d),
$a(W^d) < \eps\fopt(\Ihat) + \delta_H(1 + \eps)(m + 1/\eps^2)$
and $a(H^d) < \eps\fopt(\Ihat) + \delta_W(1 + \eps)(m + 1/\eps^2)$.

By \cref{thm:nfdh-wide}, the number of bins used by $\thinGPack_{\eps}(I)$ is less than
$m + \frac{2a(W^d)+1}{1-\delta_H} + \frac{2a(H^d)+1}{1-\delta_W}
 \le (1 + 4\Delta(1+\eps))m + 4\eps(1+\Delta)\fopt(\Ihat)
    + 2(1+\Delta) + 4\Delta(1+\eps)/\eps^2
 \le \alpha\fopt(\Ihat) + 2(\beta - 1)
< \alpha(1+\eps)\opt(I) + 2\beta.$
The last inequality follows from \cref{thm:lingroup-opt-compare}.
\end{proof}

Now we conclude with the proof of \cref{thm:thin-gpack}.

\begin{proof}[Proof of \cref{thm:thin-gpack}]
This is a simple corollary of \cref{thm:thin-gpack-strong}, where
$\delta \le 1/2$ gives us $\Delta \le 2\delta$,
$\alpha(1+\eps) \le (4/3)(1+8\delta)(1+7\eps)$,
and $\beta \le 4/\eps^2 + 15$.
\end{proof}

\section{Almost-Optimal Bin Packing of \Thin{} Rectangles}
\label{sec:thin-bp}

In this section, we will describe the algorithm $\thinCPack$. $\thinCPack$ takes as input
a set $I$ of items and a parameter $\eps \in (0, 1/2]$, where $\eps^{-1} \in \mathbb{Z}$.
We will prove that $\thinCPack$ has AAR $1+20\eps$ when $\delta$ is sufficiently small.
$\thinCPack$ works roughly as follows:
\begin{enumerate}
\item Invoke the subroutine $\round(I)$ (described in \cref{sec:thin-bp:round}).
    $\round(I)$ returns a pair $(\Itild, \Imed)$.
    Here $\Imed$, called the set of \emph{medium items}, has low total area,
    so we can pack it in a small number of bins.
    $\Itild$, called the set of \emph{rounded items}, is obtained by
    rounding up the width or height of each item in $I - \Imed$,
    so that $\Itild$ has special properties that help us pack it easily.
\item Compute the optimal \emph{fractional compartmental} bin packing of $\Itild$
    (we will define \emph{compartmental} and \emph{fractional} later).
\item Use this packing of $\Itild$ to obtain a packing of $I$
    that uses slightly more number of bins.
\end{enumerate}

To bound the AAR of $\thinCPack$, we will prove a structural theorem
in \cref{sec:thin-bp:struct}, i.e., we will prove that
the optimal fractional compartmental packing of $\Itild$ uses close to $\opt(I)$ bins.

\subsection{Classifying and Rounding Items}
\label{sec:thin-bp:round}
\label{sec:thin-bp:remmed}

We now describe the algorithm $\round$ and
show that its output satisfies important properties.

First, we will find a set $\Imed \subseteq I$
and positive constants $\epsLarge$ and $\epsSmall$
such that $a(\Imed) \le \eps a(I)$, $\epsSmall \ll \epsLarge$,
and $I - \Imed$ is $(\epsSmall, \epsLarge]$-free, i.e.,
no item in $I - \Imed$ has its width or height in the interval $(\epsSmall, \epsLarge]$.
Then we can remove $\Imed$ from $I$ and pack it separately
into a small number of bins using NFDH. We will see that
the $(\epsSmall, \epsLarge]$-freeness of $I - \Imed$
will help us pack $I - \Imed$ efficiently.
Specifically, we require $\epsLarge \le \eps$, $\epsLarge^{-1} \in \mathbb{Z}$,
and $\epsSmall = f(\epsLarge)$, where $f(x) \defeq \frac{\eps x}{104(1+1/(\eps x))^{2/x-2}}$.
We explain this choice of $f$ in \cref{sec:thinCPack}.
Intuitively, such an $f$ ensures that $\epsSmall \ll \epsLarge$
and $\epsSmall^{-1} \in \mathbb{Z}$.
For $\thinCPack$ to work, we require $\delta \le \epsSmall$.
Finding such an $\Imed$ and $\epsLarge$ is a standard technique \cite{JansenP2013, bansal2014binpacking},
so we defer the details to \cref{sec:thin-bp-extra:remmed}.

Next, we classify the items in $I - \Imed$ into three disjoint classes:
\begin{itemize}[noitemsep]
\item Wide items: $W \defeq \{i \in I: w(i) > \epsLarge \textrm{ and } h(i) \le \epsSmall \}$.
\item Tall items: $H \defeq \{i \in I: w(i) \le \epsSmall \textrm{ and } h(i) > \epsLarge \}$.
\item Small items: $S \defeq \{i \in I: w(i) \le \epsSmall \textrm{ and } h(i) \le \epsSmall \}$.
\end{itemize}

We will now use \emph{linear grouping}~\cite{bp-aptas,kenyon1996strip}
to round up the widths of items $W$ and the heights of items $H$
to get items $\Wtild$ and $\Htild$, respectively.
By \cref{thm:lingroup-n} in \cref{sec:lingroup},
items in $\Wtild$ have at most $1/(\eps\epsLarge)$ distinct widths
and items in $\Htild$ have at most $1/(\eps\epsLarge)$ distinct heights.
Let $\Itild \defeq \Wtild \cup \Htild \cup S$.

\begin{definition}[Fractional packing]
Suppose we are allowed to slice wide items in $\Itild$ using horizontal cuts,
slice tall items in $\Itild$ using vertical cuts and slice
small items in $\Itild$ using both horizontal and vertical cuts.
For any $\Xtild \subseteq \Itild$, a bin packing of the slices of $\Xtild$
is called a \emph{fractional packing} of $\Xtild$.
The optimal fractional packing of $\Xtild$ is denoted by $\fopt(\Xtild)$.
\end{definition}

\begin{lemma}
\label{thm:thin-bp:lingroup-opt-compare}
$\fopt(\Itild) < (1+\eps)\opt(I) + 2$.
\end{lemma}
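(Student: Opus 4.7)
The plan is to follow the same strategy used to prove \cref{thm:lingroup-opt-compare} in \cref{sec:guill-thin}; the statements are syntactically identical, and the only new ingredient in the present setting is the presence of small items $S$, which are left untouched by $\round$ and therefore do not contribute to any rounding overhead.

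First, I would start with an optimal (non-fractional) packing of $I$ into $\opt(I)$ bins. This is trivially a fractional packing of $I$, and I will transform it into a fractional packing of $\Itild = \Wtild \cup \Htild \cup S$ by (i) extracting $\Imed$ and repacking it separately, and (ii) applying linear grouping to $W$ and $H$. The small items $S$ stay where they are.

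For the medium items, the key property guaranteed by $\round$ is $a(\Imed) \le \eps\, a(I) \le \eps\opt(I)$. Since $\Imed \subseteq I$ and $I$ is $\delta$-skewed with $\delta \le \epsSmall$, each item of $\Imed$ is itself skewed, so \cref{thm:nfdh-wide-tall} (NFDH on wide/tall items) packs $\Imed$ into at most $\eps\opt(I) + O(1)$ additional bins. For the wide items, I would apply the linear grouping repack lemma (\cref{thm:lingroup-repack} in the linear grouping appendix) to conclude that $\Wtild$ admits a fractional packing using the same bins currently occupied by $W$ plus one extra bin that absorbs the largest group. An identical argument in the coordinate-swapped direction handles $H \to \Htild$. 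Summing the three contributions gives $\fopt(\Itild) \le \opt(I) + \eps\opt(I) + 2 = (1+\eps)\opt(I) + 2$, which is the claimed bound.

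The main obstacle is keeping the additive overhead tight: the NFDH bins for $\Imed$ carry $O(1)$ additive terms (from \cref{thm:nfdh-wide,thm:nfdh-tall}) that must be absorbed somewhere, and the two applications of linear grouping each contribute their own $+1$. The standard way to resolve this, already used in \cref{sec:thin-bp:remmed} when choosing $\Imed$ and $\epsLarge$, is to enlarge $\Imed$ slightly (or enlarge the threshold $\eps$ used inside $\round$) so that the strict inequality $a(\Imed) \le \eps a(I)$ together with NFDH's overhead can all be charged against a single $\eps\opt(I)$ slack, leaving only the $+2$ from the two linear groupings. Apart from this accounting, the argument is a direct adaptation of the guillotine-case proof.
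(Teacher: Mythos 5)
There is a conceptual error in your accounting that happens to cancel out numerically but makes the argument as written incorrect. The set $\Itild$ is the rounded version of $I - \Imed$ and does \emph{not} contain $\Imed$. To bound $\fopt(\Itild)$ you therefore do not need to pack $\Imed$ at all; the bins you allocate to NFDH-packing $\Imed$ are extraneous here. (Those bins are indeed needed eventually, but they are charged in the proof of the final $\thinCPack$ guarantee, not in this lemma.) The $\eps\opt(I)$ term in the claimed bound does not come from $\Imed$; it comes from \cref{thm:lingroup-repack} itself, whose conclusion is already $(1+\eps)\opt(W\cup H\cup S) + 2$, not ``$\opt + 2$''.

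That second point is the other gap in your reasoning: you assert that each application of linear grouping costs only ``one extra bin that absorbs the largest group,'' but the largest width-group $\What_1'$ alone has total height $\eps\epsLarge h_L$ where $h_L = \hsum(W)$, and packing it needs $\smallceil{\eps\epsLarge h_L}$ bins, which can be $\Theta(\eps\opt)$, not $O(1)$. The proof of \cref{thm:lingroup-repack} shows precisely this: the shifted groups fit in $m_1 \le \opt(W\cup H\cup S)$ bins, and the two extremal groups need $m_2 < \eps\opt(W\cup H\cup S) + 2$ additional bins. Because you undercount the linear-grouping cost by $\eps\opt$ and simultaneously overcount by packing the irrelevant $\Imed$ (which you also underestimate at $\eps\opt$ rather than the $\approx 2\eps\opt$ that NFDH actually yields by \cref{thm:nfdh-wide-tall}), your final number matches the lemma, but not for valid reasons. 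The correct proof is a one-liner: apply \cref{thm:lingroup-repack} to $W\cup H\cup S = I - \Imed$, obtaining a fractional packing of $\Wtild\cup\Htild\cup S = \Itild$ into fewer than $(1+\eps)\opt(I-\Imed)+2 \le (1+\eps)\opt(I)+2$ bins.
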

\begin{proof}
Directly follows from \cref{thm:lingroup-repack} in \cref{sec:lingroup}.
\end{proof}

\subsection{Structural Theorem}
\label{sec:thin-bp:struct}

We will now define compartmental packing
and prove the structural theorem, which says that
the number of bins in the optimal fractional compartmental packing of $\Itild$
is roughly equal to $\fopt(\Itild)$.

We first show how to \emph{discretize} a packing, i.e.,
we show that given a fractional packing of items in a bin,
we can remove a small fraction of tall and small items
and shift the remaining items leftwards so that the left and right edges
of each wide item belong to a constant-sized set $\Tcal$,
where $|\Tcal| \le (1+1/\eps\epsLarge)^{2/\epsLarge - 2}$.
Next, we define \emph{compartmental} packing and show how to convert
a discretized packing to a compartmental packing.

For any rectangle $i$ packed in a bin, let $x_1(i)$ and $x_2(i)$ denote the $x$-coordinates
of its left and right edges, respectively, and let $y_1(i)$ and $y_2(i)$
denote the $y$-coordinates of its bottom and top edges, respectively.
Let $R$ be the set of distinct widths of items in $\Wtild$.
Given the way we rounded items, $|R| \le 1/\eps\epsLarge$.
Recall that $\epsLarge \le \eps \le 1/2$ and $\epsLarge^{-1}, \eps^{-1} \in \mathbb{Z}$.

\begin{theorem}
\label{thm:disc-hor-pos}
Given a fractional packing of items $\Jtild \subseteq \Itild$ into a bin,
we can remove tall and small items of total area less than $\eps$
and shift some of the remaining items to the left such that for every wide item $i$,
we get $x_1(i), x_2(i) \in \Tcal$.
\end{theorem}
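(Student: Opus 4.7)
The plan is to transform the given fractional packing in two phases: first, discard a small-area set of tall and small items via an averaging/shifting argument; second, push the remaining items leftward so that every wide item's horizontal edges snap onto positions in $\Tcal$.

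For the first phase, I would partition the interior of the bin into disjoint vertical slabs of small width (on the order of $\epsLarge$), indexed by a shift parameter $\sigma \in [0, \epsLarge)$ chosen from a finite grid. Since the tall and small items have total area at most $1$ in the bin, an averaging argument over the shift parameter produces some $\sigma$ for which the tall and small items intersecting the chosen slab have total area strictly less than $\eps$. These are the items that get discarded, and this meets the area bound in the theorem statement. The slab's role is to ensure that in the subsequent sweep, tall items do not obstruct the leftward motion of wide items in any way that would prevent the targeted discretization.

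For the second phase, I would process wide items in order of increasing $y_1$ (bottom to top). For each wide item $i$, slide it leftward as far as possible without crossing the left edge of the bin, overlapping a previously-processed wide item, or overlapping a surviving tall item. Because of the removal in Phase 1, whenever a wide item's leftward motion is stopped before it reaches the wall, the blocker can be taken to be a previously-processed wide item $j$ whose right edge touches $i$'s left edge, rather than a tall item. Thus after the sweep, $x_1(i)$ equals either $0$ or $x_2(j) = x_1(j) + w(j)$ for some such $j$. Unrolling the recursion along a chain $j_\ell \to j_{\ell-1} \to \cdots \to j_1 = i$, we get $x_1(i) = \sum_{r=2}^{\ell} w(j_r)$, a sum of elements of $R$. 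Since every wide item contributes a width strictly greater than $\epsLarge$, and the chain lies inside a bin of width $1$, the chain length is $O(1/\epsLarge)$; a sharper accounting (e.g., combining chains of blockers above and below $i$, to allow discretizing both $x_1(i)$ and $x_2(i)$ symmetrically) yields a sum of at most $2/\epsLarge - 2$ elements of $R \cup \{0\}$, placing both edges in $\Tcal$ with the required size bound $(1+1/(\eps\epsLarge))^{2/\epsLarge-2}$.

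The main obstacle is the interplay between Phase 1 and Phase 2: Phase 1 must be engineered so that no surviving tall item blocks the Phase 2 sweep in a way that creates a non-discretizable gap, while still bounding the removed area by $\eps$. Secondary but nontrivial is the precise combinatorial argument that yields the exponent $2/\epsLarge - 2$ rather than the naive $1/\epsLarge$; this appears to require processing wide items in a more symmetric manner (handling left-blockers and right-blockers together) and tracking how the discretization of $x_1$ and $x_2$ contribute separately to the length of the chain of widths used.
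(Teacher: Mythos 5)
You have correctly located the central difficulty---guaranteeing that no surviving tall item blocks the Phase 2 sweep---but your Phase 1 does not resolve it, and no single-shot averaging over a fixed $\sigma$-shifted slab grid can. The positions at which wide items should come to rest are not known in advance: a wide item $i$ stops at a position determined recursively by the final positions of the wide items below it in the $\prec$ order (where $u\prec v$ iff $x_2(u)\le x_1(v)$), and those positions in turn depend on which tall/small items happened to stop earlier wide items from sliding. There is no \emph{fixed} collection of thin strips of total width $<\eps$ that clears exactly the right places, because where ``the right places'' are is an output of the cascade, not an input. The paper breaks this circularity by interleaving $1/\epsLarge-1$ rounds of strip-removal and compaction: in round $j$ it removes strips of width $\delta_j = \eps\epsLarge/t_{j-1}$ at each position of $T_{j-1}$ (so the per-round budget is $|T_{j-1}|\delta_j\le\eps\epsLarge$), then compacts tall/small items leftward, snaps each level-$j$ wide item onto the grid $S_j$, and sets $T_j = \{x+y : x\in S_j,\, y\in R\cup\{0\}\}$. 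The strip positions used in round $j$ are thus produced by round $j-1$; your two-phase structure has no analogue of this feedback and therefore cannot justify the ``no tall blocker'' claim.

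There is also a misconception in Phase 2 about where the exponent $2/\epsLarge-2$ comes from. It is not a chain length and is not obtained by symmetrizing left- and right-blockers. It comes from the recursion $|T_j|\le(1+1/\eps\epsLarge)^2|T_{j-1}|$: one factor of $(1+1/\eps\epsLarge)$ is the $\delta_j$-grid that $S_j$ adds to $T_{j-1}$, the other is the Minkowski sum with $R\cup\{0\}$, compounded over $1/\epsLarge-1$ rounds. Relatedly, the final resting position of a wide item is \emph{not} a pure sum of widths of its $\prec$-predecessors: tall and small items that cannot fully compact (because they are pinned above or below by other wide items) do stop wide items in general, which is exactly why the paper is content to snap $x_1(i)$ onto the grid $S_j$ rather than onto an exact sum of widths. (Your bottom-to-top processing order is also not quite right---a $\prec$-predecessor of $i$ need not have smaller $y_1$---but this is a minor fix compared with the Phase 1 gap.)
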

\begin{proof}
For wide items $u$ and $v$ in the bin, we say that $u \prec v$ iff
the right edge of $u$ is to the left of the left edge of $v$.
Formally $u \prec v \iff x_2(u) \le x_1(v)$.
We call $u$ a \emph{predecessor} of $v$.
A sequence $[i_1, i_2, \ldots, i_k]$ such that $i_1 \prec i_2 \prec \ldots \prec i_k$
is called a \emph{chain} ending at $i_k$.
For a wide item $i$, define $\level(i)$ as the number of items in the longest chain
ending at $i$. Formally, $\level(i) \defeq 1$ if $i$ has no predecessors,
and $\left(1 + \max_{j \prec i} \level(j)\right)$ otherwise.
Let $W_j$ be the items at level $j$, i.e., $W_j \defeq \{i: \level(i) = j\}$.
Note that the level of an item can be at most $1/\epsLarge-1$,
since each wide item has width more than $\epsLarge$.

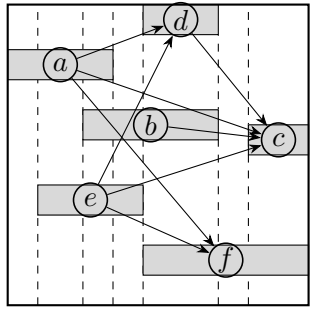
\begin{figure}[htb]
\centering
\ifcsname pGameL\endcsname\else\newlength{\pGameL}\fi
\setlength{\pGameL}{0.2cm}
\tikzset{bin/.style={draw,thick}}
\tikzset{item/.style={draw,fill={black!15}}}
\tikzset{myarrow/.style={draw,->,>={Stealth}}}
\tikzset{mynode/.style={pos=0.5,inner sep=0,minimum size=0.45cm,shape=circle,semithick,draw}}
\tikzset{cutline/.style={draw,dashed}}
\begin{tikzpicture}

\path[cutline] (2\pGameL, 0\pGameL) -- +(0, 20\pGameL);
\path[cutline] (5\pGameL, 0\pGameL) -- +(0, 20\pGameL);
\path[cutline] (7\pGameL, 0\pGameL) -- +(0, 20\pGameL);
\path[cutline] (9\pGameL, 0\pGameL) -- +(0, 20\pGameL);
\path[cutline] (14\pGameL, 0\pGameL) -- +(0, 20\pGameL);
\path[cutline] (16\pGameL, 0\pGameL) -- +(0, 20\pGameL);

\path[item] (0\pGameL, 15\pGameL) rectangle +(7\pGameL, 2\pGameL) node[mynode] (wa) {$a$};
\path[item] (5\pGameL, 11\pGameL) rectangle +(9\pGameL, 2\pGameL) node[mynode] (wb) {$b$};
\path[item] (16\pGameL, 10\pGameL) rectangle +(4\pGameL, 2\pGameL) node[mynode] (wc) {$c$};
\path[item] (9\pGameL, 18\pGameL) rectangle +(5\pGameL, 2\pGameL) node[mynode] (wd) {$d$};
\path[item] (2\pGameL, 6\pGameL) rectangle +(7\pGameL, 2\pGameL) node[mynode] (we) {$e$};
\path[item] (9\pGameL, 2\pGameL) rectangle +(11\pGameL, 2\pGameL) node[mynode] (wf) {$f$};

\path[bin] (0\pGameL, 0\pGameL) rectangle (20\pGameL, 20\pGameL);

\path[myarrow] (wa) -- (wc);
\path[myarrow] (wa) -- (wd);
\path[myarrow] (wa) -- (wf);
\path[myarrow] (wb) -- (wc);
\path[myarrow] (wd) -- (wc);
\path[myarrow] (we) -- (wc);
\path[myarrow] (we) -- (wd);
\path[myarrow] (we) -- (wf);
\end{tikzpicture}

\caption[Relation $\prec$ among items in a bin]%
{Example illustrating the $\prec$ relationship between wide items in a bin.
An edge is drawn from $u$ to $v$ iff $u \prec v$.
Here $W_1 = \{a, e, b\}$, $W_2 = \{d, f\}$ and $W_3 = \{c\}$.}
\label{fig:precedence-graph}
\end{figure}

We will describe an algorithm for discretization.
But first, we need to introduce two recursively-defined set families
$(S_1, S_2, \ldots)$ and $(T_0, T_1, \ldots)$.
Let $T_0 \defeq \{0\}$ and $t_0 \defeq 1$. For any $j > 0$, define
$t_j \defeq (1+1/\eps\epsLarge)^{2j},\, \delta_j \defeq \eps\epsLarge/t_{j-1},\,
S_j \defeq T_{j-1} \cup \{k\delta_j: k \in \mathbb{Z}, 0 \le k < 1/\delta_j\},\,
T_j \defeq \{x + y: x \in S_j, y \in R \cup \{0\}\}$.
Note that $\forall j > 0$, we have $T_{j-1} \subseteq S_j \subseteq T_j$
and $\delta_j^{-1} \in \mathbb{Z}$.
Define $\Tcal \defeq T_{1/\epsLarge-1}$.

Our discretization algorithm proceeds in stages, where in the $j\Th$ stage,
we apply two transformations to the items in the bin,
called \emph{strip-removal} and \emph{compaction}.
\\ \textbf{Strip-removal}: For each $x \in T_{j-1}$,
consider a strip of width $\delta_j$ and height 1 in the bin
whose left edge has coordinate $x$.
Discard the slices of tall and small items inside the strips.
\\ \textbf{Compaction}: Move all tall and small items as much towards the left as possible
(imagine a gravitational force acting leftwards on the tall and small items)
while keeping the wide items fixed.
Then move each wide item $i \in W_j$ leftwards till $x_1(i) \in S_j$.

Observe that the algorithm maintains the following invariant:
\textsl{after $k$ stages, for each $j \in [k]$,
each item $i \in W_j$ has $x_1(i) \in S_j$ (and hence $x_2(i) \in T_j$).}
This ensures that after the algorithm ends, $x_1(i), x_2(i) \in \Tcal$.
All that remains to prove is that the total area of items discarded
during strip-removal is at most $\eps$ and that compaction is always possible.

\begin{lemma}
For all $j \ge 0$, $|T_j| \le t_j$.
\end{lemma}
\begin{proof}
We will prove this by induction. The base case holds because $|T_0| = t_0 = 1$.
Now  assume $|T_{j-1}| \le t_{j-1}$. Then
$|T_j| \le (|R|+1)|S_j|
    \le \left(\frac{1}{\eps\epsLarge}+1\right)\left(|T_{j-1}| + \frac{1}{\delta_j}\right)
    \le \left(\frac{1}{\eps\epsLarge}+1\right)^2 t_{j-1}
    = t_j.$
\end{proof}

Therefore, $|\Tcal| \le t_{1/\epsLarge-1} = (1+1/\eps\epsLarge)^{2/\epsLarge - 2}$.

\begin{lemma}
Items discarded by strip-removal (across all stages)
have total area less than $\eps$.
\end{lemma}
\begin{proof}
In the $j\Th$ stage, we create $|T_{j-1}|$ strips,
and each strip has total area at most $\delta_j$.
Therefore, the area discarded in the $j\Th$ stage is at most
$|T_{j-1}|\delta_j \le t_{j-1}\delta_j = \eps\epsLarge$.
Since there can be at most $1/\epsLarge-1$ stages,
we discard an area of less than $\eps$ across all stages.
\end{proof}

\begin{lemma}
Compaction always succeeds, i.e., in the $j\Th$ stage, while moving
item $i \in W_j$ leftwards, no other item will block its movement.
\end{lemma}
\begin{proof}
Let $i \in W_j$. Let $z$ be the $x$-coordinate of the left edge of the strip
immediately to the left of item $i$, i.e.,
$z \defeq \max(\{x \in T_{j-1}: x \le x_1(i)\})$.
For any wide item $i'$, we have $x_2(i') \le x_1(i) \iff i' \prec i \iff \level(i') \le j-1$.
By our invariant, we get
$\level(i') \le j-1 \implies x_2(i') \in T_{j-1} \implies x_2(i') \le z$.
Therefore, for every wide item $i'$, $x_2(i') \not\in (z, x_1(i)]$.

In the $j\Th$ strip-removal, we cleared the strip $[z, z+\delta_j] \times [0, 1]$.
If $x_1(i) \in [z, z+\delta_j]$, then $i$ can freely move to $z$,
and $z \in T_{j-1} \subseteq S_j$.
Since no wide item has its right edge in $(z, x_1(i)]$, if $x_1(i) > z + \delta_j$,
all the tall and small items whose left edge lies in $[z+\delta_j, x_1(i)]$
will move leftwards by at least $\delta_j$ during compaction.
Hence, there would be an empty space of width at least $\delta_j$
to the left of item $i$
(see \cref{fig:compaction-zoom}).
Therefore, we can move $i$ leftwards to make
$x_1(i)$ a multiple of $\delta_j$, and then $x_1(i)$ would belong to $S_j$.
\end{proof}

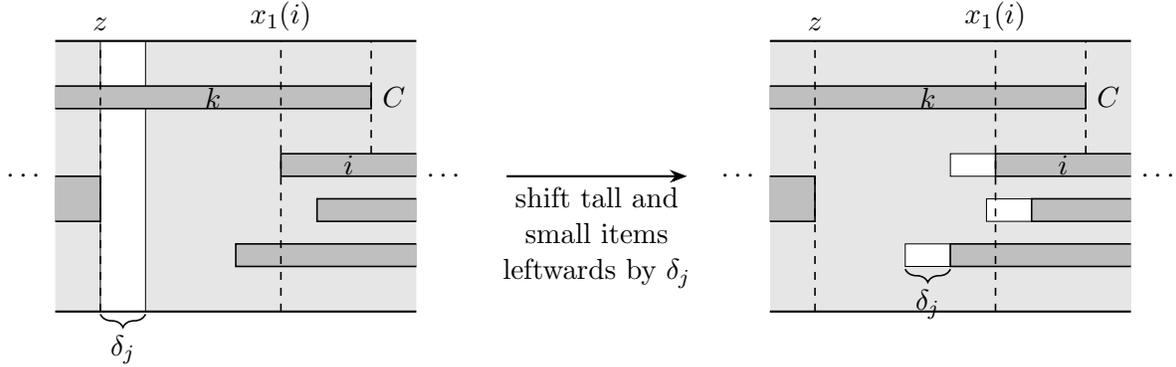
\begin{figure}[!htb]
\centering
\ifcsname myu\endcsname\else\newlength{\myu}\fi
\setlength{\myu}{0.6cm}
\tikzset{mypic/.pic={
\path[item] (3\myu, 1\myu) rectangle (7\myu, 1.5\myu);
\path[item-boundary] (7\myu, 1\myu) -- (3\myu, 1\myu) -- (3\myu, 1.5\myu) -- (7\myu, 1.5\myu);
\path[item] (4.8\myu, 2\myu) rectangle (7\myu, 2.5\myu);
\path[item-boundary] (7\myu, 2\myu) -- (4.8\myu, 2\myu) -- (4.8\myu, 2.5\myu) -- (7\myu, 2.5\myu);
\path[item] (4\myu, 3\myu) rectangle (7\myu, 3.5\myu) node[pos=0.5] {$i$};
\path[item-boundary] (7\myu, 3\myu) -- (4\myu, 3\myu) -- (4\myu, 3.5\myu) -- (7\myu, 3.5\myu);
\path[item] (-1\myu, 2\myu) rectangle (0\myu, 3\myu);
\path[item-boundary] (-1\myu, 2\myu) -- (0\myu, 2\myu) -- (0\myu, 3\myu) -- (-1\myu, 3\myu);
\path[item] (-1\myu, 4.5\myu) rectangle (6\myu, 5\myu) node[pos=0.5] {$k$};
\path[item-boundary] (-1\myu, 4.5\myu) -- (6\myu, 4.5\myu) -- (6\myu, 5\myu) -- (-1\myu, 5\myu);

\draw[thick]
    (-1\myu, 0\myu) -- (7\myu, 0\myu)
    (-1\myu, 6\myu) -- (7\myu, 6\myu);
\node[anchor=east] at (-1\myu, 3\myu) {$\cdots$};
\node[anchor=west] at (7\myu, 3\myu) {$\cdots$};
\draw[dashed,semithick]
    (0\myu, 0\myu) -- (0\myu, 6\myu)
    (4\myu, 0\myu) -- (4\myu, 6\myu)
    (6\myu, 3.5\myu) -- (6\myu, 6\myu);
\node[anchor=south] at (0\myu, 6\myu) {$z$};
\node[anchor=south] at (4\myu, 6\myu) {$x_1(i)$};
\node at (6.5\myu, 4.75\myu) {$C$};
}}
\begin{tikzpicture}[
item-boundary/.style={draw,semithick},
item/.style={fill={black!25}},
myarrow/.style={->,>={Stealth},thick},
mybrace/.style = {decoration={amplitude=5pt,brace,mirror,raise=1pt},semithick,decorate},
]
\begin{scope}
\path[fill={black!10}]
    (-1\myu, 0\myu) rectangle (0\myu, 6\myu)
    (1\myu, 0\myu) rectangle (7\myu, 6\myu);
\draw[very thin] (0\myu, 0\myu) -- (0\myu, 6\myu) (1\myu, 0\myu) -- (1\myu, 6\myu);
\draw[mybrace] (0\myu, 0\myu) -- node[below=5pt] {$\delta_j$} (1\myu, 0\myu);
\pic at (0\myu, 0\myu) {mypic};
\end{scope}
\draw[myarrow] (9\myu, 3\myu) -- (13\myu, 3\myu)
    node[pos=0.5,anchor=north,align=center,text width=3cm]
    {shift tall and small items leftwards by $\delta_j$};
\begin{scope}[xshift={9.5cm}]
\path[fill={black!10}] (-1\myu, 0\myu) rectangle (7\myu, 6\myu);
\draw[very thin,fill=white]
    (2\myu, 1\myu) rectangle (3\myu, 1.5\myu)
    (3.8\myu, 2\myu) rectangle (4.8\myu, 2.5\myu)
    (3\myu, 3\myu) rectangle (4\myu, 3.5\myu);
\pic at (0\myu, 0\myu) {mypic};
\draw[mybrace] (2\myu, 1\myu) -- node[below=5pt] {$\delta_j$} (3\myu, 1\myu);
\end{scope}
\end{tikzpicture}

\caption[Creation of empty space during compaction.]%
{This figure shows a region in the bin in the vicinity of item $i \in W_j$.
It illustrates how shifting tall and small items during compaction in the $j\Th$ stage
creates a free space of width $\delta$ to the left of some wide items, including $i$.
Wide items are shaded dark and the lightly shaded region
potentially contains tall and small items.
Note that some tall and small items in the region $C$
may be unable to shift left because item $k$ is blocking them.
All other tall and small items in this figure to the right of $z$
can shift left by $\delta_j$.}
\label{fig:compaction-zoom}
\end{figure}

Hence, compaction always succeeds and we get $x_1(i), x_2(i) \in \Tcal$
for each wide item $i$.
\end{proof}

\begin{definition}[Compartmental packing]
\label{defn:thin-bp:compartmental}
Consider a bin with some items packed into it. A \emph{compartment} $C$
is defined as a rectangular region in the bin satisfying the following properties:
\begin{itemize}[noitemsep]
\item $x_1(C), x_2(C) \in \Tcal$.
\item $y_1(C), y_2(C)$ are multiples of $\epsCont \defeq \eps\epsLarge/6|\Tcal|$.
\item $C$ does not contain both wide items and tall items.
\item If $C$ contains tall items, then $x_1(C)$ and $x_2(C)$
    are consecutive values in $\Tcal$.
\end{itemize}
If a compartment contains a wide item, it is called a \emph{wide compartment}.
Otherwise it is called a \emph{tall compartment}.
A packing of items into a bin is called \emph{compartmental}
iff there is a set of non-overlapping \emph{compartments} in the bin
such that each wide or tall item lies completely inside some compartment,
and there are at most $\nW \defeq 3(1/\epsLarge-1)|\Tcal| + 1$ wide compartments
and at most $\nH \defeq (1/\epsLarge-1)|\Tcal|$ tall compartments in the bin.
A packing of items into multiple bins is called compartmental iff
each bin is compartmental.
\end{definition}
Note that small items can be packed both inside and outside compartments.

The following two results are proved in \cref{sec:thin-bp-extra:compartmentalize}
using standard techniques.

\begin{restatable}{lemma}{rthmCompartmentalize}
\label{thm:thin-bp:compartmentalize}
Suppose $x_1(i), x_2(i) \in \Tcal$ for each wide item $i$ in a bin.
Then by removing wide and small items of area less than $\eps$,
we can get a compartmental packing of the remaining items.
\end{restatable}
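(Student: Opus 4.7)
The plan is to complement the given $\Tcal$-grid on the $x$-axis with a new grid of multiples of $\epsCont$ on the $y$-axis, snap each wide item to this combined grid, form wide compartments around the snapped items, carve tall compartments from the complementary region by splitting along $\Tcal$-lines, and discard a small area of wide and small items to resolve snapping conflicts.

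I would first check that $\epsSmall < \epsCont$: the definitions give $\epsSmall = \eps\epsLarge/(104(1+1/\eps\epsLarge)^{2/\epsLarge-2})$ and $\epsCont = \eps\epsLarge/(6|\Tcal|)$, while $|\Tcal| \le (1+1/\eps\epsLarge)^{2/\epsLarge-2}$, hence $\epsSmall/\epsCont \le 6/104 < 1$. Consequently every wide item has $h(i) \le \epsSmall < \epsCont$ and its vertical extent meets at most two consecutive $\epsCont$-rows. I would snap each wide item $i$'s $y$-interval outward to the nearest multiples of $\epsCont$, enlarging its height by at most $2\epsCont$ and placing all four corners on the combined grid.

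Next I would form wide compartments as maximal axis-aligned rectangles on the combined grid that are covered by snapped wide items sharing identical $y$-boundaries. Using the level structure from \cref{thm:disc-hor-pos}, wide items partition into at most $1/\epsLarge-1$ levels with pairwise $x$-disjoint items within each level. Within each $\Tcal$-column, each level contributes at most three distinct snapped $y$-bands (the one or two rows physically containing the item, plus one buffer row to absorb snapping slack), yielding the bound $\nW = 3(1/\epsLarge-1)|\Tcal|+1$. Snapping conflicts — where two enlarged rectangles overlap — force the removal of one wide item per conflict; this can only happen between items whose original $y$-intervals were within $\epsCont$ of each other, so each removal costs area at most $\epsCont$ times the item's width.

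For the tall compartments, I would subtract the union of wide compartments from the bin, obtaining a union of axis-aligned rectangles with edges on both grids. Splitting each such rectangle along every internal $\Tcal$-line produces rectangles whose $x$-width equals a consecutive pair in $\Tcal$; these are the tall compartments, numbering at most $\nH = (1/\epsLarge-1)|\Tcal|$ by the same level-based counting. Since tall items have width at most $\epsSmall$, which is smaller than the minimum $\Tcal$-gap by a factor of at least $104(1+1/\eps\epsLarge)^2$ (by direct comparison of the defining expressions), any tall item that originally straddled a $\Tcal$-line fits comfortably in an adjacent tall compartment and can be horizontally shifted into it without further removal.

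The hardest step will be bounding the total discarded area. Each wide-item discard costs at most $\epsCont$ times the item's width, and each discarded small item sits in a buffer strip of height $\epsCont$ between compartments; summing over the $O(\nW)$ affected regions gives a total discarded area of $O(\epsCont \cdot \nW) = O(\eps\epsLarge/\epsLarge) = O(\eps)$, which can be made strictly less than $\eps$ with careful constants. Turning this high-level counting into a rigorous argument — assigning every retained wide and tall item to exactly one compartment and verifying no compartment becomes overfilled — is the bulk of the bookkeeping work.
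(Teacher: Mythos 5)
The high‐level plan (grid the $y$‐axis by $\epsCont$, round wide items, carve the complement into tall compartments) is in the right spirit, but two steps in your argument are broken and the overall ordering makes the proof harder than it needs to be.

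\textbf{The level‐based counting is wrong.} You write that ``wide items partition into at most $1/\epsLarge-1$ levels with pairwise $x$-disjoint items within each level.'' The opposite is true: by definition, $u\prec v$ means $x_2(u)\le x_1(v)$, so two items at the \emph{same} level are $\prec$-incomparable, i.e.\ their $x$-projections \emph{overlap}. Consequently an unbounded number of same‐level wide items can cross a single $\Tcal$-column, stacked vertically at distinct heights, and your claim that ``each level contributes at most three distinct snapped $y$-bands'' per column has no justification. The bound $\nW = 3(1/\epsLarge-1)|\Tcal|+1$ does not follow from levels at all. In the paper it comes from a different place: first one identifies \emph{tall cells} (the cells of each column that contain a tall item --- at most $1/\epsLarge-1$ per column since each has height $>\epsLarge$), and then one invokes the lemma that the complement of $k$ rectangles in a bin can be cut into at most $3k+1$ rectangles using horizontal cuts only. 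The factor $3$ is a property of that complementary decomposition, not of the level structure, and it automatically groups vertically adjacent wide items into a single box.

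\textbf{Outward snapping creates overlaps you never resolve.} Enlarging a wide item's $y$-interval by up to $\epsCont$ on each side can make it (and hence the compartment built around it) intrude into a tall item that was touching it; tall items have height $>\epsLarge\gg\epsCont$, so you cannot simply discard them cheaply the way you discard wide items, and you never account for this. You also don't bound how \emph{many} wide items are deleted to resolve wide-wide snapping conflicts --- a long vertical stack of wide items, each within $\epsCont$ of its neighbor, could force $\Omega(1/\epsCont)$ deletions. The paper sidesteps all of this by going the other direction: it first fixes the tall cells, partitions the \emph{complement} into boxes, and then \emph{shrinks} each box by slicing off and discarding the items near its top and bottom edges until those edges are multiples of $\epsCont$. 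Shrinking never creates overlaps, the discard per box is trivially $<2\epsCont$, and $\epsCont:=\eps\epsLarge/(6|\Tcal|)$ is tuned exactly so that the total discard over all $\le 3(1/\epsLarge-1)|\Tcal|+1$ boxes stays below $\eps$. Your observation that $\epsSmall<\epsCont$ is correct and useful (it guarantees a wide item meets at most two $\epsCont$-rows), but the right move is to discard thin slices at box boundaries, not to enlarge items.
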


\begin{restatable}{theorem}{rthmStruct}
\label{thm:struct}
For a set $\Itild$ of $\delta$-\thin{} rounded items,
define $\fcopt(\Itild)$ as the number of bins in
the optimal fractional compartmental packing%
\footnote{A \emph{fractional compartmental} packing of $\Itild$ is a fractional packing
of $\Itild$ that is also compartmental.}\!
of $\Itild$.
Then $\fcopt(\Itild) < (1+4\eps)\fopt(\Itild) + 2$.
\end{restatable}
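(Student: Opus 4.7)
The plan is to start from an optimal fractional packing of $\Itild$ in $m := \fopt(\Itild)$ bins and convert it into a fractional compartmental packing in three phases. First, apply \cref{thm:disc-hor-pos} inside each of the $m$ bins: this shifts the items so that every wide item has both horizontal edges in $\Tcal$, at the cost of discarding tall and small items of total area strictly less than $\eps$ per bin. Second, the hypothesis of \cref{thm:thin-bp:compartmentalize} is now satisfied in each bin, so applying it produces a compartmental packing of the surviving items in the same $m$ bins, discarding wide and small items of area less than $\eps$ per bin. Together the two phases discard items of total area strictly less than $2\eps m$, every one of which is $\delta$-skewed.

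For the third phase, I would repack the discarded items into fresh bins, using NFDH on the discarded wide items, NFDW on the discarded tall items, and NFDH on the discarded small items (which need not lie in any compartment and can be freely placed). By \cref{thm:nfdh-wide-tall}, because the short dimension $\delta \le \epsSmall$ is extremely small, each invocation uses fewer than $2a/(1-\delta) + O(1)$ bins for discard area $a$; summing over the three batches gives fewer than $4\eps m/(1-\delta) + O(1)$, which is at most $4\eps m + 2$ once the additive constants and the $1/(1-\delta)$ factor are absorbed using $\delta \ll \eps$. Each new bin can then be promoted to a compartmental bin: every NFDH (resp.\ NFDW) shelf contains only items of a single class and spans the full width (resp.\ full height) of the bin, so treating each shelf as a single wide (resp.\ tall) compartment---after expanding its short-axis extent outward to the nearest multiples of $\epsCont$ and extending the long-axis endpoints to the nearest enclosing elements of $\Tcal$---yields a legal compartment. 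Combining the two counts gives $\fcopt(\Itild) < m + 4\eps m + 2 = (1+4\eps)\fopt(\Itild) + 2$.

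The main obstacle will be the bookkeeping in the third phase: verifying that the repacked bins genuinely meet the compartmental definition, i.e., that the number of shelves per bin never exceeds the budgets $\nW$ and $\nH$, and that the $\epsCont$-grid rounding and $\Tcal$-endpoint extension do not cause shelves to overflow the bin or to collide. The very aggressive choice $\epsSmall = f(\epsLarge)$, which makes $\epsSmall$ much smaller than $\epsLarge$ and than $\eps\epsLarge/|\Tcal|$, is designed exactly to give enough slack so that NFDH/NFDW shelves fit inside the compartment budget and the grid rounding costs only a negligible fraction of each bin's extent. The rest of the argument is a straightforward composition of \cref{thm:disc-hor-pos,thm:thin-bp:compartmentalize,thm:nfdh-wide-tall}.
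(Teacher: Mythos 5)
Your phases 1 and 2 (apply \cref{thm:disc-hor-pos} and then \cref{thm:thin-bp:compartmentalize} in each of the $m = \fopt(\Itild)$ bins, discarding items of area less than $2\eps$ per bin) are exactly what the paper does. The divergence is entirely in phase 3, and the route you choose there does not reach the stated bound.

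The crucial observation you miss is that $\fcopt$ counts bins in a \emph{fractional} compartmental packing, so the discarded items may still be sliced when you repack them. The paper exploits this directly: it groups the discards into only two sets, $X$ (wide and small) and $Y$ (tall), and for each item in $X$ it repeatedly cuts horizontally and places the halves side by side until the width exceeds $1/2$; since this keeps area fixed, the resulting heights sum to less than $2a(X)$, and because slicing across bin boundaries is also free, stacking them one-over-the-other uses fewer than $2a(X)+1$ bins with \emph{no} $1/(1-\delta)$ factor. Likewise for $Y$. The two additive $+1$'s give exactly the $+2$ in the theorem, and each repack bin is trivially a single compartment (one wide or one tall compartment spanning the bin), so the compartment budgets $\nW, \nH$ are satisfied with room to spare.

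Your NFDH/NFDW replacement loses on both counts. First, \cref{thm:nfdh-wide-tall} (and its analogue for small items) carries a $1/(1-\delta)$ multiplicative factor and per-invocation additive terms of roughly $1/(1-\delta)$, $3$, and another $1/(1-\delta)$; with three batches these sum to at least $5$, not $2$, and you cannot ``absorb'' an additive constant into a smaller one by taking $\delta$ small. Second, and more seriously, the compartmentality of the NFDH/NFDW bins is not established. An NFDH bin can contain an unbounded number of shelves, so ``one compartment per shelf'' may blow past the budgets $\nW$ and $\nH$; and snapping each shelf's $y$-coordinates outward to multiples of $\epsCont$ can make adjacent shelves overlap or spill out of the bin. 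You explicitly identify this as the main obstacle but leave it unresolved, which is precisely where the argument breaks. The fix is not more bookkeeping on NFDH shelves --- it is to abandon NFDH for the discards and use the slice-and-stack repacking that the fractional model hands you for free.
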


\subsection{Packing Algorithm}
\label{sec:thin-bp:algo}

We now describe the $\thinCPack$ algorithm for packing a set $I$
of $\delta$-\thin{} items.
Roughly, $\thinCPack$ first computes $(\Itild, \Imed) \defeq \round(I)$.
It then computes the optimal fractional compartmental packing of $\Itild$
by first guessing a packing of empty compartments into bins
and then fractionally packing the wide and tall items into the compartments
using a linear program.
It then converts the fractional packing of $\Itild$ to a non-fractional packing of $I$
with only a tiny increase in the number of bins.
See \cref{fig:thincpack} for a visual overview of $\thinCPack$.
We defer the details to \cref{sec:enum-configs,sec:feas-lp,sec:greedy-cont,sec:thinCPack}
and simply state the final result.

\begin{restatable}{theorem}{rthmThinCPack}
The number of bins used by $\thinCPack_{\eps}(\Itild)$ is less than
\[ (1+20\eps)\opt(I) +
\frac{1}{13}\left(1 + \frac{1}{\eps\epsLarge}\right)^{2/\epsLarge - 2} + 23. \]
\end{restatable}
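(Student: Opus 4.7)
The plan is to chain together the three main previously-established bounds---linear grouping, the structural theorem, and the NFDH packing guarantees---and then carefully account for the error incurred when converting a fractional compartmental packing into an actual packing of $I$.

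First I would observe that $\thinCPack$ is set up exactly so that it finds the optimal fractional compartmental packing of $\Itild$: the outer loop enumerates all $O(1)$-many configurations of empty compartments (their $\Tcal$-aligned $x$-coordinates and $\epsCont$-aligned $y$-coordinates), and the inner LP fractionally assigns wide/tall items to compartments optimally. Hence the algorithm produces a fractional compartmental packing of $\Itild$ using at most $\fcopt(\Itild)$ bins. By \cref{thm:struct} and \cref{thm:thin-bp:lingroup-opt-compare},
\[
\fcopt(\Itild) < (1+4\eps)\fopt(\Itild) + 2 < (1+4\eps)\bigl((1+\eps)\opt(I) + 2\bigr) + 2.
\]

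Next I would convert this fractional packing into an actual packing of $I-\Imed$ and bound the overhead. Inside each wide compartment the LP slices at most a constant number of wide items horizontally; rounding these down to integrally-packed items loses at most one item per slice, of height $\le \epsSmall$ and width $\le 1$, so the total area discarded per bin is bounded by a quantity of the form $(\text{const})\cdot|\Tcal|\cdot\epsSmall$. The same holds for tall compartments via NFDW. Because $\epsSmall = f(\epsLarge)$ was chosen precisely so that $|\Tcal|\cdot\epsSmall$ is tiny (this is where the factor $1/104$ inside $f$ contributes the $1/13$ in the final additive constant), the discarded wide/tall items can be repacked via NFDH/NFDW in at most $O(\eps)\cdot\fcopt(\Itild) + O(|\Tcal|)$ extra bins, by \cref{thm:nfdh-wide-tall}. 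For the small items $S$, they are packed greedily into the leftover space within compartments and around them; by \cref{thm:nfdh-small} any overflow has area $O(\eps\opt(I))$ and is packed via NFDH into $O(\eps\opt(I))$ further bins. Finally $\Imed$ satisfies $a(\Imed)\le\eps\, a(I)\le\eps\opt(I)$ and is $\delta$-skewed with $\delta\le\epsSmall$, so \cref{thm:nfdh-wide-tall} packs it in at most $2\eps\opt(I)/(1-\delta)+O(1)$ bins.

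Adding the four contributions---the compartmental packing of $\Itild$, the discarded wide/tall slices, the overflowing small items, and $\Imed$---and expanding the product $(1+4\eps)(1+\eps) = 1+5\eps+4\eps^2$, each of the remaining steps contributes at most an additional $O(\eps)\opt(I)$ term. Collecting these constants so that the multiplicative factor fits under $1+20\eps$ and the additive constant fits under $\tfrac{1}{13}|\Tcal|+23$ yields the desired bound.

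The main obstacle is the bookkeeping in the second step: one must verify that the total loss from (i) integrally rounding the fractional compartmental packing, (ii) packing small items around compartments, and (iii) repacking leftovers with NFDH, all simultaneously fits into both the multiplicative slack $20\eps - (5\eps+4\eps^2)$ and the additive slack governed by $|\Tcal|/13$. This is essentially a tight calibration between $\epsSmall$, $\epsCont$, and the number of compartments per bin, and it is the reason behind the particular choice of $f(x) = \eps x/(104(1+1/(\eps x))^{2/x-2})$ in \cref{sec:thin-bp:round}.
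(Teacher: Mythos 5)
Your proposal follows essentially the same approach as the paper: bound $\fcopt(\Itild)$ via the structural theorem and linear grouping, use $\greedyCPack$ to round the fractional compartmental packing into an integral one while controlling the discarded area by the choice of $\epsSmall = f(\epsLarge)$, and pack the discarded items together with $\Imed$ using NFDH. The bookkeeping you defer is indeed exactly what the paper's proof carries out, combining \cref{eqn:greedy-discard-area}, \cref{eqn:epsSmall}, $a(\Imed)\le\eps a(I)$, and $\delta\le\epsLarge\le\eps\le 1/2$ to absorb all error terms into $(1+20\eps)\opt(I)$ plus the stated additive constant.
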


\begin{figure}[htb]
\hbadness=10000
\begin{subfigure}[t]{0.3\textwidth}
\centering
\tikzset{compartment/.style={draw,thick,fill={black!10}},
container/.style={},item/.style={}}
\ifcsname myu\endcsname\else\newlength{\myu}\fi
\setlength{\myu}{0.8cm}
\tikzset{pic1/.pic={
    \path[item]
        (0.00\myu, 0\myu) rectangle +(0.09\myu, 0.6\myu)
        (0.09\myu, 0\myu) rectangle +(0.10\myu, 0.6\myu)
        (0.19\myu, 0\myu) rectangle +(0.08\myu, 0.6\myu)
        (0.27\myu, 0\myu) rectangle +(0.08\myu, 0.6\myu);
    \path[item]
        (0\myu, 0.6\myu) rectangle +(0.10\myu, 0.4\myu)
        (0.10\myu, 0.6\myu) rectangle +(0.07\myu, 0.4\myu)
        (0.17\myu, 0.6\myu) rectangle +(0.07\myu, 0.4\myu)
        (0.24\myu, 0.6\myu) rectangle +(0.08\myu, 0.4\myu)
        (0.32\myu, 0.6\myu) rectangle +(0.07\myu, 0.4\myu);
    \path[item]
        (0.40\myu, 0\myu) rectangle +(0.11\myu, 0.5\myu)
        (0.50\myu, 0\myu) rectangle +(0.09\myu, 0.5\myu)
        (0.59\myu, 0\myu) rectangle +(0.09\myu, 0.5\myu);
    \path[item]
        (0.40\myu, 0.5\myu) rectangle +(0.08\myu, 0.5\myu)
        (0.48\myu, 0.5\myu) rectangle +(0.09\myu, 0.5\myu)
        (0.57\myu, 0.5\myu) rectangle +(0.08\myu, 0.5\myu);
    \path[item]
        (0.70\myu, 0\myu) rectangle +(0.1\myu, 0.4\myu)
        (0.80\myu, 0\myu) rectangle +(0.1\myu, 0.4\myu)
        (0.90\myu, 0\myu) rectangle +(0.1\myu, 0.4\myu);
    \path[item]
        (0.70\myu, 0.4\myu) rectangle +(0.07\myu, 0.3\myu)
        (0.77\myu, 0.4\myu) rectangle +(0.07\myu, 0.3\myu)
        (0.84\myu, 0.4\myu) rectangle +(0.07\myu, 0.3\myu)
        (0.91\myu, 0.4\myu) rectangle +(0.07\myu, 0.3\myu);
    \path[item]
        (0.70\myu, 0.7\myu) rectangle +(0.08\myu, 0.3\myu)
        (0.78\myu, 0.7\myu) rectangle +(0.08\myu, 0.3\myu)
        (0.86\myu, 0.7\myu) rectangle +(0.09\myu, 0.3\myu);
    \path[container]
        (0\myu, 0\myu) rectangle +(0.4\myu, 0.6\myu)
        (0\myu, 0.6\myu) rectangle +(0.4\myu, 0.4\myu)
        (0.4\myu, 0\myu) rectangle +(0.3\myu, 0.5\myu)
        (0.4\myu, 0.5\myu) rectangle +(0.3\myu, 0.5\myu)
        (0.7\myu, 0\myu) rectangle +(0.3\myu, 0.4\myu)
        (0.7\myu, 0.4\myu) rectangle +(0.3\myu, 0.3\myu)
        (0.7\myu, 0.7\myu) rectangle +(0.3\myu, 0.3\myu);
    \path[compartment] (0\myu, 0\myu) rectangle (1\myu, 1\myu);
}}
\tikzset{pic2/.pic={
    \path[item]
        (0.0\myu, 0\myu) rectangle +(0.1\myu, 1\myu)
        (0.1\myu, 0\myu) rectangle +(0.1\myu, 1\myu)
        (0.2\myu, 0\myu) rectangle +(0.1\myu, 1\myu)
        (0.3\myu, 0\myu) rectangle +(0.1\myu, 1\myu)
        (0.4\myu, 0\myu) rectangle +(0.1\myu, 1\myu);
    \path[item]
        (0.50\myu, 0\myu) rectangle +(0.08\myu, 0.5\myu)
        (0.58\myu, 0\myu) rectangle +(0.08\myu, 0.5\myu)
        (0.66\myu, 0\myu) rectangle +(0.08\myu, 0.5\myu)
        (0.74\myu, 0\myu) rectangle +(0.08\myu, 0.5\myu)
        (0.82\myu, 0\myu) rectangle +(0.08\myu, 0.5\myu)
        (0.90\myu, 0\myu) rectangle +(0.08\myu, 0.5\myu);
    \path[item]
        (0.50\myu, 0.5\myu) rectangle +(0.09\myu, 0.5\myu)
        (0.59\myu, 0.5\myu) rectangle +(0.09\myu, 0.5\myu)
        (0.68\myu, 0.5\myu) rectangle +(0.09\myu, 0.5\myu)
        (0.77\myu, 0.5\myu) rectangle +(0.09\myu, 0.5\myu)
        (0.86\myu, 0.5\myu) rectangle +(0.09\myu, 0.5\myu);
    \path[container]
        (0\myu, 0\myu) rectangle +(0.5\myu, 1\myu)
        (0.5\myu, 0\myu) rectangle +(0.5\myu, 0.5\myu)
        (0.5\myu, 0.5\myu) rectangle +(0.5\myu, 0.5\myu);
    \path[compartment] (0\myu, 0\myu) rectangle (1\myu, 1\myu);
}}
\tikzset{pic3/.pic={
    \path[item]
        (0.0\myu, 0\myu) rectangle +(0.1\myu, 0.4\myu)
        (0.1\myu, 0\myu) rectangle +(0.1\myu, 0.4\myu)
        (0.2\myu, 0\myu) rectangle +(0.1\myu, 0.4\myu)
        (0.3\myu, 0\myu) rectangle +(0.1\myu, 0.4\myu);
    \path[item]
        (0.00\myu, 0.4\myu) rectangle +(0.09\myu, 0.3\myu)
        (0.09\myu, 0.4\myu) rectangle +(0.09\myu, 0.3\myu)
        (0.18\myu, 0.4\myu) rectangle +(0.09\myu, 0.3\myu)
        (0.27\myu, 0.4\myu) rectangle +(0.09\myu, 0.3\myu);
    \path[item]
        (0.00\myu, 0.7\myu) rectangle +(0.08\myu, 0.3\myu)
        (0.08\myu, 0.7\myu) rectangle +(0.08\myu, 0.3\myu)
        (0.16\myu, 0.7\myu) rectangle +(0.08\myu, 0.3\myu)
        (0.24\myu, 0.7\myu) rectangle +(0.08\myu, 0.3\myu)
        (0.32\myu, 0.7\myu) rectangle +(0.08\myu, 0.3\myu);
    \path[item]
        (0.4\myu, 0\myu) rectangle +(0.1\myu, 0.5\myu)
        (0.5\myu, 0\myu) rectangle +(0.1\myu, 0.5\myu)
        (0.6\myu, 0\myu) rectangle +(0.1\myu, 0.5\myu)
        (0.7\myu, 0\myu) rectangle +(0.1\myu, 0.5\myu)
        (0.8\myu, 0\myu) rectangle +(0.1\myu, 0.5\myu)
        (0.9\myu, 0\myu) rectangle +(0.1\myu, 0.5\myu);
    \path[item]
        (0.40\myu, 0.5\myu) rectangle +(0.07\myu, 0.5\myu)
        (0.47\myu, 0.5\myu) rectangle +(0.07\myu, 0.5\myu)
        (0.54\myu, 0.5\myu) rectangle +(0.09\myu, 0.5\myu)
        (0.63\myu, 0.5\myu) rectangle +(0.09\myu, 0.5\myu)
        (0.72\myu, 0.5\myu) rectangle +(0.08\myu, 0.5\myu)
        (0.80\myu, 0.5\myu) rectangle +(0.08\myu, 0.5\myu)
        (0.88\myu, 0.5\myu) rectangle +(0.08\myu, 0.5\myu);
    \path[container]
        (0\myu, 0\myu) rectangle +(0.4\myu, 0.4\myu)
        (0\myu, 0.4\myu) rectangle +(0.4\myu, 0.3\myu)
        (0\myu, 0.7\myu) rectangle +(0.4\myu, 0.3\myu)
        (0.4\myu, 0\myu) rectangle +(0.6\myu, 0.5\myu)
        (0.4\myu, 0.5\myu) rectangle +(0.6\myu, 0.5\myu);
    \path[compartment] (0\myu, 0\myu) rectangle (1\myu, 1\myu);
}}
\tikzset{pic4/.pic={
    \path[item] foreach \h/\y in {0.15/0.00,0.14/0.15,0.13/0.29,0.12/0.42,0.11/0.54,
            0.10/0.65,0.09/0.75,0.08/0.84,0.07/0.92}{
         foreach \hdiff/\w/\x in {0.000/0.15/0.00,0.001/0.14/0.15,0.002/0.13/0.29,0.003/0.12/0.42,
                0.004/0.11/0.54,0.005/0.10/0.65,0.006/0.09/0.75,0.007/0.08/0.84,0.008/0.07/0.92}{
            (\x\myu, \y\myu) rectangle +(\w\myu, \h\myu-\hdiff\myu)
    }};
}}
\begin{tikzpicture}
\pic at (2\myu, 0\myu) {pic4};

\pic[xscale=-3,rotate=90] at (2\myu, 2\myu) {pic1};
\pic[xscale=-3,rotate=90] at (1\myu, 4\myu) {pic2};

\pic[xscale=-2,rotate=90] at (1\myu, 1\myu) {pic3};
\pic[xscale=-2,rotate=90] at (2\myu, 3\myu) {pic1};
\pic[xscale=-2,rotate=90] at (0\myu, 0\myu) {pic2};

\pic[yscale=2] at (1\myu, 2\myu) {pic1};
\pic[yscale=2] at (3\myu, 0\myu) {pic2};
\pic[yscale=2] at (4\myu, 0\myu) {pic3};
\pic[yscale=2] at (4\myu, 3\myu) {pic1};
\pic[yscale=4] at (0\myu, 1\myu) {pic2};

\draw[ultra thick] (0\myu, 0\myu) rectangle (5\myu, 5\myu);
\end{tikzpicture}

\caption{Guess the packing of empty compartments in each bin (\cref{sec:enum-configs}).}
\end{subfigure}
\hfill
\begin{subfigure}[t]{0.3\textwidth}
\centering
\tikzset{compartment/.style={draw,thick},
container/.style={draw,fill={black!25}},
item/.style={}}
\ifcsname myu\endcsname\else\newlength{\myu}\fi
\setlength{\myu}{0.8cm}
\tikzset{pic1/.pic={
    \path[item]
        (0.00\myu, 0\myu) rectangle +(0.09\myu, 0.6\myu)
        (0.09\myu, 0\myu) rectangle +(0.10\myu, 0.6\myu)
        (0.19\myu, 0\myu) rectangle +(0.08\myu, 0.6\myu)
        (0.27\myu, 0\myu) rectangle +(0.08\myu, 0.6\myu);
    \path[item]
        (0\myu, 0.6\myu) rectangle +(0.10\myu, 0.4\myu)
        (0.10\myu, 0.6\myu) rectangle +(0.07\myu, 0.4\myu)
        (0.17\myu, 0.6\myu) rectangle +(0.07\myu, 0.4\myu)
        (0.24\myu, 0.6\myu) rectangle +(0.08\myu, 0.4\myu)
        (0.32\myu, 0.6\myu) rectangle +(0.07\myu, 0.4\myu);
    \path[item]
        (0.40\myu, 0\myu) rectangle +(0.11\myu, 0.5\myu)
        (0.50\myu, 0\myu) rectangle +(0.09\myu, 0.5\myu)
        (0.59\myu, 0\myu) rectangle +(0.09\myu, 0.5\myu);
    \path[item]
        (0.40\myu, 0.5\myu) rectangle +(0.08\myu, 0.5\myu)
        (0.48\myu, 0.5\myu) rectangle +(0.09\myu, 0.5\myu)
        (0.57\myu, 0.5\myu) rectangle +(0.08\myu, 0.5\myu);
    \path[item]
        (0.70\myu, 0\myu) rectangle +(0.1\myu, 0.4\myu)
        (0.80\myu, 0\myu) rectangle +(0.1\myu, 0.4\myu)
        (0.90\myu, 0\myu) rectangle +(0.1\myu, 0.4\myu);
    \path[item]
        (0.70\myu, 0.4\myu) rectangle +(0.07\myu, 0.3\myu)
        (0.77\myu, 0.4\myu) rectangle +(0.07\myu, 0.3\myu)
        (0.84\myu, 0.4\myu) rectangle +(0.07\myu, 0.3\myu)
        (0.91\myu, 0.4\myu) rectangle +(0.07\myu, 0.3\myu);
    \path[item]
        (0.70\myu, 0.7\myu) rectangle +(0.08\myu, 0.3\myu)
        (0.78\myu, 0.7\myu) rectangle +(0.08\myu, 0.3\myu)
        (0.86\myu, 0.7\myu) rectangle +(0.09\myu, 0.3\myu);
    \path[container]
        (0\myu, 0\myu) rectangle +(0.4\myu, 0.6\myu)
        (0\myu, 0.6\myu) rectangle +(0.4\myu, 0.4\myu)
        (0.4\myu, 0\myu) rectangle +(0.3\myu, 0.5\myu)
        (0.4\myu, 0.5\myu) rectangle +(0.3\myu, 0.5\myu)
        (0.7\myu, 0\myu) rectangle +(0.3\myu, 0.4\myu)
        (0.7\myu, 0.4\myu) rectangle +(0.3\myu, 0.3\myu)
        (0.7\myu, 0.7\myu) rectangle +(0.3\myu, 0.3\myu);
    \path[compartment] (0\myu, 0\myu) rectangle (1\myu, 1\myu);
}}
\tikzset{pic2/.pic={
    \path[item]
        (0.0\myu, 0\myu) rectangle +(0.1\myu, 1\myu)
        (0.1\myu, 0\myu) rectangle +(0.1\myu, 1\myu)
        (0.2\myu, 0\myu) rectangle +(0.1\myu, 1\myu)
        (0.3\myu, 0\myu) rectangle +(0.1\myu, 1\myu)
        (0.4\myu, 0\myu) rectangle +(0.1\myu, 1\myu);
    \path[item]
        (0.50\myu, 0\myu) rectangle +(0.08\myu, 0.5\myu)
        (0.58\myu, 0\myu) rectangle +(0.08\myu, 0.5\myu)
        (0.66\myu, 0\myu) rectangle +(0.08\myu, 0.5\myu)
        (0.74\myu, 0\myu) rectangle +(0.08\myu, 0.5\myu)
        (0.82\myu, 0\myu) rectangle +(0.08\myu, 0.5\myu)
        (0.90\myu, 0\myu) rectangle +(0.08\myu, 0.5\myu);
    \path[item]
        (0.50\myu, 0.5\myu) rectangle +(0.09\myu, 0.5\myu)
        (0.59\myu, 0.5\myu) rectangle +(0.09\myu, 0.5\myu)
        (0.68\myu, 0.5\myu) rectangle +(0.09\myu, 0.5\myu)
        (0.77\myu, 0.5\myu) rectangle +(0.09\myu, 0.5\myu)
        (0.86\myu, 0.5\myu) rectangle +(0.09\myu, 0.5\myu);
    \path[container]
        (0\myu, 0\myu) rectangle +(0.5\myu, 1\myu)
        (0.5\myu, 0\myu) rectangle +(0.5\myu, 0.5\myu)
        (0.5\myu, 0.5\myu) rectangle +(0.5\myu, 0.5\myu);
    \path[compartment] (0\myu, 0\myu) rectangle (1\myu, 1\myu);
}}
\tikzset{pic3/.pic={
    \path[item]
        (0.0\myu, 0\myu) rectangle +(0.1\myu, 0.4\myu)
        (0.1\myu, 0\myu) rectangle +(0.1\myu, 0.4\myu)
        (0.2\myu, 0\myu) rectangle +(0.1\myu, 0.4\myu)
        (0.3\myu, 0\myu) rectangle +(0.1\myu, 0.4\myu);
    \path[item]
        (0.00\myu, 0.4\myu) rectangle +(0.09\myu, 0.3\myu)
        (0.09\myu, 0.4\myu) rectangle +(0.09\myu, 0.3\myu)
        (0.18\myu, 0.4\myu) rectangle +(0.09\myu, 0.3\myu)
        (0.27\myu, 0.4\myu) rectangle +(0.09\myu, 0.3\myu);
    \path[item]
        (0.00\myu, 0.7\myu) rectangle +(0.08\myu, 0.3\myu)
        (0.08\myu, 0.7\myu) rectangle +(0.08\myu, 0.3\myu)
        (0.16\myu, 0.7\myu) rectangle +(0.08\myu, 0.3\myu)
        (0.24\myu, 0.7\myu) rectangle +(0.08\myu, 0.3\myu)
        (0.32\myu, 0.7\myu) rectangle +(0.08\myu, 0.3\myu);
    \path[item]
        (0.4\myu, 0\myu) rectangle +(0.1\myu, 0.5\myu)
        (0.5\myu, 0\myu) rectangle +(0.1\myu, 0.5\myu)
        (0.6\myu, 0\myu) rectangle +(0.1\myu, 0.5\myu)
        (0.7\myu, 0\myu) rectangle +(0.1\myu, 0.5\myu)
        (0.8\myu, 0\myu) rectangle +(0.1\myu, 0.5\myu)
        (0.9\myu, 0\myu) rectangle +(0.1\myu, 0.5\myu);
    \path[item]
        (0.40\myu, 0.5\myu) rectangle +(0.07\myu, 0.5\myu)
        (0.47\myu, 0.5\myu) rectangle +(0.07\myu, 0.5\myu)
        (0.54\myu, 0.5\myu) rectangle +(0.09\myu, 0.5\myu)
        (0.63\myu, 0.5\myu) rectangle +(0.09\myu, 0.5\myu)
        (0.72\myu, 0.5\myu) rectangle +(0.08\myu, 0.5\myu)
        (0.80\myu, 0.5\myu) rectangle +(0.08\myu, 0.5\myu)
        (0.88\myu, 0.5\myu) rectangle +(0.08\myu, 0.5\myu);
    \path[container]
        (0\myu, 0\myu) rectangle +(0.4\myu, 0.4\myu)
        (0\myu, 0.4\myu) rectangle +(0.4\myu, 0.3\myu)
        (0\myu, 0.7\myu) rectangle +(0.4\myu, 0.3\myu)
        (0.4\myu, 0\myu) rectangle +(0.6\myu, 0.5\myu)
        (0.4\myu, 0.5\myu) rectangle +(0.6\myu, 0.5\myu);
    \path[compartment] (0\myu, 0\myu) rectangle (1\myu, 1\myu);
}}
\tikzset{pic4/.pic={
    \path[item] foreach \h/\y in {0.15/0.00,0.14/0.15,0.13/0.29,0.12/0.42,0.11/0.54,
            0.10/0.65,0.09/0.75,0.08/0.84,0.07/0.92}{
         foreach \hdiff/\w/\x in {0.000/0.15/0.00,0.001/0.14/0.15,0.002/0.13/0.29,0.003/0.12/0.42,
                0.004/0.11/0.54,0.005/0.10/0.65,0.006/0.09/0.75,0.007/0.08/0.84,0.008/0.07/0.92}{
            (\x\myu, \y\myu) rectangle +(\w\myu, \h\myu-\hdiff\myu)
    }};
}}
\begin{tikzpicture}
\pic at (2\myu, 0\myu) {pic4};

\pic[xscale=-3,rotate=90] at (2\myu, 2\myu) {pic1};
\pic[xscale=-3,rotate=90] at (1\myu, 4\myu) {pic2};

\pic[xscale=-2,rotate=90] at (1\myu, 1\myu) {pic3};
\pic[xscale=-2,rotate=90] at (2\myu, 3\myu) {pic1};
\pic[xscale=-2,rotate=90] at (0\myu, 0\myu) {pic2};

\pic[yscale=2] at (1\myu, 2\myu) {pic1};
\pic[yscale=2] at (3\myu, 0\myu) {pic2};
\pic[yscale=2] at (4\myu, 0\myu) {pic3};
\pic[yscale=2] at (4\myu, 3\myu) {pic1};
\pic[yscale=4] at (0\myu, 1\myu) {pic2};

\draw[ultra thick] (0\myu, 0\myu) rectangle (5\myu, 5\myu);
\end{tikzpicture}

\caption{Fractionally pack wide and tall items into compartments.
This partitions each compartment into \emph{containers} (\cref{sec:feas-lp}).}
\end{subfigure}
\hfill
\begin{subfigure}[t]{0.3\textwidth}
\centering
\tikzset{compartment/.style={draw,ultra thick},
container/.style={draw,thick},
item/.style={draw,very thin,fill={black!25}}}
\ifcsname myu\endcsname\else\newlength{\myu}\fi
\setlength{\myu}{0.8cm}
\tikzset{pic1/.pic={
    \path[item]
        (0.00\myu, 0\myu) rectangle +(0.09\myu, 0.6\myu)
        (0.09\myu, 0\myu) rectangle +(0.10\myu, 0.6\myu)
        (0.19\myu, 0\myu) rectangle +(0.08\myu, 0.6\myu)
        (0.27\myu, 0\myu) rectangle +(0.08\myu, 0.6\myu);
    \path[item]
        (0\myu, 0.6\myu) rectangle +(0.10\myu, 0.4\myu)
        (0.10\myu, 0.6\myu) rectangle +(0.07\myu, 0.4\myu)
        (0.17\myu, 0.6\myu) rectangle +(0.07\myu, 0.4\myu)
        (0.24\myu, 0.6\myu) rectangle +(0.08\myu, 0.4\myu)
        (0.32\myu, 0.6\myu) rectangle +(0.07\myu, 0.4\myu);
    \path[item]
        (0.40\myu, 0\myu) rectangle +(0.11\myu, 0.5\myu)
        (0.50\myu, 0\myu) rectangle +(0.09\myu, 0.5\myu)
        (0.59\myu, 0\myu) rectangle +(0.09\myu, 0.5\myu);
    \path[item]
        (0.40\myu, 0.5\myu) rectangle +(0.08\myu, 0.5\myu)
        (0.48\myu, 0.5\myu) rectangle +(0.09\myu, 0.5\myu)
        (0.57\myu, 0.5\myu) rectangle +(0.08\myu, 0.5\myu);
    \path[item]
        (0.70\myu, 0\myu) rectangle +(0.1\myu, 0.4\myu)
        (0.80\myu, 0\myu) rectangle +(0.1\myu, 0.4\myu)
        (0.90\myu, 0\myu) rectangle +(0.1\myu, 0.4\myu);
    \path[item]
        (0.70\myu, 0.4\myu) rectangle +(0.07\myu, 0.3\myu)
        (0.77\myu, 0.4\myu) rectangle +(0.07\myu, 0.3\myu)
        (0.84\myu, 0.4\myu) rectangle +(0.07\myu, 0.3\myu)
        (0.91\myu, 0.4\myu) rectangle +(0.07\myu, 0.3\myu);
    \path[item]
        (0.70\myu, 0.7\myu) rectangle +(0.08\myu, 0.3\myu)
        (0.78\myu, 0.7\myu) rectangle +(0.08\myu, 0.3\myu)
        (0.86\myu, 0.7\myu) rectangle +(0.09\myu, 0.3\myu);
    \path[container]
        (0\myu, 0\myu) rectangle +(0.4\myu, 0.6\myu)
        (0\myu, 0.6\myu) rectangle +(0.4\myu, 0.4\myu)
        (0.4\myu, 0\myu) rectangle +(0.3\myu, 0.5\myu)
        (0.4\myu, 0.5\myu) rectangle +(0.3\myu, 0.5\myu)
        (0.7\myu, 0\myu) rectangle +(0.3\myu, 0.4\myu)
        (0.7\myu, 0.4\myu) rectangle +(0.3\myu, 0.3\myu)
        (0.7\myu, 0.7\myu) rectangle +(0.3\myu, 0.3\myu);
    \path[compartment] (0\myu, 0\myu) rectangle (1\myu, 1\myu);
}}
\tikzset{pic2/.pic={
    \path[item]
        (0.0\myu, 0\myu) rectangle +(0.1\myu, 1\myu)
        (0.1\myu, 0\myu) rectangle +(0.1\myu, 1\myu)
        (0.2\myu, 0\myu) rectangle +(0.1\myu, 1\myu)
        (0.3\myu, 0\myu) rectangle +(0.1\myu, 1\myu)
        (0.4\myu, 0\myu) rectangle +(0.1\myu, 1\myu);
    \path[item]
        (0.50\myu, 0\myu) rectangle +(0.08\myu, 0.5\myu)
        (0.58\myu, 0\myu) rectangle +(0.08\myu, 0.5\myu)
        (0.66\myu, 0\myu) rectangle +(0.08\myu, 0.5\myu)
        (0.74\myu, 0\myu) rectangle +(0.08\myu, 0.5\myu)
        (0.82\myu, 0\myu) rectangle +(0.08\myu, 0.5\myu)
        (0.90\myu, 0\myu) rectangle +(0.08\myu, 0.5\myu);
    \path[item]
        (0.50\myu, 0.5\myu) rectangle +(0.09\myu, 0.5\myu)
        (0.59\myu, 0.5\myu) rectangle +(0.09\myu, 0.5\myu)
        (0.68\myu, 0.5\myu) rectangle +(0.09\myu, 0.5\myu)
        (0.77\myu, 0.5\myu) rectangle +(0.09\myu, 0.5\myu)
        (0.86\myu, 0.5\myu) rectangle +(0.09\myu, 0.5\myu);
    \path[container]
        (0\myu, 0\myu) rectangle +(0.5\myu, 1\myu)
        (0.5\myu, 0\myu) rectangle +(0.5\myu, 0.5\myu)
        (0.5\myu, 0.5\myu) rectangle +(0.5\myu, 0.5\myu);
    \path[compartment] (0\myu, 0\myu) rectangle (1\myu, 1\myu);
}}
\tikzset{pic3/.pic={
    \path[item]
        (0.0\myu, 0\myu) rectangle +(0.1\myu, 0.4\myu)
        (0.1\myu, 0\myu) rectangle +(0.1\myu, 0.4\myu)
        (0.2\myu, 0\myu) rectangle +(0.1\myu, 0.4\myu)
        (0.3\myu, 0\myu) rectangle +(0.1\myu, 0.4\myu);
    \path[item]
        (0.00\myu, 0.4\myu) rectangle +(0.09\myu, 0.3\myu)
        (0.09\myu, 0.4\myu) rectangle +(0.09\myu, 0.3\myu)
        (0.18\myu, 0.4\myu) rectangle +(0.09\myu, 0.3\myu)
        (0.27\myu, 0.4\myu) rectangle +(0.09\myu, 0.3\myu);
    \path[item]
        (0.00\myu, 0.7\myu) rectangle +(0.08\myu, 0.3\myu)
        (0.08\myu, 0.7\myu) rectangle +(0.08\myu, 0.3\myu)
        (0.16\myu, 0.7\myu) rectangle +(0.08\myu, 0.3\myu)
        (0.24\myu, 0.7\myu) rectangle +(0.08\myu, 0.3\myu)
        (0.32\myu, 0.7\myu) rectangle +(0.08\myu, 0.3\myu);
    \path[item]
        (0.4\myu, 0\myu) rectangle +(0.1\myu, 0.5\myu)
        (0.5\myu, 0\myu) rectangle +(0.1\myu, 0.5\myu)
        (0.6\myu, 0\myu) rectangle +(0.1\myu, 0.5\myu)
        (0.7\myu, 0\myu) rectangle +(0.1\myu, 0.5\myu)
        (0.8\myu, 0\myu) rectangle +(0.1\myu, 0.5\myu)
        (0.9\myu, 0\myu) rectangle +(0.1\myu, 0.5\myu);
    \path[item]
        (0.40\myu, 0.5\myu) rectangle +(0.07\myu, 0.5\myu)
        (0.47\myu, 0.5\myu) rectangle +(0.07\myu, 0.5\myu)
        (0.54\myu, 0.5\myu) rectangle +(0.09\myu, 0.5\myu)
        (0.63\myu, 0.5\myu) rectangle +(0.09\myu, 0.5\myu)
        (0.72\myu, 0.5\myu) rectangle +(0.08\myu, 0.5\myu)
        (0.80\myu, 0.5\myu) rectangle +(0.08\myu, 0.5\myu)
        (0.88\myu, 0.5\myu) rectangle +(0.08\myu, 0.5\myu);
    \path[container]
        (0\myu, 0\myu) rectangle +(0.4\myu, 0.4\myu)
        (0\myu, 0.4\myu) rectangle +(0.4\myu, 0.3\myu)
        (0\myu, 0.7\myu) rectangle +(0.4\myu, 0.3\myu)
        (0.4\myu, 0\myu) rectangle +(0.6\myu, 0.5\myu)
        (0.4\myu, 0.5\myu) rectangle +(0.6\myu, 0.5\myu);
    \path[compartment] (0\myu, 0\myu) rectangle (1\myu, 1\myu);
}}
\tikzset{pic4/.pic={
    \path[item] foreach \h/\y in {0.15/0.00,0.14/0.15,0.13/0.29,0.12/0.42,0.11/0.54,
            0.10/0.65,0.09/0.75,0.08/0.84,0.07/0.92}{
         foreach \hdiff/\w/\x in {0.000/0.15/0.00,0.001/0.14/0.15,0.002/0.13/0.29,0.003/0.12/0.42,
                0.004/0.11/0.54,0.005/0.10/0.65,0.006/0.09/0.75,0.007/0.08/0.84,0.008/0.07/0.92}{
            (\x\myu, \y\myu) rectangle +(\w\myu, \h\myu-\hdiff\myu)
    }};
}}
\begin{tikzpicture}
\pic at (2\myu, 0\myu) {pic4};

\pic[xscale=-3,rotate=90] at (2\myu, 2\myu) {pic1};
\pic[xscale=-3,rotate=90] at (1\myu, 4\myu) {pic2};

\pic[xscale=-2,rotate=90] at (1\myu, 1\myu) {pic3};
\pic[xscale=-2,rotate=90] at (2\myu, 3\myu) {pic1};
\pic[xscale=-2,rotate=90] at (0\myu, 0\myu) {pic2};

\pic[yscale=2] at (1\myu, 2\myu) {pic1};
\pic[yscale=2] at (3\myu, 0\myu) {pic2};
\pic[yscale=2] at (4\myu, 0\myu) {pic3};
\pic[yscale=2] at (4\myu, 3\myu) {pic1};
\pic[yscale=4] at (0\myu, 1\myu) {pic2};

\draw[ultra thick] (0\myu, 0\myu) rectangle (5\myu, 5\myu);
\end{tikzpicture}

\caption{Pack the items non-fractionally (\cref{sec:greedy-cont}).}
\end{subfigure}
\caption{Major steps of $\thinCPack$ after $\round$ing $I$.}
\label{fig:thincpack}
\end{figure}

\appendix
\newpage
\section{Examples of Guillotinable and Non-Guillotinable Packing}
\label{sec:guill-examples}

\begin{figure}[htb]
\centering
\begin{tikzpicture}[
item/.style={fill={black!10},draw},
bin/.style={draw,thick},
]
\path[item]
    (0.0,0.0) rectangle +(1.8,1.2)
    (1.8,0.0) rectangle +(1.2,1.8)
    (0.0,1.2) rectangle +(1.2,1.8)
    (1.2,1.8) rectangle +(1.8,1.2);
\path[bin] (0,0) rectangle +(3,3);
\begin{scope}[shift={(4cm,0cm)},scale=1.25]
\path[item]
    (0.0,0.0) rectangle +(1.2,0.6)
    (1.8,0.0) rectangle +(1.2,0.6)
    (0.0,0.6) rectangle +(0.6,1.8)
    (2.4,0.6) rectangle +(0.6,1.8)
    (1.2,0.0) rectangle +(0.6,1.8)
    (0.6,1.8) rectangle +(1.8,0.6);
\path[bin] (0,0) rectangle +(3,2.4);
\end{scope}
\end{tikzpicture}

\caption{Two bins that are not guillotinable.}
\label{fig:non-guill}
\end{figure}
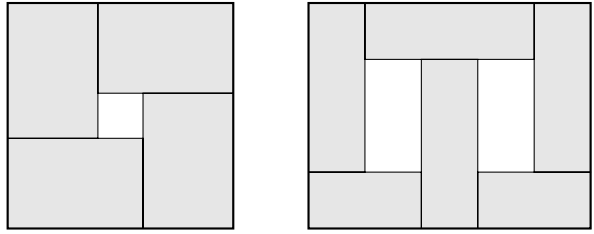

\begin{figure}[htb]
\centering
\tikzset{pics/mypic/.style args={#1,#2,#3}{code={
\begin{scope}
    \begin{scope}
        \begin{scope}
            \path[item] (0, 0) rectangle +(0.8, 2.4);
            \path[bin3] (0, 0) rectangle +(0.8, 2.4);
        \end{scope}
        \begin{scope}[xshift={#3cm}]
            \path[item] (0.8, 0) rectangle +(0.6, 2.4);
            \path[bin3] (0.8, 0) rectangle +(0.6, 2.4);
        \end{scope}
        \begin{scope}[xshift={2*#3cm}]
            \path[item] (1.4, 0) rectangle +(0.8, 2);
            \path[bin3] (1.4, 0) rectangle +(1, 2.4);
        \end{scope}
        \path[bin2] (0, 0) rectangle +(2.4, 2.4);
        \path[cutline2] (0.8, -0.2) -- (0.8, 2.6);
        \path[cutline2] (1.4, -0.2) -- (1.4, 2.6);
    \end{scope}
    \begin{scope}[yshift={#2cm}]
        \begin{scope}
            \path[item] (0, 2.4) rectangle +(1.1, 1.6);
            \path[bin3] (0, 2.4) rectangle +(1.1, 1.6);
        \end{scope}
        \begin{scope}[xshift={#3cm}]
            \path[item] (1.1, 2.4) rectangle +(1.3, 1.4);
            \path[bin3] (1.1, 2.4) rectangle +(1.3, 1.6);
        \end{scope}
        \path[bin2] (0, 2.4) rectangle +(2.4, 1.6);
        \path[cutline2] (1.1, 2.2) -- (1.1, 4.2);
    \end{scope}
    \path[cutline1] (-0.2, 2.4) -- (2.6, 2.4);
    \path[bin1] (0, 0) rectangle +(2.4, 4);
\end{scope}
\begin{scope}[xshift={#1cm}]
    \begin{scope}
        \begin{scope}
            \path[item] (2.4, 0) rectangle +(0.8, 2.8);
            \path[bin3] (2.4, 0) rectangle +(0.8, 3.2);
        \end{scope}
        \begin{scope}[xshift={#3cm}]
            \path[item] (3.2, 0) rectangle +(0.8, 3.2);
            \path[bin3] (3.2, 0) rectangle +(0.8, 3.2);
        \end{scope}
        \path[bin2] (2.4, 0) rectangle +(1.6, 3.2);
        \path[cutline2] (3.2, -0.2) -- (3.2, 3.4);
    \end{scope}
    \begin{scope}[yshift={#2cm}]
        \path[item] (2.4, 3.2) rectangle +(1.6, 0.6);
        \path[bin2,bin3] (2.4, 3.2) rectangle +(1.6, 0.8);
    \end{scope}
    \path[bin1] (2.4, 0) rectangle +(1.6, 4);
    \path[cutline1] (2.2, 3.2) -- (4.2, 3.2);
\end{scope}
\path[bin0] (0, 0) rectangle +(4, 4);
\path[cutline0] (2.4, -0.2) -- (2.4, 4.2);
}}}
\begin{tikzpicture}[
item/.style={fill={black!10},draw},
myarrow/.style={->,>={Stealth},thick},
bin/.style={draw,thick},
cutline/.style={draw={black!50!red},densely dashed,line width=1.1pt},
bin0/.style={},
bin1/.style={},
bin2/.style={},
bin3/.style={},
cutline0/.style={},
cutline1/.style={},
cutline2/.style={},
scale=0.75,
transform shape,
]
\begin{scope}[bin0/.style=bin,cutline0/.style=cutline]
    \pic[yscale=-1]{mypic={0,0,0}};
    \node[rotate=90,transform shape] at (2.42,-4.35) {\large\ding{34}};
\end{scope}
\draw[myarrow] (5,-2) -> (8,-2);
\begin{scope}[xshift=9.5cm,bin1/.style=bin,cutline1/.style=cutline]
    \pic[yscale=-1]{mypic={1,0,0}};
    \node[transform shape] at (-0.4,-2.42) {\large\ding{34}};
    \node[xscale=-1,transform shape] at (5.4,-3.22) {\large\ding{34}};
\end{scope}
\draw[myarrow] (12.1,-4.5) -> (12.1,-6);
\begin{scope}[shift={(9.5cm,-7cm)},bin2/.style=bin,cutline2/.style=cutline]
    \pic[yscale=-1]{mypic={1,0.5,0}};
    \node[yscale=-1,rotate=90,transform shape] at (0.82,0.4) {\large\ding{34}};
    \node[yscale=-1,rotate=90,transform shape] at (1.42,0.4) {\large\ding{34}};
    \node[yscale=-1,rotate=90,transform shape] at (4.22,0.4) {\large\ding{34}};
    \node[rotate=90,transform shape] at (1.12,-4.9) {\large\ding{34}};
\end{scope}
\draw[myarrow] (8.5,-9.5) -> (6.5,-9.5);
\begin{scope}[yshift=-7cm,bin3/.style=bin]
    \pic[yscale=-1]{mypic={1.2,0.5,0.4}};
\end{scope}
\end{tikzpicture}

\caption{Separating items using 3 stages of guillotine cuts.}
\label{fig:guill}
\end{figure}

\section{Next-Fit Decreasing Height (NFDH)}
\label{sec:nfdh}

In this section, we give proofs of the results in \cref{sec:preliminaries}
related to NFDH algorithm~\cite{coffman1980performance}.

\begin{lemma}[\cite{johnson-thesis}]
\label{thm:next-fit}
Let $I$ be a classical bin packing instance. Then the number of bins used by the
Next-Fit algorithm~\cite{johnson-thesis} to pack $I$ is less than $2\size(I) + 1$.
\end{lemma}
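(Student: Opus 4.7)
My plan is to use the standard charging argument for Next-Fit. First, I would fix the Next-Fit algorithm's operation explicitly: process items in the given order, maintain a single ``open'' bin, place the current item in this open bin if it fits, and otherwise close this bin permanently and open a fresh one into which the current item is placed (it fits by the assumption that all item sizes lie in $(0,1]$). Let $k$ be the number of bins produced by Next-Fit on input $I$, and for $j \in [k]$ let $B_j$ denote the total size of items packed in the $j\Th$ bin.

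The key observation I would establish is that $B_j + B_{j+1} > 1$ for every $j \in [k-1]$. Indeed, let $i$ be the first item placed into bin $j+1$. Since Next-Fit opened a new bin for $i$, item $i$ did not fit in bin $j$, i.e., $B_j + \size(i) > 1$. Since $\size(i) \le B_{j+1}$, we obtain $B_j + B_{j+1} > 1$.

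Summing this inequality over $j = 1, 2, \ldots, k-1$ and noting that each $B_j$ appears in at most two of the pairs, I would get
\[
k - 1 \;<\; \sum_{j=1}^{k-1} (B_j + B_{j+1}) \;\le\; 2 \sum_{j=1}^{k} B_j \;=\; 2\,\size(I).
\]
Rearranging yields $k < 2\,\size(I) + 1$, which is the claimed bound.

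There is essentially no obstacle here; the only subtle point is making sure the telescoping/doubling step uses the correct factor (each $B_j$ appears in at most two consecutive-pair sums, giving the factor of $2$ rather than a tighter factor). Everything else follows from the basic invariant that Next-Fit never opens a new bin unless the incoming item, together with the contents of the currently open bin, would exceed the bin capacity.
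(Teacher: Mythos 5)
Your proof is correct and is the standard argument for this bound; the paper itself states \cref{thm:next-fit} without proof, citing Johnson's thesis, so there is nothing in the paper to compare against. Your pairing of consecutive bins, the observation $B_j + B_{j+1} > 1$, and the doubling step all go through (including the degenerate cases $k \le 1$, since a nonempty instance has $\size(I) > 0$).
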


\begin{lemma}[\cite{johnson-thesis}]
\label{thm:next-fit-small}
Let $I$ be a classical bin packing instance, where each item has size at most $\eps$.
Then the number of bins used by the Next-Fit algorithm~\cite{johnson-thesis} to pack $I$
is less than $\size(I)/(1-\eps) + 1$.
\end{lemma}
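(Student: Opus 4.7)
The plan is to exploit the defining property of Next-Fit: a new bin is opened only when the current item fails to fit in the current bin. Let $m$ be the number of bins Next-Fit uses, and for $j \in [m]$ let $s_j$ be the total size packed into bin $j$. For each $j \in [m-1]$, let $a_{j+1}$ denote the size of the first item placed into bin $j+1$. By the definition of Next-Fit, this item did not fit in bin $j$, so $s_j + a_{j+1} > 1$.

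Next I would use the small-item hypothesis. Since every item has size at most $\eps$, we have $a_{j+1} \le \eps$, and therefore $s_j > 1 - \eps$ for every $j \in [m-1]$. Summing over the first $m-1$ bins yields
\[
\size(I) \;\ge\; \sum_{j=1}^{m-1} s_j \;>\; (m-1)(1-\eps).
\]
Rearranging gives $m - 1 < \size(I)/(1-\eps)$, i.e. $m < \size(I)/(1-\eps) + 1$, which is the desired bound.

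There is essentially no obstacle here; the only thing to be careful about is handling the boundary case $\eps = 1$ (which is excluded since the statement implicitly assumes $\eps < 1$ so the denominator is positive) and being clean about strict versus non-strict inequality. The strict inequality $s_j + a_{j+1} > 1$ in the Next-Fit step is what yields the strict inequality in the final bound, matching the statement exactly.
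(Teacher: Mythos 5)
Your proof is correct and is the standard argument; the paper itself does not reprove this lemma but simply cites Johnson's thesis, and your argument is exactly the classical one (each of the first $m-1$ bins is filled to more than $1-\eps$ because the first item of the next bin, of size at most $\eps$, did not fit). No gaps.
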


\begin{lemma}[\cite{coffman1980performance}]
\label{thm:nfdh-strip}
Let $I$ be a set of rectangles of width at most 1.
Then $I$ can be packed (without rotation) into a rectangular bin of width 1
and height less than $2a(I) + max_{i \in I} h(i)$ using NFDH.
\end{lemma}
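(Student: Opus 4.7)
The plan is to analyze the shelf structure created by NFDH directly. First I would fix notation: NFDH sorts the items of $I$ in decreasing order of height and then packs them greedily into horizontal shelves, opening a new shelf on top of the current one whenever the next item does not fit horizontally. Let the resulting shelves be $S_1, S_2, \ldots, S_k$ from bottom to top, and let $h_i$ denote the height of shelf $S_i$, which equals the height of the first (and tallest) item placed on $S_i$. Since items are processed in decreasing order of height, $h_1 \ge h_2 \ge \cdots \ge h_k$, and every item on shelf $S_i$ has height at least $h_{i+1}$ (for $i<k$). The total height of the packing is $H = \sum_{i=1}^k h_i$, so it suffices to show $H < 2a(I) + h_1$.

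The key observation is a per-shelf area lower bound obtained from the "did not fit" condition. For each $i \in \{1,\ldots,k-1\}$, let $f_{i+1}$ be the first item placed on $S_{i+1}$ and let $w^*_{i+1} \defeq w(f_{i+1})$. Since NFDH opened shelf $S_{i+1}$, the item $f_{i+1}$ did not fit on $S_i$, so the total width of items on $S_i$ strictly exceeds $1 - w^*_{i+1}$. Combined with the height lower bound $\ge h_{i+1}$ on each item of $S_i$, this gives
\[
a(S_i) > (1 - w^*_{i+1}) h_{i+1}.
\]
On the other hand $a(S_{i+1}) \ge a(f_{i+1}) = w^*_{i+1} h_{i+1}$, so adding the two inequalities yields
\[
a(S_i) + a(S_{i+1}) > h_{i+1}.
\]

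Summing this inequality over $i = 1, 2, \ldots, k-1$ telescopes into a clean bound: the left side counts $a(S_1) + a(S_k)$ once and $a(S_j)$ twice for each $2 \le j \le k-1$, and is therefore at most $2a(I)$; the right side equals $h_2 + h_3 + \cdots + h_k = H - h_1$. Hence
\[
2a(I) \ge \sum_{i=1}^{k-1}(a(S_i)+a(S_{i+1})) > H - h_1,
\]
which rearranges to $H < 2a(I) + h_1 \le 2a(I) + \max_{i \in I} h(i)$, as required. There is no real obstacle here; the only care needed is to verify the strict inequality (it comes from the strict "did not fit" condition $W_i + w^*_{i+1} > 1$) and the edge case $k=1$, where the packing has height exactly $h_1 \le \max_i h(i) < 2a(I) + \max_i h(i)$ provided $a(I) > 0$, which holds whenever $I$ is nonempty.
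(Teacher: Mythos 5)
Your proof is correct. The paper itself does not prove this lemma---it cites Coffman, Garey, Johnson, and Tarjan for it---so there is no in-paper proof to compare against directly. The closest thing in the paper is its proof of the neighboring \cref{thm:nfdh-strip-small} (widths at most $\eps$), which uses the single-shelf lower bound $a(S_j) > h_{j+1}(W-\eps)$: since every width is at most $\eps$, the occupied width of any non-final shelf exceeds $W-\eps$. That argument does not carry over when widths can be as large as $1$, because a shelf can be nearly empty and still reject the next item. Your fix is the standard one for the general bound: pair each shelf with its successor to get $a(S_i) + a(S_{i+1}) > (1-w^*_{i+1})h_{i+1} + w^*_{i+1}h_{i+1} = h_{i+1}$, then sum and observe that each $a(S_j)$ is counted at most twice, which is precisely where the factor of $2$ comes from. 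The edge case $k=1$ and the strictness of the inequality are both handled correctly.
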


\begin{lemma}
\label{thm:nfdh-strip-small}
Let $I$ be a set of rectangles of width at most $\eps$.
Then $I$ can be packed (without rotation) into a rectangular bin of width $W$
and height less than $a(I)/(W-\eps) + max_{i \in I} h(i)$ using NFDH.
\end{lemma}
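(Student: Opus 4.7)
The plan is to adapt the classical shelf-based analysis of NFDH (underlying \cref{thm:nfdh-strip}) to exploit the hypothesis that every item has width at most~$\eps$, not merely at most~$W$. Recall that NFDH sorts the items in non-increasing order of height and greedily arranges them left-to-right on horizontal \emph{shelves}: a new shelf is opened directly above the current one, with height equal to that of the next item, whenever the next item does not fit in the remaining horizontal space of the current shelf.

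Let the shelves produced by NFDH on input~$I$ be $s_1, s_2, \ldots, s_k$, and let $h_j$ denote the height of shelf~$s_j$, which equals the height of the first (and tallest) item placed on it. Since items are processed in non-increasing height order, $h_1 \ge h_2 \ge \cdots \ge h_k$, and moreover every item on shelf~$s_j$ has height at least~$h_{j+1}$. The total height used by the packing is exactly $H = \sum_{j=1}^k h_j$, and because $h_1 \le \max_{i \in I} h(i)$ the target inequality reduces to showing $\sum_{j=2}^k h_j < a(I)/(W-\eps)$.

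The main step is an area lower bound on each non-final shelf. For each $j \ge 2$, the first item placed on shelf~$s_j$ failed to fit on $s_{j-1}$; since that item has width at most~$\eps$, the total width of items already on $s_{j-1}$ must strictly exceed $W - \eps$. Combined with the fact that every item on $s_{j-1}$ has height at least $h_j$, this gives that the area of items on $s_{j-1}$ strictly exceeds $(W-\eps)\,h_j$. Summing over $j = 2, \ldots, k$ telescopes to $a(I) > (W-\eps)\sum_{j=2}^k h_j$, which rearranges into $H < a(I)/(W-\eps) + h_1$ and hence into the claimed bound. The only edge cases are $k = 1$ (then $H = h_1 \le \max h(i)$, and the strict inequality follows as long as $I$ is nonempty) and $I = \emptyset$ (trivial). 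No real obstacle is anticipated: the argument is essentially that of \cref{thm:nfdh-strip}, with the bookkeeping sharpened so that the ``wasted'' horizontal space on each filled shelf is at most~$\eps$ rather than up to the full width of an item.
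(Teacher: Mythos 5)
Your proof is correct and follows essentially the same route as the paper's: lower-bound the area of each non-final shelf $s_{j-1}$ by $(W-\eps)h_j$ using the facts that the opening item of $s_j$ (width $\le \eps$) did not fit and that all items on $s_{j-1}$ have height $\ge h_j$, then telescope. The only difference is that you explicitly treat the $k=1$ and $I=\emptyset$ edge cases, which the paper leaves implicit.
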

\begin{proof}
Let there be $p$ shelves output by NFDH.
Let $S_j$ be the items in the $j\Th$ shelf.
Let $h_j$ be the height of the $j\Th$ shelf.
Let $H$ be the sum of heights of all the shelves.

For $j \in [p-1]$, in the $j\Th$ shelf, the total width of items is
more than $(W-\eps)$ and each item has height more than $h_{j+1}$.
Therefore, $a(S_j) > h_{j+1}(W-\eps)$.

Let $H$ be the sum of heights of all the shelves. Then
$a(I) > \sum_{i=1}^{p-1} a(S_j) \ge \sum_{i=1}^{p-1} h_{j+1}(W-\eps) \ge (W-\eps)(H - h_1)$.
This implies,$H < ({a(I)}/{W-\eps}) + h_1$.
\end{proof}

\rthmNfdhSmall*
\begin{proof}
NFDH packs the items into shelves of width $W$.
Let $\Htild$ be the total height of the shelves.
By \cref{thm:nfdh-strip-small}, we get
$\Htild < a(I)/(W-\delta_W) + \delta_H \le H$.
Therefore, NFDH can fit $I$ into the bin.
\end{proof}

\begin{lemma}
Let $I$ be a set of rectangular items where each item has height at most $\delta$.
Then the number of bins required by NFDH to pack $I$ is less than
$(2a(I)+1)/(1-\delta)$.
\end{lemma}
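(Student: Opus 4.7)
The plan is to combine \cref{thm:nfdh-strip} with \cref{thm:next-fit-small} in a shelf-based argument. First, I would apply NFDH to pack $I$ into a strip of width $1$; by \cref{thm:nfdh-strip}, the resulting packing has total height less than $2a(I) + \max_{i \in I} h(i) \le 2a(I) + \delta$, since every item has height at most $\delta$.

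Next, I would view each shelf produced by NFDH as a one-dimensional item whose size equals the shelf's height. Because each shelf's height equals the height of its tallest item, every such one-dimensional item has size at most $\delta$. I would then pack these one-dimensional items into bins (of unit size) using Next-Fit, and interpret the resulting 1D packing as stacking the shelves vertically inside 2D unit bins.

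Applying \cref{thm:next-fit-small} with $\eps = \delta$ to the 1D instance whose total size $H$ is less than $2a(I) + \delta$, the number of bins used is less than
\[
\frac{H}{1-\delta} + 1 < \frac{2a(I) + \delta}{1-\delta} + 1 = \frac{2a(I) + \delta + (1-\delta)}{1-\delta} = \frac{2a(I) + 1}{1-\delta},
\]
which gives the stated bound. I expect no real obstacle here; the only subtlety is making clear that the shelves of NFDH, having height at most $\delta$, can be legitimately treated as 1D items of size $\le \delta$ when packing them vertically into bins of height $1$, so that \cref{thm:next-fit-small} applies directly.
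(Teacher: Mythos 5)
Your proof is correct and follows exactly the same route as the paper: bound the total shelf height by \cref{thm:nfdh-strip}, note each shelf has height at most $\delta$, then apply \cref{thm:next-fit-small} to the resulting 1D instance. No differences worth noting.
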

\begin{proof}
The bin packing version of NFDH first packs $I$ into shelves
and then packs the shelves into bins using Next-Fit.
Let $H$ be the sum of heights of all the shelves.
By \cref{thm:nfdh-strip}, $H < 2a(I) + \delta$.
By \cref{thm:next-fit-small}, the number of bins
is less than $1 + H/(1-\delta) < (2a(I)+1)/(1-\delta)$.
\end{proof}

\begin{lemma}
Let $I$ be a set of rectangular items where each item has width at most $\delta$.
Then the number of bins required by NFDH to pack $I$ is less than $2a(I)/(1-\delta) + 3$.
\end{lemma}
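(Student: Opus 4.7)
The plan is to mirror the proof of the preceding lemma (for small-height items), but swap which of the two building blocks carries the $\delta$-dependence. In the small-height case, the authors used the weak strip-packing bound \cref{thm:nfdh-strip} to get $H < 2a(I) + \delta$, and then exploited small shelf heights via \cref{thm:next-fit-small} to convert strip height into bin count with the factor $1/(1-\delta)$. For small-width items, shelves can be tall (height up to $1$), so \cref{thm:next-fit-small} is useless; however, in this regime the strip-packing step itself can be strengthened using \cref{thm:nfdh-strip-small}, which is precisely where the width bound $\delta$ helps.

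Concretely, the proof would go in two steps. First, treat the bin packing as a strip packing into a strip of width $W = 1$. Since every item has width at most $\delta$, \cref{thm:nfdh-strip-small} (applied with $W = 1$ and $\eps = \delta$) produces shelves whose total height satisfies
\[
H \;<\; \frac{a(I)}{1-\delta} + \max_{i \in I} h(i) \;\le\; \frac{a(I)}{1-\delta} + 1.
\]
Second, pack these shelves (viewed as 1-D items with sizes equal to their heights) into bins of capacity $1$ using Next-Fit, which is exactly the bin packing version of NFDH. By \cref{thm:next-fit}, the number of bins is strictly less than $2H + 1$.

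Combining the two inequalities gives
\[
\text{\# bins} \;<\; 2H + 1 \;<\; 2\!\left(\frac{a(I)}{1-\delta} + 1\right) + 1 \;=\; \frac{2a(I)}{1-\delta} + 3,
\]
which is the desired bound. There is no real obstacle here: both ingredients are already proved in \cref{sec:nfdh}, and the only subtlety is choosing the correct strip-packing lemma (\cref{thm:nfdh-strip-small} rather than \cref{thm:nfdh-strip}) so that the width bound $\delta$ is absorbed at the strip-packing stage instead of at the shelf-to-bin stage.
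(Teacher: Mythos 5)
Your proof matches the paper's proof exactly: both apply \cref{thm:nfdh-strip-small} with $W=1$ and $\eps=\delta$ to bound the total shelf height by $a(I)/(1-\delta)+1$, then invoke \cref{thm:next-fit} to pack the shelves, giving fewer than $2H+1 < 2a(I)/(1-\delta)+3$ bins. No issues.
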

\begin{proof}
The bin packing version of NFDH first packs $I$ into shelves
and then packs the shelves into bins using Next-Fit.
Let $H$ be the sum of heights of all the shelves.
By \cref{thm:nfdh-strip-small}, $H < a(I)/(1-\delta) + 1$.
By \cref{thm:next-fit}, the number of bins
is less than $2H + 1 < 2a(I)/(1-\delta) + 3$.
\end{proof}

\begin{lemma}
\label{thm:nfdh-small-2}
Let $I$ be a set of rectangular items where each item has width at most $\delta_W$
and height at most $\delta_H$. Then the number of bins required by NFDH to pack $I$
is at most $a(I)/(1-\delta_W)(1-\delta_H) + 1/(1-\delta_H)$.
\end{lemma}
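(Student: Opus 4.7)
The plan is to combine the strip-packing bound already proved for NFDH (Lemma~\ref{thm:nfdh-strip-small}) with a simple bin-splitting argument, using the fact that NFDH processes items in non-increasing order of height so the resulting shelf heights are non-increasing across the entire packing (not just within one bin).

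First I would analyze the shelves globally. Run NFDH, and list all shelves $S_1, S_2, \ldots, S_m$ in the order in which they are opened, with heights $h_1 \ge h_2 \ge \cdots \ge h_m$; monotonicity holds because items are sorted in decreasing order of height before packing, and a new shelf is opened by an item that could not fit in the current one. Applying the same telescoping argument as in Lemma~\ref{thm:nfdh-strip-small}, for each $j < m$ the total width of items in $S_j$ exceeds $1-\delta_W$ and each such item has height greater than $h_{j+1}$, hence $a(S_j) > h_{j+1}(1-\delta_W)$. Summing over $j=1,\ldots,m-1$ gives $a(I) > (1-\delta_W)(H - h_1)$, where $H \defeq \sum_{j=1}^m h_j$ is the total shelf height. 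Thus $H < a(I)/(1-\delta_W) + h_1 \le a(I)/(1-\delta_W) + \delta_H$.

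Next I would convert this bound on total shelf height into a bound on the number of bins. Let $p$ be the number of bins used. For each bin $k \in [p-1]$, the reason a new bin is opened is that the first shelf of bin $k+1$ did not fit on top of bin $k$; since that shelf has height at most $\delta_H$, the total shelf height inside bin $k$ must exceed $1-\delta_H$. Summing over the first $p-1$ bins, $(p-1)(1-\delta_H) < H$, so $p < H/(1-\delta_H) + 1$. Plugging in the bound on $H$ and using the identity $\delta_H/(1-\delta_H) + 1 = 1/(1-\delta_H)$ yields
\[
p \;<\; \frac{a(I)/(1-\delta_W) + \delta_H}{1-\delta_H} + 1 \;=\; \frac{a(I)}{(1-\delta_W)(1-\delta_H)} + \frac{1}{1-\delta_H},
\]
which is the claimed inequality (with the stronger strict sign, hence also the non-strict one in the statement).

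I do not expect any real obstacle; the only subtlety is making sure the shelf heights are globally non-increasing across bins (so that the telescoping argument applies to the concatenation of all bins' shelves and not just to one strip), and that the bin-closing step contributes at least $1-\delta_H$ rather than only $1-h_{j+1}$. Both follow immediately from the decreasing-height ordering used by NFDH and from the fact that every item has height at most $\delta_H$.
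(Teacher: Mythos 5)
Your proof is correct and follows essentially the same route as the paper: it first uses the shelf-height bound of Lemma~\ref{thm:nfdh-strip-small} to get $H < a(I)/(1-\delta_W) + \delta_H$, and then applies the Next-Fit-with-small-items bound to convert total shelf height into a bin count. The only difference is that you re-derive the bound $p < H/(1-\delta_H) + 1$ inline, whereas the paper simply cites it as Lemma~\ref{thm:next-fit-small}.
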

\begin{proof}
The bin packing version of NFDH first packs $I$ into shelves
and then packs the shelves into bins using Next-Fit.
Let $H$ be the sum of heights of all the shelves.
By \cref{thm:nfdh-strip-small}, $H < a(I)/(1-\delta_W) + \delta_H$.
By \cref{thm:next-fit-small}, the number of bins
is less than $1 + H/(1-\delta_H) < a(I)/(1-\delta_W)(1-\delta_H) + 1/(1-\delta_H)$.
\end{proof}

\section{Linear Grouping}
\label{sec:lingroup}

In this section, we describe the \emph{linear grouping}
technique~\cite{bp-aptas,kenyon1996strip} for wide and tall items.

Let $\eps$ and $\epsLarge$ be constants in $(0, 1)$.
Let $W$ be a set of items where each item has width more than $\epsLarge$.
We will describe an algorithm, called $\lingroupWide$ that takes
$W$, $\eps$ and $\epsLarge$ as input and returns the set $\What$ as output,
where $\What$ is obtained by increasing the width of each item in $W$.

$\lingroupWide(W, \eps, \epsLarge)$ first arranges the items $W$
in decreasing order of width and stacks them one-over-the-other
(i.e., the widest item in $W$ is at the bottom).
Let $h_L$ be the height of the stack.
Let $y(i)$ be the $y$-coordinate of the bottom edge of item $i$.
Split the stack into sections of height $\eps\epsLarge h_L$ each.
For $j \in [1/\eps\epsLarge]$, let $w_j$ be the width of the
widest item intersecting the $j\Th$ section from the bottom, i.e.,
\[ w_j \defeq \max(\{w(i): i \in W \textrm{ and } (y(i), y(i) + h(i))
    \cap ((j-1)\eps\epsLarge h_L, j\eps\epsLarge h_L) \neq \emptyset\}). \]
Round up the width of each item $i$ to the smallest $w_j$ that is at least $w(i)$
(see \cref{fig:lingroup}).
Let $W_j$ be the items whose width got rounded to $w_j$
and let $\What_j$ be the resulting rounded items.
(There may be ties, i.e., there may exist $j_1 < j_2$ such that $w_{j_1} = w_{j_2}$.
In that case, define $W_{j_2} \defeq \What_{j_2} = \emptyset$.
This ensures that all $W_j$ are disjoint.)
Finally, define $\What \defeq \bigcup_j \What_j$.

\begin{figure}[htb]
\centering
\ifcsname pGameL\endcsname\else\newlength{\pGameL}\fi
\ifcsname pGameM\endcsname\else\newlength{\pGameM}\fi
\setlength{\pGameL}{0.15cm}
\setlength{\pGameM}{0.07cm}
\tikzset{bin/.style={draw,thick}}
\tikzset{binGrid/.style={draw,xstep=1\pGameL,ystep=1\pGameM,{black!20}}}
\tikzset{item/.style={draw,fill={black!20}}}
\tikzset{sepline/.style={draw,dashed,thick}}
\tikzset{limline/.style={densely dashed,ultra thick}}
\tikzset{myarrow/.style={->,>={Stealth},thick}}
\tikzset{mybrace/.style={decoration={amplitude=7pt,brace,mirror,raise=2pt},semithick,decorate}}
\definecolor{color1}{Hsb}{0,0.4,0.9}
\definecolor{color1d}{Hsb}{0,0.8,0.6}
\definecolor{color2}{Hsb}{60,0.4,0.9}
\definecolor{color2d}{Hsb}{60,0.8,0.6}
\definecolor{color3}{Hsb}{135,0.4,0.9}
\definecolor{color3d}{Hsb}{135,0.8,0.6}
\definecolor{color4}{Hsb}{240,0.4,0.9}
\definecolor{color4d}{Hsb}{240,0.8,0.6}
\begin{tikzpicture}[rotate=90]
\begin{scope}
\path[item] (0\pGameL, 0\pGameM) rectangle +(2\pGameL, 78\pGameM);
\path[item] (2\pGameL, 0\pGameM) rectangle +(3\pGameL, 76\pGameM);
\path[item] (5\pGameL, 0\pGameM) rectangle +(3\pGameL, 76\pGameM);
\path[item] (8\pGameL, 0\pGameM) rectangle +(2\pGameL, 72\pGameM);
\path[item] (10\pGameL, 0\pGameM) rectangle +(3\pGameL, 69\pGameM);
\path[item] (13\pGameL, 0\pGameM) rectangle +(3\pGameL, 67\pGameM);
\path[item] (16\pGameL, 0\pGameM) rectangle +(2\pGameL, 62\pGameM);
\path[item] (18\pGameL, 0\pGameM) rectangle +(4\pGameL, 57\pGameM);
\path[item] (22\pGameL, 0\pGameM) rectangle +(2\pGameL, 56\pGameM);
\path[item] (24\pGameL, 0\pGameM) rectangle +(3\pGameL, 53\pGameM);
\path[item] (27\pGameL, 0\pGameM) rectangle +(5\pGameL, 50\pGameM);
\path[item] (32\pGameL, 0\pGameM) rectangle +(2\pGameL, 50\pGameM);
\path[item] (34\pGameL, 0\pGameM) rectangle +(3\pGameL, 45\pGameM);
\path[item] (37\pGameL, 0\pGameM) rectangle +(3\pGameL, 44\pGameM);
\path[sepline] (10\pGameL, -2\pGameM) -- (10\pGameL, 80\pGameM);
\path[sepline] (20\pGameL, -2\pGameM) -- (20\pGameL, 80\pGameM);
\path[sepline] (30\pGameL, -2\pGameM) -- (30\pGameL, 80\pGameM);
\draw[limline,color1d] (0\pGameL, 78\pGameM) -- (40\pGameL, 78\pGameM);
\draw[limline,color2d] (0\pGameL, 69\pGameM) -- (40\pGameL, 69\pGameM);
\draw[limline,color3d] (0\pGameL, 57\pGameM) -- (40\pGameL, 57\pGameM);
\draw[limline,color4d] (0\pGameL, 50\pGameM) -- (40\pGameL, 50\pGameM);
\node[anchor=north] at (0\pGameL, 78\pGameM) {$w_1$};
\node[anchor=north] at (0\pGameL, 69\pGameM) {$w_2$};
\node[anchor=north] at (0\pGameL, 57\pGameM) {$w_3$};
\node[anchor=north] at (0\pGameL, 50\pGameM) {$w_4$};
\fill[color1d] (0\pGameL, 78\pGameM) circle [radius=2.5pt];
\fill[color2d] (10\pGameL, 69\pGameM) circle [radius=2.5pt];
\fill[color3d] (20\pGameL, 57\pGameM) circle [radius=2.5pt];
\fill[color4d] (30\pGameL, 50\pGameM) circle [radius=2.5pt];
\draw[mybrace] (0,0) -- node[right=7pt] {$\eps\epsLarge h_L$} +(10\pGameL,0);
\end{scope}
\draw[myarrow] (22\pGameL, -5\pGameM) -- (22\pGameL, -25\pGameM);
\begin{scope}[yshift=-7.5cm]
\path[item,fill=color1] (0\pGameL, 0\pGameM) rectangle +(2\pGameL, 78\pGameM);
\path[item,fill=color1] (2\pGameL, 0\pGameM) rectangle +(3\pGameL, 78\pGameM);
\path[item,fill=color1] (5\pGameL, 0\pGameM) rectangle +(3\pGameL, 78\pGameM);
\path[item,fill=color1] (8\pGameL, 0\pGameM) rectangle +(2\pGameL, 78\pGameM);
\path[item,fill=color2] (10\pGameL, 0\pGameM) rectangle +(3\pGameL, 69\pGameM);
\path[item,fill=color2] (13\pGameL, 0\pGameM) rectangle +(3\pGameL, 69\pGameM);
\path[item,fill=color2] (16\pGameL, 0\pGameM) rectangle +(2\pGameL, 69\pGameM);
\path[item,fill=color3] (18\pGameL, 0\pGameM) rectangle +(4\pGameL, 57\pGameM);
\path[item,fill=color3] (22\pGameL, 0\pGameM) rectangle +(2\pGameL, 57\pGameM);
\path[item,fill=color3] (24\pGameL, 0\pGameM) rectangle +(3\pGameL, 57\pGameM);
\path[item,fill=color4] (27\pGameL, 0\pGameM) rectangle +(5\pGameL, 50\pGameM);
\path[item,fill=color4] (32\pGameL, 0\pGameM) rectangle +(2\pGameL, 50\pGameM);
\path[item,fill=color4] (34\pGameL, 0\pGameM) rectangle +(3\pGameL, 50\pGameM);
\path[item,fill=color4] (37\pGameL, 0\pGameM) rectangle +(3\pGameL, 50\pGameM);
\path[sepline] (10\pGameL, -2\pGameM) -- (10\pGameL, 80\pGameM);
\path[sepline] (20\pGameL, -2\pGameM) -- (20\pGameL, 80\pGameM);
\path[sepline] (30\pGameL, -2\pGameM) -- (30\pGameL, 80\pGameM);
\end{scope}
\end{tikzpicture}

\caption{Example invocation of $\lingroupWide$ for $\eps = \epsLarge = 1/2$.}
\label{fig:lingroup}
\end{figure}

We can similarly define the algorithm $\lingroupTall$.
Let $H$ be a set of items where each item has height more than $\epsLarge$.
$\lingroupTall$ that takes $H$, $\eps$ and $\epsLarge$ as input and returns $\Hhat$,
where $\Hhat$ is obtained by increasing the width of each item in $H$.

\begin{claim}
\label{thm:lingroup-n}
Items in $\lingroupWide(W, \eps, \epsLarge)$
have at most $1/(\eps\epsLarge)$ distinct widths.
\\ Items in $\lingroupTall(H, \eps, \epsLarge)$
have at most $1/(\eps\epsLarge)$ distinct heights.
\end{claim}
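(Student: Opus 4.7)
The plan is to argue directly from the construction that the number of distinct rounded widths in $\lingroupWide$ is upper bounded by the number of sections created in the stacking step, which is exactly $1/(\eps\epsLarge)$ by design. Since $\eps^{-1}, \epsLarge^{-1} \in \mathbb{Z}$ (as assumed earlier in the paper), the stack of height $h_L$ is partitioned into exactly $1/(\eps\epsLarge)$ sections of height $\eps\epsLarge h_L$, indexed by $j \in [1/(\eps\epsLarge)]$.

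First I would observe that every item $i \in W$ is rounded to one of the widths $w_1, w_2, \ldots, w_{1/(\eps\epsLarge)}$, since the rounding rule explicitly sends $w(i)$ to the smallest $w_j$ that is at least $w(i)$, and such a $w_j$ exists because the widest item in the bottommost section has width $w_1 = \max_{i \in W} w(i)$. Hence the set of widths appearing in $\What = \bigcup_j \What_j$ is a subset of $\{w_1, \ldots, w_{1/(\eps\epsLarge)}\}$, giving at most $1/(\eps\epsLarge)$ distinct values. Ties (where $w_{j_1} = w_{j_2}$ for some $j_1 < j_2$) only strengthen the bound, since they collapse distinct indices into a single width, and the construction explicitly empties the duplicate classes $W_{j_2}$ so the partition remains well-defined.

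The statement for $\lingroupTall$ follows by the same argument after swapping the roles of width and height: the stack is formed in decreasing order of height, divided into $1/(\eps\epsLarge)$ sections, and heights are rounded up to at most $1/(\eps\epsLarge)$ distinct values $h_1, \ldots, h_{1/(\eps\epsLarge)}$.

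There is no real obstacle here; the claim is essentially a restatement of the construction. The only thing to be careful about is to note explicitly that the number of sections is an integer (so we do not need a ceiling), which follows from the standing integrality assumptions on $\eps^{-1}$ and $\epsLarge^{-1}$, and to address the edge case of ties in the $w_j$ values, which only decreases the count.
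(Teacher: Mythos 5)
Your proposal is correct and matches the reasoning the paper implicitly relies on: the paper states this as a Claim without proof, treating it as immediate from the construction, and your argument is exactly the obvious one — the rounding map sends every width to one of the section maxima $w_1,\ldots,w_{1/(\eps\epsLarge)}$, so at most $1/(\eps\epsLarge)$ distinct widths can appear. Your remarks on integrality of $1/(\eps\epsLarge)$ and the tie-breaking convention are accurate and consistent with the paper's conventions.
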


\begin{lemma}
\label{thm:lingroup-repack}
Let $W$, $H$ and $S$ be sets of items,
where items in $W$ have width more than $\epsLarge$
and items in $H$ have height more than $\epsLarge$.
Let $\What \defeq \lingroupWide(W, \eps, \epsLarge)$
and $\Hhat \defeq \lingroupTall(H, \eps, \epsLarge)$.
If we allow slicing items in $\What$ and $\Hhat$ using horizontal and vertical cuts,
respectively, then we can pack $\What \cup \Hhat \cup S$ into
less than $(1+\eps)\opt(W \cup H \cup S) + 2$ bins.
\end{lemma}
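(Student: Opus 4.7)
The plan follows the classical linear-grouping strategy. Given an optimal packing $\pi^*$ of $W \cup H \cup S$ into $\opt \defeq \opt(W \cup H \cup S)$ bins, I would construct a fractional packing of $\What \cup \Hhat \cup S$ in two stages: first, fractionally substitute $W$ and $H$ inside $\pi^*$ by $\What_{\ge 2}$ and $\Hhat_{\ge 2}$ respectively (keeping $S$ in place); then pack the leftover topmost groups $\What_1$ and $\Hhat_1$ into a small number of extra bins.

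First, I would bound the leftover sizes by a simple area argument. Since every item in $W$ has width strictly greater than $\epsLarge$, we have $a(W) > \epsLarge h_L$. By construction of $\lingroupWide$, the items of $\What_1$ all lie in the bottom section (of height $\eps\epsLarge h_L$) of the sorted stack, so $h(\What_1) = h(W_1) \le \eps\epsLarge h_L \le \eps \cdot a(W)$. The symmetric statement gives $w(\Hhat_1) \le \eps \cdot a(H)$, and adding yields $h(\What_1) + w(\Hhat_1) \le \eps(a(W) + a(H)) \le \eps \opt$, since total item area is at most the number of bins.

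Second, I would substitute inside $\pi^*$: for each $j \ge 2$, the items in $\What_j$ have width exactly $w_j$, which is strictly less than the width of every item in $W_{j-1}$, so with horizontal slicing any slice of $\What_j$ fits in the space previously held by an item of $W_{j-1}$. A careful slice-aggregation argument shows that $\What_{\ge 2}$, whose total height is $h_L - h(W_1)$, fractionally fits into the total $W$-slot capacity of height $h_L$, using the slack $h(W_1)$ together with $\pi^*$'s area slack to absorb mismatches between $h(W_j)$ and $h(W_{j-1})$ across groups. A symmetric argument (vertical slicing) fractionally packs $\Hhat_{\ge 2}$ into the $H$-space. This yields a fractional packing of $\What_{\ge 2} \cup \Hhat_{\ge 2} \cup S$ into $\opt$ bins.

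Finally, $\What_1$ and $\Hhat_1$ are packed separately: each item of $\What_1$ has width at most $w_1 \le 1$, so stacking and slicing horizontally packs $\What_1$ into at most $\lceil h(\What_1) \rceil$ bins, and analogously $\Hhat_1$ into at most $\lceil w(\Hhat_1) \rceil$ bins. The total extra count is at most $h(\What_1) + w(\Hhat_1) + 2 \le \eps \opt + 2$, giving the claimed bound $(1+\eps)\opt + 2$. The main obstacle is the substitution in the second step: since the per-group heights $h(W_j)$ are not controlled relative to $h(W_{j-1})$, individual groups cannot be substituted one-for-one, and the feasibility of the global slice-flow (exploiting the $h(W_1)$ slack and the allowed horizontal slicing of $\What$) is the technical crux that requires the most care.
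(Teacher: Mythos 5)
Your three-step plan---fractionally substitute rounded items into an optimal packing $\pi^*$, then pack the leftover first group into roughly $\eps\opt(W\cup H\cup S)+2$ extra bins---has the right shape and matches the paper's outline. But the difficulty you flag at the end is not a ``technical crux that requires care''; with the grouping you chose, it fails outright. You group by rounded width, forming $W_j$ and $\What_j$, and try to slice-flow $\What_{\ge 2}$ into the $W$-slots of $\pi^*$. The paper's proof instead groups by \emph{sections of the sorted stack}: it defines $W_j'$ to be the slices of items lying between heights $(j-1)\eps\epsLarge h_L$ and $j\eps\epsLarge h_L$, and $\What_{j+1}'$ to be the rounded slices from section $j+1$. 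Each $W_j'$ has total height \emph{exactly} $\eps\epsLarge h_L$, and the narrowest slice in $W_j'$ has width $w_{j+1}$, while every rounded slice of $\What_{j+1}'$ has width at most $w_{j+1}$. So $\What_{j+1}'$ replaces $W_j'$ slice-for-slice with nothing to verify---no Hall-type flow argument is needed. The equal-height property of section slices is the missing idea that makes the substitution go through, and it is absent from your rounding-group decomposition.

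Concretely, with rounding groups the tightest Hall condition, for the demand set $\{\What_2\}$, reads $h(\{i \in W : w_3 < w(i) < w_2\}) \le h(W_1)$, and this can be false. Take a stack where the widest item intersecting section $2$ starts just above height $0$ (so $h(W_1)$ is nearly $0$) and has height about $\eps\epsLarge h_L$; then most of section $2$ is filled by items with width strictly between $w_3$ and $w_2$, all of which round up to $w_2$. After rounding, $\What_2$ needs spaces of width at least $w_2$, but $\pi^*$ supplies only $h(W_1)$ plus one item's worth of such space, which is far less than $h(\What_2)$; the ``area slack'' of $\pi^*$ cannot help because that slack has no width guarantee. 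Also, your bound $h(\What_1) = h(W_1) \le \eps\epsLarge h_L$ is not robust under the paper's tie-breaking: when several consecutive $w_j$ coincide, items from many sections collapse into $W_1$. Both problems disappear once you switch from rounding groups $W_j$ to section slices $W_j'$, which is exactly the step the paper's proof takes.
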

\begin{proof}
$\lingroupWide(W, \eps, \epsLarge)$ arranges the items $W$
in decreasing order of width and stacks them one-over-the-other.
Let $h_L$ be the height of the stack.
Split the stack into sections of height $\eps\epsLarge h_L$ each.
For $j \in [1/\eps\epsLarge]$, let $W_j'$ be the slices of items
that lie between heights $(j-1)\eps\epsLarge h_L$ and $j\eps\epsLarge h_L$.
Let $\What_j'$ be the corresponding rounded items from $W_j'$.
Similarly define $H_j'$ and $\Hhat_j'$.

Consider the optimal packing of $W \cup H \cup S$.
To convert this to a packing of $\What \cup \Hhat \cup S - (\What_1' \cup \Hhat_1')$,
unpack $W_1'$ and $H_1'$, and for each $j \in [1/\eps\epsLarge-1]$,
pack $\What_{j+1}'$ in the place of $W_j'$
and pack $\Hhat_{j+1}'$ in the place of $H_j'$,
possibly after slicing the items.
This gives us a packing of $\What \cup \Hhat \cup S - (\What_1' \cup \Hhat_1')$
into $m_1$ bins, where $m_1 \le \opt(W \cup H \cup S)$.

We can pack $\Hhat_1'$ by stacking the items side-by-side on the base of bins.
We can pack $\What_1'$ into bins by stacking the items one-over-the-other.
Let $h_L$ be the total height of items in $\What$.
Let $w_L$ be the total width of items in $\Hhat$.
This gives us a packing of $\What_1' \cup \Hhat_1'$ into
$m_2 \defeq \smallceil{\eps\epsLarge h_L} + \smallceil{\eps\epsLarge w_L}$ bins.
Since
$\opt(W \cup H \cup S) \ge a(W) + a(H) \ge \epsLarge(h_L + w_L)$,
we get
$m_2 = \smallceil{\eps\epsLarge h_L} + \smallceil{\eps\epsLarge w_L}
< \eps\opt(W \cup H \cup S) + 2$.
Therefore, we get a packing of $\What \cup \Hhat \cup S$ into $m_1 + m_2$ bins and
$m_1 + m_2 < (1+\eps)\opt(W \cup H \cup S) + 2$.
\end{proof}

\section{Details of \texorpdfstring{$\thinGPack$}{skewed4Pack}}
\label{sec:guill-thin-extra}

\subsection{Creating Shelves}
\label{sec:guill-thin-extra:shelves}

Here we will describe how to obtain shelves $\Wtild$ and $\Htild$
from items $\What$ and $\Hhat$, respectively.

Since we allow horizontally slicing items in $\What$,
a packing of $\What$ into $m$ bins gives us a
packing of $\What$ into a strip of height $m$,
and a packing of $\What$ into a strip of height $h'$ gives us a
packing of $\What$ into $\smallceil{h'}$ bins.
Hence, if we denote the optimal strip packing of $\What$ by $\optSP(\What)$,
then $\opt(\What) = \smallceil{\optSP(\What)}$.
We will now try to compute a near-optimal strip packing of $\What$.

Define a horizontal configuration $S$ as a tuple of $1/\eps^2+1$ non-negative integers,
where $S_0 \in \{0, 1\}$ and $\sum_{j=1}^{1/\eps^2} S_jw_j \le 1$.
For any horizontal line at height $y$ in a strip packing of $\What$,
the multiset of items intersecting the line corresponds to a configuration.
$S_0$ indicates whether the line intersects items from $\WSmall$,
and $S_j$ is the number of items from $\WhatLarge_j$ that the line intersects.
Let $\Scal$ be the set of all horizontal configurations. Let $N \defeq |\Scal|$.

To obtain an optimal packing, we need to determine the height of each configuration.
This can be done with the following linear program.
\[ \optprog{\min_{x \in \mathbb{R}^N}}{\sum_{S \in \Scal} x_S}{
\\[1.5em] \textrm{where} & \sum_{S \in \Scal} S_jx_S = h(\WhatLarge_j)
    & \forall j \in [1/\eps^2]
\\ \textrm{and} & \sum_{S: S_0=1} \left(1 - \sum_{j=1}^{1/\eps^2}S_jw_j\right)x_S
    = a(\WSmall)
\\ \textrm{and} & x_S \ge 0 & \forall S \in \Scal
} \]
Let $x^*$ be an optimal extreme-point solution to the above LP.
This gives us a packing where the strip is divided into rectangular regions
called \emph{shelves} that are stacked on top of each other.
Each shelf has a configuration $S$ associated with it
and has height $h(S) \defeq x^*_S$ and contains $S_j$ \emph{containers} of width $w_j$.
Containers of width $w_j$ only contain items from $\WhatLarge_j$,
and we call them \emph{type-$j$} containers.
If $S_0 = 1$, $S$ also contains a container of width $1 - \sum_{j=1}^{1/\eps^2} S_jw_j$
that contains small items. We call this container a \emph{type-$0$} container.
Each container is fully filled with items.
Let $w(S)$ denote the width of shelf $S$, i.e., the sum of widths of all containers in $S$.
Note that if $S_0 = 1$, then $w(S) = 1$. Otherwise, $w(S) = \sum_{j=1}^{1/\eps^2} S_jw_j$.

\begin{lemma}
\label{thm:nonneg-entries}
$x^*$ contains at most $1/\eps^2+1$ positive entries.
\end{lemma}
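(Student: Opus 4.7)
The plan is to apply the Rank Lemma stated in the preliminaries (recorded in the \verb|\rankLemmaNote| macro): for any extreme-point solution of an LP, the number of variables taking a nonzero value is bounded by the number of non-trivial constraints (i.e., constraints other than simple variable bounds).

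First I would classify the constraints of the LP. The non-negativity constraints $x_S \ge 0$ are trivial (simple lower bounds on individual variables) and do not count. The remaining constraints are the $1/\eps^2$ equalities $\sum_{S} S_j x_S = h(\WhatLarge_j)$, one for each $j \in [1/\eps^2]$, together with the single equality $\sum_{S : S_0 = 1} \bigl(1 - \sum_j S_j w_j\bigr) x_S = a(\WSmall)$ accounting for the small-item area. This gives exactly $1/\eps^2 + 1$ non-trivial constraints.

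Then I would invoke the Rank Lemma directly: since $x^*$ is an extreme-point solution, the number of coordinates $S$ with $x^*_S > 0$ is at most the number of non-trivial constraints, namely $1/\eps^2 + 1$.

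There is no real obstacle here; the lemma is purely a bookkeeping corollary of the standard fact about basic feasible solutions. The only thing to be careful about is to make sure the small-item constraint is counted as a single equality (not one per small item) and that the variable-nonnegativity bounds are correctly excluded from the count.
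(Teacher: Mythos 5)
Your proof is correct and matches the paper's argument exactly: the paper also cites the Rank Lemma and notes that there are $1/\eps^2 + 1$ non-trivial equality constraints. You have simply spelled out the constraint count that the paper leaves implicit.
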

\begin{proof}[Proof sketch]
Follows by applying Rank Lemma\footnote{\rankLemmaNote} to the linear program.
\end{proof}

\begin{lemma}
\label{thm:width-gt-half}
$x_S^* > 0 \implies w(S) > 1/2$.
\end{lemma}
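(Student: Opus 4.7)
The plan is to argue by contradiction, using optimality (not merely extreme-point-ness) of $x^*$ via a ``doubling'' perturbation. Suppose there is a configuration $S$ with $x^*_S > 0$ and $w(S) \le 1/2$. The first easy observation is that $S_0 = 0$, because whenever $S_0 = 1$ the definition builds in a type-$0$ container of width $1 - \sum_j S_j w_j$, so $w(S) = 1$. We may also assume $S$ is not the all-zero configuration: the all-zero column contributes to none of the item-height constraints and to none of the small-item constraint, so any optimal $x^*$ must assign it value $0$.

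Next I would construct a ``doubled'' configuration $S'$ defined by $S'_0 = 0$ and $S'_j = 2 S_j$ for $j \in [1/\eps^2]$. Since $w(S') = 2 w(S) \le 1$ and $S'_0 = 0$, we have $S' \in \Scal$; and $S' \neq S$ because $S$ has at least one positive coordinate. This gives me a legitimate alternative column into which I can shift weight.

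The core step is the perturbation. Set $\hat{x}_S := 0$, $\hat{x}_{S'} := x^*_{S'} + x^*_S / 2$, and leave all other entries unchanged. For each type-$j$ constraint with $j \ge 1$, the net change on the left-hand side is $-x^*_S \cdot S_j + (x^*_S/2) \cdot S'_j = -x^*_S S_j + (x^*_S/2)(2 S_j) = 0$, so these constraints remain tight. The small-items constraint involves only configurations with $S_0 = 1$, and since both $S$ and $S'$ have $S_0 = 0$, that constraint is untouched. Thus $\hat{x}$ is feasible; its objective value, however, drops by $x^*_S - x^*_S / 2 = x^*_S / 2 > 0$, contradicting the optimality of $x^*$.

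The only place that requires a bit of care is ensuring $S \ne S'$ and that $\hat{x}$ stays non-negative, both of which are handled by the two preliminary observations above (exclude the all-zero $S$; use $\alpha = x^*_S$ so $\hat{x}_S = 0$). I do not expect any deeper obstacle here: the whole point is that any shelf of width $\le 1/2$ wastes more than half the horizontal space, and merging two copies of it into a single shelf of twice the height saves exactly one shelf of height $x^*_S/2$, which the LP ought to prefer.
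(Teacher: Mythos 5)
Your proof is correct and takes essentially the same approach as the paper: the paper's one-line argument ("split $S$ into two parts by a horizontal cut and pack them side-by-side, halving the contribution to the height") is precisely your doubled configuration $S'$ together with the perturbation $\hat{x}_S = 0$, $\hat{x}_{S'} = x^*_{S'} + x^*_S/2$. You merely spell out the LP feasibility check and dispose of the edge cases ($S_0 = 1$ forces $w(S)=1$; the all-zero configuration is never used optimally) that the paper leaves implicit.
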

\begin{proof}
Suppose $w(S) \le 1/2$. Then we could have split $S$ into two parts by making a horizontal cut
in the middle and packed the parts side-by-side, reducing the height of the strip by $x^*_S/2$.
But that would contradict the fact that $x^*$ is optimal.
\end{proof}

Treat each shelf $S$ as an item of width $w(S)$ and height $h(S)$.
Allow each such item to be sliced using horizontal cuts.
This gives us a new set $\Wtild$ of items such that $\What$ can be packed inside $\Wtild$.

By applying an analogous approach to $\Hhat$, we get a new set $\Htild$ of items.
Let $\Itild \defeq \Wtild \cup \Htild$.
We call the shelves of $\Wtild$ \emph{wide shelves}
and the shelves of $\Htild$ \emph{tall shelves}.
The containers in wide shelves are called \emph{wide containers}
and the containers in tall shelves are called \emph{tall containers}.

\rthmCreateShelves*
\begin{proof}
By \cref{thm:nonneg-entries}, $|\Wtild| \le 1 + 1/\eps^2$ and $|\Htild| \le 1 + 1/\eps^2$.
By \cref{thm:width-gt-half}, each item in $\Wtild$ has width more than $1/2$
and each item in $\Htild$ has height more than $1/2$.

$a(\Ihat) = a(\Itild)$ because the shelves are tightly packed.

Since $x^*$ is the optimal solution to the linear program for strip packing $\What$,
$\hsum(\Wtild) = \sum_{S \in \Scal} x^*_S = \optSP(\What)$.
Therefore, $\smallceil{\hsum(\Wtild)} = \fopt(\What) \le \fopt(\Ihat)$.
Similarly, $\smallceil{\wsum(\Htild)} = \fopt(\Hhat) \le \fopt(\Ihat)$.
\end{proof}

\subsection{Packing Items Into Containers}
\label{sec:guill-thin-extra:pack-into-containers}

\rthmDiscardAreaUb*
\begin{proof}
For each $j \in [1/\eps^2]$, number the type-$j$ wide containers arbitrarily,
and number the items in $\WhatLarge_j$ arbitrarily.
Now greedily assign items from $\WhatLarge_j$ to the first container $C$ until the total height
of the items exceeds $h(C)$. Then move to the next container and repeat.
As per the constraints of the linear program, all items in $\WhatLarge_j$
will get assigned to some type-$j$ wide container.
Similarly, number the type-0 wide containers arbitrarily
and number the items in $\WSmall$ arbitrarily.
Greedily assign items from $\WSmall$ to the first container $C$ until the total area
of the items exceeds $a(C)$. Then move to the next container and repeat.
As per the constraints of the linear program, all items in $\WSmall$
will get assigned to some type-0 wide container.
Similarly, assign all items from $\Hhat$ to tall containers.

Let $C$ be a type-$j$ wide container and $\Jhat$ be the items assigned to it.
If we discard the last item from $\Jhat$, then the items can be packed into $C$.
The area of the discarded item is at most $w(C)\delta_H$.
Let $C$ be a type-0 wide container and $\Jhat$ be the items assigned to it.
Arrange the items in $\Jhat$ in decreasing order of height and pack the largest
prefix $\Jhat' \subseteq \Jhat$ into $C$ using NFDW (Next-Fit Decreasing Width),
which is an \analogue{} of NFDH with the coordinate axes swapped.

Discard the items $\Jhat - \Jhat'$. By \cref{thm:nfdh-small},
$a(\Jhat - \Jhat') < \eps h(C) + \delta_H w(C) + \eps\delta_H$.
Therefore, for a wide shelf $S$, the total area of discarded items is less than
$\eps h(S) + \delta_H(1 + \eps)$.

After slicing the shelves in $\Itild$ to get $P$,
we get at most $m + 1/\eps^2$ wide shelves
and at most $m + 1/\eps^2$ tall shelves.
Therefore, the total area of discarded items from $W$ is less than
\[ \eps \hsum(\Wtild) + \delta_H(1 + \eps)(m + 1/\eps^2), \]
and the total area of discarded items from $H$ is less than
\[ \eps \wsum(\Htild) + \delta_W(1 + \eps)(m + 1/\eps^2). \qedhere \]
\end{proof}

\section{Lower Bound on APoG}
\label{sec:apog-lb}

In this section, we prove a lower bound of roughly $4/3$ on the APoG for skewed rectangles.

\begin{lemma}
\label{thm:guill-hardex-area}
Let $m$ and $k$ be positive integers and $\eps$ be a positive real number.
Let $J$ be a set of items packed into a bin, where each item has the longer dimension
equal to $(1+\eps)/2$ and the shorter dimension equal to $(1-\eps)/2k$.
If the bin is guillotine-separable, then $a(J) \le 3/4 + \eps/2 - \eps^2/4$.
\end{lemma}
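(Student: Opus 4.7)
The plan is to prove by induction a stronger claim for guillotine packings in arbitrary sub-rectangles. Define $\phi:(0,1]^2 \to \mathbb{R}_{\ge 0}$ by
$$
\phi(W, H) \defeq \begin{cases}
0, & \max(W, H) < (1+\eps)/2,\\
W(1+\eps)/2, & W < (1+\eps)/2 \le H,\\
H(1+\eps)/2, & H < (1+\eps)/2 \le W,\\
(1+\eps)\bigl(2(W+H) - (1+\eps)\bigr)/4, & W, H \ge (1+\eps)/2.
\end{cases}
$$
Evaluating at $W = H = 1$ gives $\phi(1,1) = (1+\eps)(3-\eps)/4 = 3/4 + \eps/2 - \eps^2/4$, so it suffices to prove that every guillotine packing of items with the specified dimensions into a $W \times H$ rectangle (with $W, H \in (0, 1]$) has total area at most $\phi(W, H)$.

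First I would establish the two ``single-orientation'' facts that appear as the middle branches of $\phi$. Because any sub-rectangle of the bin has height at most $1 < 1 + \eps$, at most one row of items whose long side is vertical (height $(1+\eps)/2$) can be packed in any sub-rectangle; summing widths and multiplying by $(1+\eps)/2$ gives a total-area bound of $W(1+\eps)/2$. The symmetric statement for items whose long side is horizontal gives $H(1+\eps)/2$. These bounds hold regardless of guillotine structure and serve as the ``leaf'' estimates in the induction whenever a sub-rectangle has a dimension below $(1+\eps)/2$ (which forbids one orientation).

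For the inductive step I consider the first guillotine cut. By the symmetry of $\phi$ in $(W, H)$ and of the problem under swapping the axes, it suffices to handle a vertical cut at $x' \in (0, W)$, splitting the region into widths $x'$ and $W - x'$ with common height $H$. Applying the induction hypothesis to each sub-rectangle reduces the inductive step to the purely algebraic inequality
$$
\phi(x', H) + \phi(W - x', H) \le \phi(W, H).
$$
Because $W \le 1 < 1+\eps$, the two widths cannot both be at least $(1+\eps)/2$, so there are only two essential branches: the ``big/small'' case (say $x' \ge (1+\eps)/2 > W - x'$), in which
$$
\frac{(1+\eps)\bigl(2(x'+H) - (1+\eps)\bigr)}{4} + (W-x')\frac{1+\eps}{2} = \frac{(1+\eps)\bigl(2(W+H) - (1+\eps)\bigr)}{4}
$$
holds as an equality by direct expansion; and the ``small/small'' case, in which both contributions sum to $W(1+\eps)/2$ and $W(1+\eps)/2 \le \phi(W, H)$ follows because $H \ge (1+\eps)/2$ whenever any vertical item appears (otherwise the sum is zero and the claim is trivial).

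The main obstacle is picking $\phi$ so that the recurrence in the ``big/small'' case is an \emph{equality}: any looser choice would lose a positive amount at every cut and fail to close the induction. The affine dependence on $W + H$ in the two-orientation regime is suggested by the extremal packings (a stack of one orientation along one edge and a strip of the other along the perpendicular edge), and once this form is guessed the remaining case analysis, including the symmetric case of a horizontal first cut, is mechanical.
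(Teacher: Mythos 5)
Your proof is correct, and it takes a genuinely different route from the paper's. The paper normalizes the guillotine tree so that at each level only the first child has further cuts (the other children are single-item leaves), observes that this normalization places every wide item flush against the left edge of the bin and every tall item flush against the bottom edge, and concludes that the top-right square of side $(1-\eps)/2$ is empty, giving the bound directly as $1 - ((1-\eps)/2)^2 = 3/4 + \eps/2 - \eps^2/4$. You instead set up a potential $\phi(W,H)$ and show it is superadditive with respect to guillotine cuts, inducting on the tree; the single-orientation estimates $W(1+\eps)/2$ and $H(1+\eps)/2$ (which need no guillotine structure, since two same-orientation items with overlapping projections would require a dimension at least $1+\eps > 1$) anchor the induction, and the recurrence closes because the big/small case of the cut is an exact equality for your choice of $\phi$. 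One small point to make explicit: a leaf of the guillotine tree can have both $W, H \ge (1+\eps)/2$ while containing a single item, and you should check that a lone item's area is at most $\phi(W,H)$ there; this is immediate since, say for a wide item, the single-orientation bound $H(1+\eps)/2$ satisfies $H(1+\eps)/2 \le (1+\eps)\bigl(2(W+H)-(1+\eps)\bigr)/4$ exactly when $W \ge (1+\eps)/2$, which holds because the item fits. On balance, your argument is more mechanical and self-contained, avoiding the tree-normalization step (which needs some care in reordering children and in splitting the non-first children by further cuts), at the cost of having to guess the right form of $\phi$; the paper's argument is shorter and gives a direct geometric picture of where the wasted area sits.
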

\begin{proof}
For an item packed in the bin, if the height is $(1-\eps)/2k$, call it a wide item,
and if the width is $(1-\eps)/2k$, call it a tall item.
Let $W$ be the set of wide items in $J$.

The packing of items in the bin can be represented as a tree,
called the \emph{guillotine tree} of the bin,
where each node $u$ represents a rectangular region of the bin
and the child nodes $v_1, v_2, \ldots, v_p$ of node $u$
represent the sub-regions obtained by parallel guillotine cuts.
The ordering of the children has a significance here:
if the guillotine cuts were vertical, children are ordered by increasing $x$-coordinate,
and if the cuts were horizontal, children are ordered by increasing $y$-coordinate.
See \cref{fig:guill-tree} for an example.

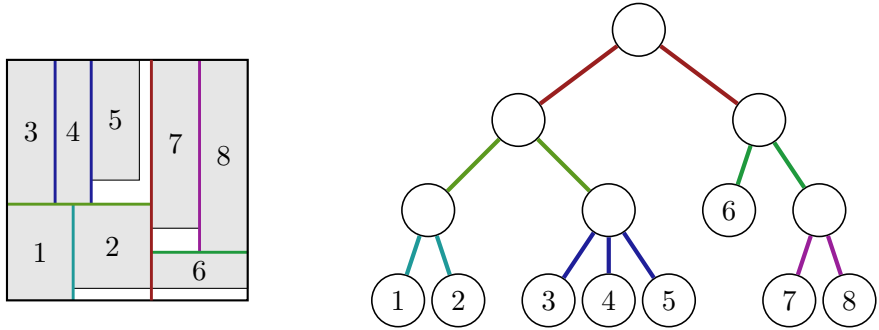
\begin{figure}[htb]
\centering
\definecolor{myred}{Hsb}{0,0.8,0.6}
\definecolor{myorange}{Hsb}{30,0.8,0.6}
\definecolor{myyellow}{Hsb}{60,0.8,0.6}
\definecolor{mylime}{Hsb}{90,0.8,0.6}
\definecolor{mygreen}{Hsb}{135,0.8,0.6}
\definecolor{mycyan}{Hsb}{180,0.8,0.6}
\definecolor{myblue}{Hsb}{240,0.8,0.6}
\definecolor{mypink}{Hsb}{300,0.8,0.6}
\begin{tikzpicture}[
item/.style={fill={black!10},draw},
myarrow/.style={->,>={Stealth},thick},
bin/.style={draw,thick},
vertex/.style={draw,shape=circle,semithick,inner sep=0,minimum size=0.7cm},
myedge/.style={ultra thick},
cutline/.style={draw,line width=1.1pt},
scale=0.8,
]
\begin{scope}[yscale=-1]
\path[item] (0, 0) rectangle +(0.8, 2.4) node[pos=0.5] {3};
\path[item] (0.8, 0) rectangle +(0.6, 2.4) node[pos=0.5] {4};
\path[item] (1.4, 0) rectangle +(0.8, 2) node[pos=0.5] {5};
\path[cutline,draw=myblue] (0.8, 0) -- (0.8, 2.4);
\path[cutline,draw=myblue] (1.4, 0) -- (1.4, 2.4);
\path[item] (0, 2.4) rectangle +(1.1, 1.6) node[pos=0.5] {1};
\path[item] (1.1, 2.4) rectangle +(1.3, 1.4) node[pos=0.5] {2};
\path[cutline,draw=mycyan] (1.1, 2.4) -- (1.1, 4);
\path[cutline,draw=mylime] (0, 2.4) -- (2.4, 2.4);

\path[item] (2.4, 0) rectangle +(0.8, 2.8) node[pos=0.5] {7};
\path[item] (3.2, 0) rectangle +(0.8, 3.2) node[pos=0.5] {8};
\path[cutline,draw=mypink] (3.2, 0) -- (3.2, 3.2);
\path[item] (2.4, 3.2) rectangle +(1.6, 0.6) node[pos=0.5] {6};
\path[cutline,draw=mygreen] (2.4, 3.2) -- (4, 3.2);
\path[bin] (0, 0) rectangle +(4, 4);
\path[cutline,draw=myred] (2.4, 0) -- (2.4, 4);
\end{scope}
\begin{scope}[xshift=6.5cm,yshift=-4cm]
\node[vertex] (v1) at (0,0) {1};
\node[vertex] (v2) at (1,0) {2};
\node[vertex] (v3) at (2.5,0) {3};
\node[vertex] (v4) at (3.5,0) {4};
\node[vertex] (v5) at (4.5,0) {5};
\node[vertex] (v6) at (5.5,1.5) {6};
\node[vertex] (v7) at (6.5,0) {7};
\node[vertex] (v8) at (7.5,0) {8};
\node[vertex] (v12) at (0.5,1.5) {};
\node[vertex] (v345) at (3.5,1.5) {};
\node[vertex] (v78) at (7,1.5) {};
\node[vertex] (v12345) at (2,3) {};
\node[vertex] (v678) at (6,3) {};
\node[vertex] (vroot) at (4,4.5) {};
\draw[myedge,mycyan] (v1) -- (v12);
\draw[myedge,mycyan] (v2) -- (v12);
\draw[myedge,myblue] (v345) -- (v3);
\draw[myedge,myblue] (v345) -- (v4);
\draw[myedge,myblue] (v345) -- (v5);
\draw[myedge,mypink] (v78) -- (v7);
\draw[myedge,mypink] (v78) -- (v8);
\draw[myedge,mylime] (v12345) -- (v12);
\draw[myedge,mylime] (v12345) -- (v345);
\draw[myedge,mygreen] (v678) -- (v6);
\draw[myedge,mygreen] (v678) -- (v78);
\draw[myedge,myred] (vroot) -- (v12345);
\draw[myedge,myred] (vroot) -- (v678);
\end{scope}
\end{tikzpicture}

\caption{A guillotinable packing of items into a bin and the corresponding
guillotine tree.}
\label{fig:guill-tree}
\end{figure}

We will now see how to rearrange the items in the bin so that the packing remains
guillotine-separable but becomes more structured.
We will exploit this structure to show that the packing has a large unpacked area.
See \cref{fig:shift-thin} for an example.

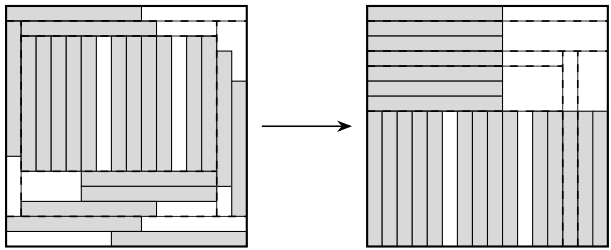
\begin{figure}[htb]
\centering
\ifcsname pGameL\endcsname\else\newlength{\pGameL}\fi
\setlength{\pGameL}{0.2cm}
\tikzset{bin/.style={draw,thick}}
\tikzset{item/.style={draw,fill={black!15}}}
\tikzset{myarrow/.style={->,>={Stealth}}}
\tikzset{gsep/.style={draw,dashed,semithick,fill={black!85}}}
\begin{tikzpicture}
\begin{scope}
\path[item] (0\pGameL, 15\pGameL) rectangle +(9\pGameL, 1\pGameL);
\path[item] (1\pGameL, 14\pGameL) rectangle +(9\pGameL, 1\pGameL);
\path[item] (5\pGameL, 4\pGameL) rectangle +(9\pGameL, 1\pGameL);
\path[item] (5\pGameL, 3\pGameL) rectangle +(9\pGameL, 1\pGameL);
\path[item] (1\pGameL, 2\pGameL) rectangle +(9\pGameL, 1\pGameL);
\path[item] (0\pGameL, 1\pGameL) rectangle +(9\pGameL, 1\pGameL);
\path[item] (7\pGameL, 0\pGameL) rectangle +(9\pGameL, 1\pGameL);
\path[item] (0\pGameL, 6\pGameL) rectangle +(1\pGameL, 9\pGameL);
\path[item] (1\pGameL, 5\pGameL) rectangle +(1\pGameL, 9\pGameL);
\path[item] (2\pGameL, 5\pGameL) rectangle +(1\pGameL, 9\pGameL);
\path[item] (3\pGameL, 5\pGameL) rectangle +(1\pGameL, 9\pGameL);
\path[item] (4\pGameL, 5\pGameL) rectangle +(1\pGameL, 9\pGameL);
\path[item] (5\pGameL, 5\pGameL) rectangle +(1\pGameL, 9\pGameL);
\path[item] (7\pGameL, 5\pGameL) rectangle +(1\pGameL, 9\pGameL);
\path[item] (8\pGameL, 5\pGameL) rectangle +(1\pGameL, 9\pGameL);
\path[item] (9\pGameL, 5\pGameL) rectangle +(1\pGameL, 9\pGameL);
\path[item] (10\pGameL, 5\pGameL) rectangle +(1\pGameL, 9\pGameL);
\path[item] (12\pGameL, 5\pGameL) rectangle +(1\pGameL, 9\pGameL);
\path[item] (13\pGameL, 5\pGameL) rectangle +(1\pGameL, 9\pGameL);
\path[item] (14\pGameL, 4\pGameL) rectangle +(1\pGameL, 9\pGameL);
\path[item] (15\pGameL, 2\pGameL) rectangle +(1\pGameL, 9\pGameL);
\path[bin] (0\pGameL, 0\pGameL) rectangle (16\pGameL, 16\pGameL);
\path[gsep] (0\pGameL, 2\pGameL) -- +(16\pGameL, 0\pGameL);
\path[gsep] (0\pGameL, 15\pGameL) -- +(16\pGameL, 0\pGameL);
\path[gsep] (1\pGameL, 2\pGameL) -- (1\pGameL, 15\pGameL);
\path[gsep] (14\pGameL, 2\pGameL) -- (14\pGameL, 15\pGameL);
\path[gsep] (1\pGameL, 14\pGameL) -- (14\pGameL, 14\pGameL);
\path[gsep] (1\pGameL, 5\pGameL) -- (14\pGameL, 5\pGameL);
\end{scope}
\draw[myarrow,semithick] (17\pGameL, 8\pGameL) -- +(6\pGameL, 0);
\begin{scope}[xshift=24\pGameL]
\path[item] (0\pGameL, 15\pGameL) rectangle +(9\pGameL, 1\pGameL);
\path[item] (0\pGameL, 14\pGameL) rectangle +(9\pGameL, 1\pGameL);
\path[item] (0\pGameL, 13\pGameL) rectangle +(9\pGameL, 1\pGameL);
\path[item] (0\pGameL, 12\pGameL) rectangle +(9\pGameL, 1\pGameL);
\path[item] (0\pGameL, 11\pGameL) rectangle +(9\pGameL, 1\pGameL);
\path[item] (0\pGameL, 10\pGameL) rectangle +(9\pGameL, 1\pGameL);
\path[item] (0\pGameL, 9\pGameL) rectangle +(9\pGameL, 1\pGameL);
\path[item] (0\pGameL, 0\pGameL) rectangle +(1\pGameL, 9\pGameL);
\path[item] (1\pGameL, 0\pGameL) rectangle +(1\pGameL, 9\pGameL);
\path[item] (2\pGameL, 0\pGameL) rectangle +(1\pGameL, 9\pGameL);
\path[item] (3\pGameL, 0\pGameL) rectangle +(1\pGameL, 9\pGameL);
\path[item] (4\pGameL, 0\pGameL) rectangle +(1\pGameL, 9\pGameL);
\path[item] (6\pGameL, 0\pGameL) rectangle +(1\pGameL, 9\pGameL);
\path[item] (7\pGameL, 0\pGameL) rectangle +(1\pGameL, 9\pGameL);
\path[item] (8\pGameL, 0\pGameL) rectangle +(1\pGameL, 9\pGameL);
\path[item] (9\pGameL, 0\pGameL) rectangle +(1\pGameL, 9\pGameL);
\path[item] (11\pGameL, 0\pGameL) rectangle +(1\pGameL, 9\pGameL);
\path[item] (12\pGameL, 0\pGameL) rectangle +(1\pGameL, 9\pGameL);
\path[item] (13\pGameL, 0\pGameL) rectangle +(1\pGameL, 9\pGameL);
\path[item] (14\pGameL, 0\pGameL) rectangle +(1\pGameL, 9\pGameL);
\path[item] (15\pGameL, 0\pGameL) rectangle +(1\pGameL, 9\pGameL);
\path[bin] (0\pGameL, 0\pGameL) rectangle (16\pGameL, 16\pGameL);
\path[gsep] (0\pGameL, 15\pGameL) -- (16\pGameL, 15\pGameL);
\path[gsep] (0\pGameL, 13\pGameL) -- (16\pGameL, 13\pGameL);
\path[gsep] (13\pGameL, 0\pGameL) -- (13\pGameL, 13\pGameL);
\path[gsep] (14\pGameL, 0\pGameL) -- (14\pGameL, 13\pGameL);
\path[gsep] (0\pGameL, 12\pGameL) -- (13\pGameL, 12\pGameL);
\path[gsep] (0\pGameL, 9\pGameL) -- (13\pGameL, 9\pGameL);
\end{scope}
\end{tikzpicture}

\caption{Structuring a guillotine-separable packing.}
\label{fig:shift-thin}
\end{figure}

In the guillotine tree, suppose there is a node $u$
that has children $v_1, v_2, \ldots, v_p$.
\WLoG, assume that the children are obtained by making vertical cuts.
At most one of these children can contain items from $W$.
We can assume \wLoG{} that the other children contain only one item,
because otherwise we can separate them by vertical cuts.
We can reorder the children (which is equivalent to repacking the guillotine partitions)
so that the child containing items from $W$ (if any) is the first child.
Therefore, we can assume \wLoG{} that at any level in the guillotine tree,
only the first node has children.

Based on the argument above, we can see that the first node in each level
touches the bottom-left corner of the bin. All the other nodes either
contain a single wide item and touch the left edge of the bin but not the bottom edge,
or they contain a single tall item and touch the bottom edge of the bin but not the left edge.
In each node containing a wide item, shift the item leftwards,
and in each node containing a tall item, shift the item downwards.
Then each wide item touches the left edge of the bin
and each tall item touches the bottom edge of the bin.

Therefore, the square region of side length $(1-\eps)/2$
at the top-right corner of the bin is empty.
Hence, the area occupied in each bin is at most $3/4 + \eps/2 - \eps^2/4$.
\end{proof}

\begin{theorem}
\label{thm:apog-lb}
Let $m$ and $k$ be positive integers and $\eps \in (0, 1)$.
Let $I$ be a set of $4mk$ rectangular items,
where $2mk$ items have width $(1+\eps)/2$ and height $(1-\eps)/2k$,
and $2mk$ items have height $(1+\eps)/2$ and width $(1-\eps)/2k$.
Let $\opt(I)$ be the number of bins in the
optimal packing of $I$ and $\opt_g(I)$ be the number of bins in the
optimal guillotinable packing of $I$. Then
\[ \frac{\opt_g(I)}{\opt(I)} \ge \frac{4}{3} (1-\eps). \]
This holds true even if items in $I$ are allowed to be rotated.
\end{theorem}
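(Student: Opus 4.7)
The plan is to sandwich the ratio $\opt_g(I)/\opt(I)$ by bounding $\opt(I)$ above and $\opt_g(I)$ below. For the upper bound, I will exhibit an explicit packing of $I$ into $m$ bins, generalizing the classical four-item pinwheel in the footnote after the definition of APoG. I group the wide items into $2m$ \emph{wide blocks}, each formed by stacking $k$ wide items vertically (dimensions $(1+\eps)/2 \times (1-\eps)/2$), and analogously the tall items into $2m$ \emph{tall blocks} of dimensions $(1-\eps)/2 \times (1+\eps)/2$. In each bin, I place two wide and two tall blocks at positions $[0,(1+\eps)/2]\times[0,(1-\eps)/2]$ and $[(1-\eps)/2,1]\times[(1+\eps)/2,1]$ (wide), and $[0,(1-\eps)/2]\times[(1-\eps)/2,1]$ and $[(1+\eps)/2,1]\times[0,(1+\eps)/2]$ (tall); a direct comparison of coordinates shows these four rectangles have pairwise disjoint interiors, leaving only an empty central square of side $\eps$. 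Hence each bin accommodates $4k$ items, and $m$ bins suffice.

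For the lower bound, I will invoke \cref{thm:guill-hardex-area}. Every item in $I$ has longer dimension $(1+\eps)/2$ and shorter dimension $(1-\eps)/(2k)$, so the lemma bounds the total items-area per guillotinable bin by $3/4+\eps/2-\eps^2/4 = (3-\eps)(1+\eps)/4$. The total area of $I$ is $a(I) = 4mk\cdot(1+\eps)(1-\eps)/(4k) = m(1-\eps)(1+\eps)$, which yields $\opt_g(I) \ge 4m(1-\eps)/(3-\eps)$. Combining the two bounds and using $3-\eps \le 3$ gives $\opt_g(I)/\opt(I) \ge 4(1-\eps)/(3-\eps) \ge (4/3)(1-\eps)$, as required.

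The rotational case is immediate: the pinwheel packing uses no rotations, so the bound $\opt(I)\le m$ is unaffected, and \cref{thm:guill-hardex-area} classifies each item as wide or tall by its actual orientation in the bin and only exploits guillotinability, so it continues to apply when rotations are permitted. There is no real obstacle: all the structural work is done by \cref{thm:guill-hardex-area}, and the only delicate point is verifying non-overlap of the four blocks in the pinwheel, which boils down to the single inequality $(1-\eps)/2 < (1+\eps)/2$ valid for all $\eps \in (0,1)$.
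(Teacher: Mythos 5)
Your proposal is correct and follows essentially the same route as the paper: exhibit the pinwheel packing to get $\opt(I) \le m$, apply \cref{thm:guill-hardex-area} and the area computation $a(I) = m(1-\eps^2)$ to get $\opt_g(I) \ge 4m(1-\eps)/(3-\eps)$, and divide. The only differences are cosmetic — you spell out the pinwheel coordinates that the paper delegates to a figure, and you make explicit the (correct) observation that \cref{thm:guill-hardex-area} is orientation-agnostic, which handles the rotational case.
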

\begin{proof}
For an item $i \in I$, if $h(i) = (1-\eps)/2k$, call it a wide item,
and if $w(i) = (1-\eps)/2k$, call it a tall item.
Let $W$ be the set of wide items and $H$ be the set of tall items.
We will show that $\opt(I)$ and $\opt_g(I)$ have a big difference,
which will give us a lower-bound on $\APoG$.

Partition $W$ into groups of $k$ elements.
In each group, stack items one-over-the-other.
This gives us $2m$ containers of width $(1+\eps)/2$ and height $(1-\eps)/2$.
Similarly, get $2m$ containers of height $(1+\eps)/2$ and height $(1-\eps)/2$
by stacking items from $H$ side-by-side.
We can pack 4 containers in one bin, so $I$ can be packed into $m$ bins.
See \cref{fig:thin-gadget} for an example.
Therefore, $\opt(I) \le m$.

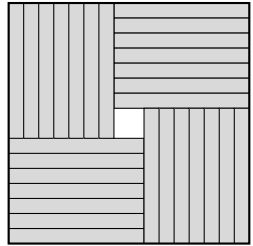
\begin{figure}[htb]
\centering
\ifcsname pGameL\endcsname\else\newlength{\pGameL}\fi
\setlength{\pGameL}{0.2cm}
\tikzset{bin/.style={draw,thick}}
\tikzset{item/.style={draw,fill={black!15}}}
\begin{tikzpicture}
\path[item] (7\pGameL, 15\pGameL) rectangle +(9\pGameL, 1\pGameL);
\path[item] (0\pGameL, 0\pGameL) rectangle +(9\pGameL, 1\pGameL);
\path[item] (7\pGameL, 14\pGameL) rectangle +(9\pGameL, 1\pGameL);
\path[item] (7\pGameL, 13\pGameL) rectangle +(9\pGameL, 1\pGameL);
\path[item] (7\pGameL, 12\pGameL) rectangle +(9\pGameL, 1\pGameL);
\path[item] (7\pGameL, 11\pGameL) rectangle +(9\pGameL, 1\pGameL);
\path[item] (7\pGameL, 10\pGameL) rectangle +(9\pGameL, 1\pGameL);
\path[item] (7\pGameL, 9\pGameL) rectangle +(9\pGameL, 1\pGameL);
\path[item] (0\pGameL, 1\pGameL) rectangle +(9\pGameL, 1\pGameL);
\path[item] (0\pGameL, 2\pGameL) rectangle +(9\pGameL, 1\pGameL);
\path[item] (0\pGameL, 3\pGameL) rectangle +(9\pGameL, 1\pGameL);
\path[item] (0\pGameL, 4\pGameL) rectangle +(9\pGameL, 1\pGameL);
\path[item] (0\pGameL, 5\pGameL) rectangle +(9\pGameL, 1\pGameL);
\path[item] (0\pGameL, 6\pGameL) rectangle +(9\pGameL, 1\pGameL);
\path[item] (0\pGameL, 7\pGameL) rectangle +(1\pGameL, 9\pGameL);
\path[item] (15\pGameL, 0\pGameL) rectangle +(1\pGameL, 9\pGameL);
\path[item] (1\pGameL, 7\pGameL) rectangle +(1\pGameL, 9\pGameL);
\path[item] (2\pGameL, 7\pGameL) rectangle +(1\pGameL, 9\pGameL);
\path[item] (3\pGameL, 7\pGameL) rectangle +(1\pGameL, 9\pGameL);
\path[item] (4\pGameL, 7\pGameL) rectangle +(1\pGameL, 9\pGameL);
\path[item] (5\pGameL, 7\pGameL) rectangle +(1\pGameL, 9\pGameL);
\path[item] (6\pGameL, 7\pGameL) rectangle +(1\pGameL, 9\pGameL);
\path[item] (14\pGameL, 0\pGameL) rectangle +(1\pGameL, 9\pGameL);
\path[item] (13\pGameL, 0\pGameL) rectangle +(1\pGameL, 9\pGameL);
\path[item] (12\pGameL, 0\pGameL) rectangle +(1\pGameL, 9\pGameL);
\path[item] (11\pGameL, 0\pGameL) rectangle +(1\pGameL, 9\pGameL);
\path[item] (10\pGameL, 0\pGameL) rectangle +(1\pGameL, 9\pGameL);
\path[item] (9\pGameL, 0\pGameL) rectangle +(1\pGameL, 9\pGameL);
\path[bin] (0\pGameL, 0\pGameL) rectangle (16\pGameL, 16\pGameL);
\end{tikzpicture}

\caption{Packing $4k$ items in one bin. Here $k = 7$.}
\label{fig:thin-gadget}
\end{figure}

We will now show a lower-bound on $\opt_g(I)$.
In any guillotine-separable packing of $I$,
the area occupied by each bin is at most $3/4 + \eps/2 - \eps^2/4$
(by \cref{thm:guill-hardex-area}).
Note that $a(I) = m(1 - \eps^2)$. Therefore,
\begin{align*}
& \opt_g(I) \ge \frac{m(1-\eps^2)}{3/4 + \eps/2 - \eps^2/4}
\\ &\implies \frac{\opt_g(I)}{\opt(I)}
    \ge \frac{4}{3} \times \frac{1-\eps^2}{1 + 2\eps/3 - \eps^2/3}
    = \frac{4}{3} \times \frac{1-\eps}{1 - \eps/3}
    \ge \frac{4}{3}(1-\eps).
\qedhere
\end{align*}
\end{proof}

\section{APoG for the Rotational Case}
\label{sec:guill-rot}

\begin{theorem}
\label{thm:guill-rot}
Let $\APoG\nonrot$ and $\APoG\rot$ be the APoG for the non-rotational
and rotational versions, respectively, restricted to the $\delta$-\thin{} case.
Then $\APoG\rot \le \APoG\nonrot$.
\end{theorem}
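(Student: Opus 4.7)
The plan is a simple orientation-fixing reduction from the rotational setting to the non-rotational one. Let $I$ be any $\delta$-skewed instance (in the rotational sense, meaning for each item one of its two dimensions is at most $\delta$). Fix an optimal rotational (non-guillotine) packing of $I$ into $\opt\rot(I)$ bins. This packing assigns to every item an orientation: either its original orientation in $I$, or rotated by $90^\circ$. Let $I'$ be the instance obtained from $I$ by applying these chosen orientations to the items.

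Three observations then drive the argument. First, $I'$ is $\delta$-skewed in the non-rotational sense, because having the shorter side $\le \delta$ is preserved under a $90^\circ$ rotation, so each item of $I'$ still has width or height at most $\delta$. Second, $\opt\rot(I) = \opt\nonrot(I')$: the fixed optimal rotational packing of $I$ is literally a valid non-rotational packing of $I'$, giving $\opt\nonrot(I') \le \opt\rot(I)$; conversely, any non-rotational packing of $I'$ is a rotational packing of $I$ after undoing the chosen rotations, so $\opt\rot(I) \le \opt\nonrot(I')$. Third, by the same ``undoing rotations'' argument applied to guillotinable packings (rotations do not affect whether a packing is guillotinable, as the guillotine cuts are axis-parallel and can be rotated with the items), we get $\opt_g\rot(I) \le \opt_g\nonrot(I')$.

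Combining these: by the definition of $\APoG\nonrot$ for $\delta$-skewed instances, there exists an absolute constant $c$ (independent of $I'$) such that
\[
\opt_g\nonrot(I') \;\le\; \APoG\nonrot \cdot \opt\nonrot(I') + c.
\]
Applying the two inequalities above yields
\[
\opt_g\rot(I) \;\le\; \opt_g\nonrot(I') \;\le\; \APoG\nonrot \cdot \opt\nonrot(I') + c \;=\; \APoG\nonrot \cdot \opt\rot(I) + c.
\]
Since this holds for every $\delta$-skewed $I$ with the same additive constant, taking the appropriate limit in the definition of asymptotic ratio gives $\APoG\rot \le \APoG\nonrot$.

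There is essentially no technical obstacle; the only thing to be careful about is that the reduction is per-instance (the chosen $I'$ depends on $I$) and that the additive constant $c$ in the non-rotational bound is universal, so passing to the asymptotic limit is legitimate. The preservation of $\delta$-skewedness under rotation and the rotation-invariance of guillotinability are the only structural facts used.
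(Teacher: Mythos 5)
Your proposal is correct and takes essentially the same approach as the paper: it fixes the orientations from an optimal rotational packing to form a non-rotational instance (your $I'$, the paper's $K$), establishes $\opt\rot(I) = \opt\nonrot(I')$ and $\opt_g\rot(I) \le \opt_g\nonrot(I')$, and then invokes the non-rotational $\APoG$ bound. The only detail the paper makes explicit and you leave implicit is the assumption that the bin is square (or at least has both sides at least $1$), which is what makes rotating an item in place a valid move.
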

\begin{proof}
For a set $I$ of $\delta$-\thin{} rectangular items,
let $\opt\nonrot(I)$ and $\opt\rot(I)$ be the minimum number of bins
needed to pack $I$ in the non-rotational and rotational versions, respectively.
Let $\opt\nonrot_g(I)$ and $\opt\rot_g(I)$ be the minimum number of guillotinable bins
needed to pack $I$ in the non-rotational and rotational versions, respectively.
Assume \wLoG{} that the bin has width and height at least 1.

Let $I$ be any set of $\delta$-\thin{} items.
Let $K$ be the corresponding rotated items in the optimal rotational packing of $I$,
i.e., $\opt\rot(I) = \opt\nonrot(K)$. Then
\begin{align*}
\opt\rot_g(I) &\le \opt\nonrot_g(K)
\\ &\le \APoG\nonrot\opt\nonrot(K) + c
\tag{$c$ is a constant}
\\ &= \APoG\nonrot\opt\rot(I) + c.
\end{align*}
Hence, we get $\APoG\rot \le \APoG\nonrot$.
\end{proof}

\section{Details of \texorpdfstring{$\thinCPack$}{skewedCPack}}
\label{sec:thin-bp-extra}

\subsection{Removing Medium Items}
\label{sec:thin-bp-extra:remmed}

Let $T \defeq \ceil{2/\eps}$. Let $\mu_0 = \eps$.
For $t \in [T]$, define $\mu_t \defeq f(\mu_{t-1})$ and define
\[ J_t \defeq \{i \in I: w(i) \in (\mu_t, \mu_{t-1}]
    \textrm{ or } h(i) \in (\mu_t, \mu_{t-1}]\}. \]
Define $r \defeq \argmin_{t=1}^T a(J_t)$,
$\Imed \defeq J_r$ and $\epsLarge \defeq \mu_{r-1}$.
Each item belongs to at most 2 sets $J_t$, so
\[ a(\Imed) = \min_{t=1}^T a(J_t)
\le \frac{1}{T} \sum_{t=1}^T a(J_t)
\le \frac{2}{\ceil{2/\eps}} a(I)
\le \eps a(I). \]

\subsection{Creating Compartments}
\label{sec:thin-bp-extra:compartmentalize}

\begin{lemma}
\label{thm:empty-to-rects}
Let there be a set $I$ of rectangles packed inside a bin.
Then there is a polynomial-time algorithm that can decompose the empty space in the bin
into at most $3|I|+1$ rectangles by making horizontal cuts only.
\end{lemma}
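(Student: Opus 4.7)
The plan is to decompose the empty space by extending each horizontal edge of each item outward into the empty space, then bound the number of resulting rectangles via Euler's formula on the induced planar subdivision.

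First I would define the algorithm precisely: for each item $i \in I$ and each of its four corners, shoot a horizontal ray outward into the empty space---leftward from the two left corners and rightward from the two right corners---terminating at the first vertical edge (of another item in $I$ or of the bin) it meets. Each such ray becomes a horizontal ``cut'' lying entirely in the empty space, and each can be computed in $O(|I|)$ time, giving an overall $O(|I|^2)$ running time. These cuts, together with the item and bin boundaries, partition the empty space into rectangles: the reflex vertices of the empty-space boundary (interior angle $270^\circ$) are exactly the corners of items lying in the interior of the bin and not on another item's boundary, and the outgoing cut at each such reflex vertex splits the $270^\circ$ angle into two $90^\circ$ angles. Consequently every face of the resulting arrangement has only convex corners and must be rectangular.

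To count the rectangles I would apply Euler's formula $V - E + F = 1 + k$ to the resulting planar subdivision. In the generic case (items strictly interior, no two items with coinciding horizontal coordinates), the $4|I|$ reflex vertices give rise to $4|I|$ cuts, each adding one T-junction vertex ($+1$ to $V$) and splitting one existing edge ($+2$ to $E$). Of these, $|I|$ cuts first-connect an item's boundary to another component---reducing $k$ by one each, without creating a new bounded face---while the remaining $3|I|$ each split a face in two. Starting from $V_0 = E_0 = 4 + 4|I|$, $F_0 = |I| + 2$, $k_0 = |I| + 1$, this yields $F = 4|I| + 2$; subtracting the outer face and the $|I|$ item faces gives exactly $R = 3|I| + 1$ empty rectangles. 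In degenerate configurations, some item corners fail to be reflex (because the item touches the bin or another item), or two reflex vertices are resolved by a single merged cut; a uniform accounting shows that each such degeneracy strictly decreases $R$, so $R \le 3|I| + 1$ in all cases.

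The main subtlety will be the Euler-formula bookkeeping, specifically the distinction between cuts that connect two previously-disconnected components of the planar graph (which reduce $k$ without splitting a face) and cuts that connect two points of an already-connected component (which do split a face). Getting this correct---and confirming that in the generic case exactly $|I|$ cuts fall into the first category---is where the analysis has to be done carefully. The degenerate cases are routine by inspection once the generic argument is in place, since each degeneracy can only eliminate reflex vertices or merge cuts, never increasing the count.
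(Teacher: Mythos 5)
Your decomposition is the same as the paper's---extend each item's top and bottom edges leftward and rightward until they hit a vertical edge---so the algorithmic content is identical; the difference is entirely in how the resulting rectangles are counted. The paper's count is a direct attribution: the bottom edge of every rectangle of empty space lies along the bin floor, along the top edge of some item $i$ (at most one such rectangle, the one directly above $i$), or along a left/right extension of some item's bottom edge (at most two per item), so $|R|\le 1+|I|+2|I|$ in one step. You instead count via Euler's formula on the induced planar subdivision, and your generic-case arithmetic is correct: each of the $4|I|$ cuts contributes $\Delta(V-E)=-1$, exactly $|I|$ of them merge connected components, and the remaining $3|I|$ each create a bounded face, leading to $F=4|I|+2$ and hence $|R|=3|I|+1$. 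This buys a cleaner structural explanation of where the constant $3$ comes from, but at two costs you should make explicit before calling the proof done. First, the ``exactly $|I|$ merging cuts'' step silently uses that the final subdivision is connected; this is true but needs its own short argument (e.g.\ the leftmost item's left rays must reach the bin wall, and the leftmost item of any putative unreached group yields a contradiction). Second, you discharge all degenerate configurations in one sentence, and that claim is genuinely not free: items touching each other or the bin can eliminate reflex corners, collapse a cut to zero length, merge two cuts at the same height, or make a cut terminate at an existing vertex rather than a T-junction, and each of these perturbs the $V$, $E$, $k$ bookkeeping in its own way. The paper's attribution argument is less exposed to this, since its $3|I|+1$ ``slots'' exist in every configuration and a degeneracy can only leave some of them unused; the Euler route is the more delicate one to make fully rigorous.
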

\begin{proof}
Extend the top and bottom edge of each rectangle leftwards and rightwards
till they hit another rectangle or an edge of the bin.
This decomposes the empty region into rectangles $R$.
See \cref{fig:thin-bp:empty-space-to-rects}.

For each rectangle $i \in I$, the top edge of $i$ is the bottom edge of a rectangle in $R$,
the bottom edge of $i$ is the bottom edge of two rectangles in $R$.
Apart from possibly the rectangle in $R$ whose bottom edge is at the bottom of the bin,
the bottom edge of every rectangle in $R$ is either the bottom or top edge of a rectangle in $I$.
Therefore, $|R| \le 3|I| + 1$.
\end{proof}

\begin{figure}[htb]
\centering
\ifcsname myu\endcsname\else\newlength{\myu}\fi
\setlength{\myu}{0.6cm}
\begin{tikzpicture}[
item/.style={draw,fill={black!15}},
cutline/.style={draw,dashed},
]
\path[item] (1\myu, 1\myu) rectangle (3\myu, 2\myu);
\path[item] (5\myu, 2\myu) rectangle (6\myu, 4\myu);
\path[item] (2\myu, 3\myu) rectangle (4\myu, 5\myu);
\draw[thick] (0\myu, 0\myu) rectangle (7\myu, 6\myu);

\draw[cutline] (0\myu, 1\myu) -- (7\myu, 1\myu);
\draw[cutline] (0\myu, 2\myu) -- (7\myu, 2\myu);
\draw[cutline] (0\myu, 3\myu) -- (5\myu, 3\myu);
\draw[cutline] (4\myu, 4\myu) -- (7\myu, 4\myu);
\draw[cutline] (0\myu, 5\myu) -- (7\myu, 5\myu);

\path (0\myu, 0\myu) -- node {1} (7\myu, 1\myu);
\path (0\myu, 1\myu) -- node {2} (1\myu, 2\myu);
\path (3\myu, 1\myu) -- node {3} (7\myu, 2\myu);
\path (0\myu, 2\myu) -- node {4} (5\myu, 3\myu);
\path (6\myu, 2\myu) -- node {5} (7\myu, 4\myu);
\path (4\myu, 3\myu) -- node {6} (5\myu, 4\myu);
\path (0\myu, 3\myu) -- node {7} (2\myu, 5\myu);
\path (4\myu, 4\myu) -- node {8} (7\myu, 5\myu);
\path (0\myu, 5\myu) -- node {9} (7\myu, 6\myu);
\end{tikzpicture}

\caption{Using horizontal cuts to partition the empty space
around the 3 items into 9 rectangular regions.}
\label{fig:thin-bp:empty-space-to-rects}
\end{figure}
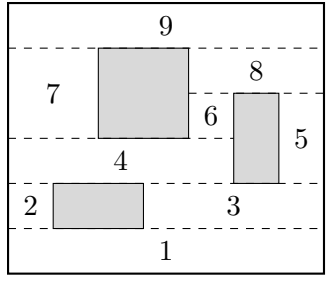

\rthmCompartmentalize*
\begin{proof}
Draw vertical lines in the bin at the $x$-coordinates in $\Tcal - \{0\}$.
This splits the bin into $|\Tcal|$ columns (see \cref{fig:compartmentalize:begin}).
Each column has 0 or more wide items crossing it.
These wide items divide the column into cells.
A cell is called tall iff it contains a tall item (see \cref{fig:compartmentalize:tall-cells}).
There can be at most $1/\epsLarge-1$ tall cells in a column,
so there can be at most $(1/\epsLarge - 1)|\Tcal|$ tall cells in the bin.

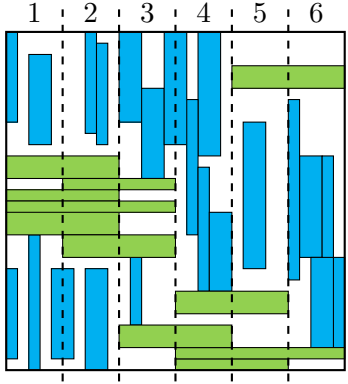
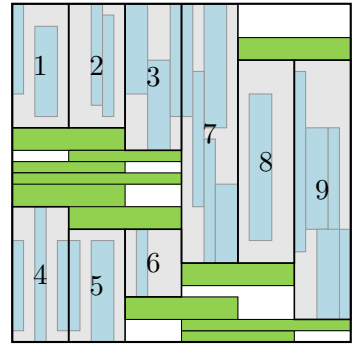
\begin{figure}[!htb]
\begin{subfigure}{0.45\textwidth}
\centering
\ifcsname pGameL\endcsname\else\newlength{\pGameL}\fi
\setlength{\pGameL}{0.15cm}
\tikzset{bin/.style={draw,thick}}
\tikzset{binGrid/.style={draw,step=1\pGameL,{black!20}}}
\tikzset{witem/.style={draw,fill=myGreen}}
\tikzset{hitem/.style={draw,fill=myBlue}}
\tikzset{cutline/.style={draw,dashed,thick}}
\begin{tikzpicture}
\path[hitem] (0\pGameL, 22\pGameL) rectangle +(1\pGameL, 8\pGameL);
\path[hitem] (2\pGameL, 20\pGameL) rectangle +(2\pGameL, 8\pGameL);
\path[hitem] (7\pGameL, 21\pGameL) rectangle +(1\pGameL, 9\pGameL);
\path[hitem] (8\pGameL, 20\pGameL) rectangle +(1\pGameL, 9\pGameL);
\path[hitem] (10\pGameL, 22\pGameL) rectangle +(2\pGameL, 8\pGameL);
\path[hitem] (12\pGameL, 17\pGameL) rectangle +(2\pGameL, 8\pGameL);
\path[hitem] (14\pGameL, 20\pGameL) rectangle +(2\pGameL, 10\pGameL);
\path[hitem] (17\pGameL, 19\pGameL) rectangle +(2\pGameL, 11\pGameL);
\path[hitem] (0\pGameL, 1\pGameL) rectangle +(1\pGameL, 8\pGameL);
\path[hitem] (2\pGameL, 0\pGameL) rectangle +(1\pGameL, 12\pGameL);
\path[hitem] (4\pGameL, 1\pGameL) rectangle +(2\pGameL, 8\pGameL);
\path[hitem] (7\pGameL, 0\pGameL) rectangle +(2\pGameL, 9\pGameL);
\path[hitem] (11\pGameL, 4\pGameL) rectangle +(1\pGameL, 6\pGameL);
\path[hitem] (26\pGameL, 10\pGameL) rectangle +(2\pGameL, 9\pGameL);
\path[hitem] (16\pGameL, 12\pGameL) rectangle +(1\pGameL, 12\pGameL);
\path[hitem] (17\pGameL, 7\pGameL) rectangle +(1\pGameL, 11\pGameL);
\path[hitem] (18\pGameL, 7\pGameL) rectangle +(2\pGameL, 7\pGameL);
\path[hitem] (21\pGameL, 9\pGameL) rectangle +(2\pGameL, 13\pGameL);
\path[hitem] (25\pGameL, 8\pGameL) rectangle +(1\pGameL, 16\pGameL);
\path[hitem] (29\pGameL, 2\pGameL) rectangle +(1\pGameL, 8\pGameL);
\path[hitem] (27\pGameL, 2\pGameL) rectangle +(2\pGameL, 8\pGameL);
\path[hitem] (28\pGameL, 10\pGameL) rectangle +(1\pGameL, 9\pGameL);

\path[witem] (0\pGameL, 12\pGameL) rectangle +(10\pGameL, 2\pGameL);
\path[witem] (0\pGameL, 17\pGameL) rectangle +(10\pGameL, 2\pGameL);
\path[witem] (5\pGameL, 16\pGameL) rectangle +(10\pGameL, 1\pGameL);
\path[witem] (0\pGameL, 15\pGameL) rectangle +(10\pGameL, 1\pGameL);
\path[witem] (0\pGameL, 14\pGameL) rectangle +(15\pGameL, 1\pGameL);
\path[witem] (15\pGameL, 5\pGameL) rectangle +(10\pGameL, 2\pGameL);
\path[witem] (5\pGameL, 10\pGameL) rectangle +(10\pGameL, 2\pGameL);
\path[witem] (15\pGameL, 0\pGameL) rectangle +(10\pGameL, 1\pGameL);
\path[witem] (15\pGameL, 1\pGameL) rectangle +(15\pGameL, 1\pGameL);
\path[witem] (10\pGameL, 2\pGameL) rectangle +(10\pGameL, 2\pGameL);
\path[witem] (20\pGameL, 25\pGameL) rectangle +(10\pGameL, 2\pGameL);

\path[bin] (0\pGameL, 0\pGameL) rectangle (30\pGameL, 30\pGameL);
\path[cutline] foreach \x in {5,10,15,20,25}{
    (\x\pGameL, -1\pGameL) -- (\x\pGameL, 32\pGameL)
};
\foreach \index/\x in {1/2.5,2/7.5,3/12.5,4/17.5,5/22.5,6/27.5}{
    \node[anchor=south] at (\x\pGameL, 30\pGameL) {\index};
}
\end{tikzpicture}

\caption{A packing of items in a bin. Wide items are green and tall items are blue.
Draw vertical lines at $x$-coordinates from $\Tcal - \{0\}$.
They divide the bin into columns. In this figure, we have 6 columns.}
\label{fig:compartmentalize:begin}
\end{subfigure}
\hfill
\begin{subfigure}{0.45\textwidth}
\centering
\ifcsname pGameL\endcsname\else\newlength{\pGameL}\fi
\setlength{\pGameL}{0.15cm}
\tikzset{bin/.style={draw,thick}}
\tikzset{binGrid/.style={draw,step=1\pGameL,{black!20}}}
\tikzset{witem/.style={draw,fill=myGreen}}
\tikzset{hitem/.style={draw={black!40},fill=shadedLightBlue}}
\tikzset{tcell/.style={fill={black!10}}}
\tikzset{tcell2/.style={draw,semithick}}
\tikzset{cutline/.style={draw,dashed,thick}}
\begin{tikzpicture}
\path[tcell] (0\pGameL, 19\pGameL) rectangle +(5\pGameL, 11\pGameL);
\path[tcell] (5\pGameL, 19\pGameL) rectangle +(5\pGameL, 11\pGameL);
\path[tcell] (10\pGameL, 17\pGameL) rectangle +(5\pGameL, 13\pGameL);
\path[tcell] (0\pGameL, 0\pGameL) rectangle +(5\pGameL, 12\pGameL);
\path[tcell] (5\pGameL, 0\pGameL) rectangle +(5\pGameL, 10\pGameL);
\path[tcell] (10\pGameL, 4\pGameL) rectangle +(5\pGameL, 6\pGameL);
\path[tcell] (15\pGameL, 7\pGameL) rectangle +(5\pGameL, 23\pGameL);
\path[tcell] (20\pGameL, 7\pGameL) rectangle +(5\pGameL, 18\pGameL);
\path[tcell] (25\pGameL, 2\pGameL) rectangle +(5\pGameL, 23\pGameL);
\path[hitem] (0\pGameL, 22\pGameL) rectangle +(1\pGameL, 8\pGameL);
\path[hitem] (2\pGameL, 20\pGameL) rectangle +(2\pGameL, 8\pGameL);
\path[hitem] (7\pGameL, 21\pGameL) rectangle +(1\pGameL, 9\pGameL);
\path[hitem] (8\pGameL, 20\pGameL) rectangle +(1\pGameL, 9\pGameL);
\path[hitem] (10\pGameL, 22\pGameL) rectangle +(2\pGameL, 8\pGameL);
\path[hitem] (12\pGameL, 17\pGameL) rectangle +(2\pGameL, 8\pGameL);
\path[hitem] (14\pGameL, 20\pGameL) rectangle +(2\pGameL, 10\pGameL);
\path[hitem] (17\pGameL, 19\pGameL) rectangle +(2\pGameL, 11\pGameL);
\path[hitem] (0\pGameL, 1\pGameL) rectangle +(1\pGameL, 8\pGameL);
\path[hitem] (2\pGameL, 0\pGameL) rectangle +(1\pGameL, 12\pGameL);
\path[hitem] (4\pGameL, 1\pGameL) rectangle +(2\pGameL, 8\pGameL);
\path[hitem] (7\pGameL, 0\pGameL) rectangle +(2\pGameL, 9\pGameL);
\path[hitem] (11\pGameL, 4\pGameL) rectangle +(1\pGameL, 6\pGameL);
\path[hitem] (26\pGameL, 10\pGameL) rectangle +(2\pGameL, 9\pGameL);
\path[hitem] (16\pGameL, 12\pGameL) rectangle +(1\pGameL, 12\pGameL);
\path[hitem] (17\pGameL, 7\pGameL) rectangle +(1\pGameL, 11\pGameL);
\path[hitem] (18\pGameL, 7\pGameL) rectangle +(2\pGameL, 7\pGameL);
\path[hitem] (21\pGameL, 9\pGameL) rectangle +(2\pGameL, 13\pGameL);
\path[hitem] (25\pGameL, 8\pGameL) rectangle +(1\pGameL, 16\pGameL);
\path[hitem] (29\pGameL, 2\pGameL) rectangle +(1\pGameL, 8\pGameL);
\path[hitem] (27\pGameL, 2\pGameL) rectangle +(2\pGameL, 8\pGameL);
\path[hitem] (28\pGameL, 10\pGameL) rectangle +(1\pGameL, 9\pGameL);

\path[witem] (0\pGameL, 12\pGameL) rectangle +(10\pGameL, 2\pGameL);
\path[witem] (0\pGameL, 17\pGameL) rectangle +(10\pGameL, 2\pGameL);
\path[witem] (5\pGameL, 16\pGameL) rectangle +(10\pGameL, 1\pGameL);
\path[witem] (0\pGameL, 15\pGameL) rectangle +(10\pGameL, 1\pGameL);
\path[witem] (0\pGameL, 14\pGameL) rectangle +(15\pGameL, 1\pGameL);
\path[witem] (15\pGameL, 5\pGameL) rectangle +(10\pGameL, 2\pGameL);
\path[witem] (5\pGameL, 10\pGameL) rectangle +(10\pGameL, 2\pGameL);
\path[witem] (15\pGameL, 0\pGameL) rectangle +(10\pGameL, 1\pGameL);
\path[witem] (15\pGameL, 1\pGameL) rectangle +(15\pGameL, 1\pGameL);
\path[witem] (10\pGameL, 2\pGameL) rectangle +(10\pGameL, 2\pGameL);
\path[witem] (20\pGameL, 25\pGameL) rectangle +(10\pGameL, 2\pGameL);

\path[tcell2] (0\pGameL, 19\pGameL) rectangle +(5\pGameL, 11\pGameL) node[pos=0.5] {1};
\path[tcell2] (5\pGameL, 19\pGameL) rectangle +(5\pGameL, 11\pGameL) node[pos=0.5] {2};
\path[tcell2] (10\pGameL, 17\pGameL) rectangle +(5\pGameL, 13\pGameL) node[pos=0.5] {3};
\path[tcell2] (0\pGameL, 0\pGameL) rectangle +(5\pGameL, 12\pGameL) node[pos=0.5] {4};
\path[tcell2] (5\pGameL, 0\pGameL) rectangle +(5\pGameL, 10\pGameL) node[pos=0.5] {5};
\path[tcell2] (10\pGameL, 4\pGameL) rectangle +(5\pGameL, 6\pGameL) node[pos=0.5] {6};
\path[tcell2] (15\pGameL, 7\pGameL) rectangle +(5\pGameL, 23\pGameL) node[pos=0.5] {7};
\path[tcell2] (20\pGameL, 7\pGameL) rectangle +(5\pGameL, 18\pGameL) node[pos=0.5] {8};
\path[tcell2] (25\pGameL, 2\pGameL) rectangle +(5\pGameL, 23\pGameL) node[pos=0.5] {9};
\path[bin] (0\pGameL, 0\pGameL) rectangle (30\pGameL, 30\pGameL);
\end{tikzpicture}

\caption{Wide items divide each column into \emph{cells}.
Each cell containing a tall item is called a \emph{tall cell}.
There are 9 tall cells in this figure, which are shaded gray.}
\label{fig:compartmentalize:tall-cells}
\end{subfigure}
\caption{Creating tall cells in a bin}
\label{fig:compartmentalize-1}
\end{figure}

By \cref{thm:empty-to-rects}, we can use horizontal cuts to partition the space outside
tall cells into at most $3(1/\epsLarge-1)|\Tcal| + 1$ rectangular regions
(this may slice some wide items). See \cref{fig:compartmentalize:boxes}.
If a region contains a wide item, call it a box.

\begin{figure}[!htb]
\begin{subfigure}[t]{0.35\textwidth}
\centering
\ifcsname pGameL\endcsname\else\newlength{\pGameL}\fi
\setlength{\pGameL}{0.15cm}
\tikzset{bin/.style={draw,thick}}
\tikzset{binGrid/.style={draw,step=1\pGameL,{black!20}}}
\tikzset{witem/.style={draw={black!40},fill=shadedLightGreen}}
\tikzset{hitem/.style={draw={black!30},fill=myLightBlue}}
\tikzset{tcell/.style={draw}}
\tikzset{box/.style={fill={black!10}}}
\tikzset{box2/.style={draw,thick}}
\begin{tikzpicture}

\path[box] (0\pGameL, 17\pGameL) rectangle +(10\pGameL, 2\pGameL);
\path[box] (0\pGameL, 12\pGameL) rectangle +(15\pGameL, 5\pGameL);
\path[box] (5\pGameL, 10\pGameL) rectangle +(10\pGameL, 2\pGameL);
\path[box] (20\pGameL, 25\pGameL) rectangle +(10\pGameL, 5\pGameL);
\path[box] (15\pGameL, 4\pGameL) rectangle +(10\pGameL, 3\pGameL);
\path[box] (10\pGameL, 2\pGameL) rectangle +(15\pGameL, 2\pGameL);
\path[box] (10\pGameL, 0\pGameL) rectangle +(20\pGameL, 2\pGameL);

\path[witem] (0\pGameL, 12\pGameL) rectangle +(10\pGameL, 2\pGameL);
\path[witem] (0\pGameL, 17\pGameL) rectangle +(10\pGameL, 2\pGameL);
\path[witem] (5\pGameL, 16\pGameL) rectangle +(10\pGameL, 1\pGameL);
\path[witem] (0\pGameL, 15\pGameL) rectangle +(10\pGameL, 1\pGameL);
\path[witem] (0\pGameL, 14\pGameL) rectangle +(15\pGameL, 1\pGameL);
\path[witem] (15\pGameL, 5\pGameL) rectangle +(10\pGameL, 2\pGameL);
\path[witem] (5\pGameL, 10\pGameL) rectangle +(10\pGameL, 2\pGameL);
\path[witem] (15\pGameL, 0\pGameL) rectangle +(10\pGameL, 1\pGameL);
\path[witem] (15\pGameL, 1\pGameL) rectangle +(15\pGameL, 1\pGameL);
\path[witem] (10\pGameL, 2\pGameL) rectangle +(10\pGameL, 2\pGameL);
\path[witem] (20\pGameL, 25\pGameL) rectangle +(10\pGameL, 2\pGameL);

\path[hitem] (0\pGameL, 22\pGameL) rectangle +(1\pGameL, 8\pGameL);
\path[hitem] (2\pGameL, 20\pGameL) rectangle +(2\pGameL, 8\pGameL);
\path[hitem] (7\pGameL, 21\pGameL) rectangle +(1\pGameL, 9\pGameL);
\path[hitem] (8\pGameL, 20\pGameL) rectangle +(1\pGameL, 9\pGameL);
\path[hitem] (10\pGameL, 22\pGameL) rectangle +(2\pGameL, 8\pGameL);
\path[hitem] (12\pGameL, 17\pGameL) rectangle +(2\pGameL, 8\pGameL);
\path[hitem] (14\pGameL, 20\pGameL) rectangle +(2\pGameL, 10\pGameL);
\path[hitem] (17\pGameL, 19\pGameL) rectangle +(2\pGameL, 11\pGameL);
\path[hitem] (0\pGameL, 1\pGameL) rectangle +(1\pGameL, 8\pGameL);
\path[hitem] (2\pGameL, 0\pGameL) rectangle +(1\pGameL, 12\pGameL);
\path[hitem] (4\pGameL, 1\pGameL) rectangle +(2\pGameL, 8\pGameL);
\path[hitem] (7\pGameL, 0\pGameL) rectangle +(2\pGameL, 9\pGameL);
\path[hitem] (11\pGameL, 4\pGameL) rectangle +(1\pGameL, 6\pGameL);
\path[hitem] (26\pGameL, 10\pGameL) rectangle +(2\pGameL, 9\pGameL);
\path[hitem] (16\pGameL, 12\pGameL) rectangle +(1\pGameL, 12\pGameL);
\path[hitem] (17\pGameL, 7\pGameL) rectangle +(1\pGameL, 11\pGameL);
\path[hitem] (18\pGameL, 7\pGameL) rectangle +(2\pGameL, 7\pGameL);
\path[hitem] (21\pGameL, 9\pGameL) rectangle +(2\pGameL, 13\pGameL);
\path[hitem] (25\pGameL, 8\pGameL) rectangle +(1\pGameL, 16\pGameL);
\path[hitem] (29\pGameL, 2\pGameL) rectangle +(1\pGameL, 8\pGameL);
\path[hitem] (27\pGameL, 2\pGameL) rectangle +(2\pGameL, 8\pGameL);
\path[hitem] (28\pGameL, 10\pGameL) rectangle +(1\pGameL, 9\pGameL);

\path[tcell] (0\pGameL, 19\pGameL) rectangle +(5\pGameL, 11\pGameL);
\path[tcell] (5\pGameL, 19\pGameL) rectangle +(5\pGameL, 11\pGameL);
\path[tcell] (10\pGameL, 17\pGameL) rectangle +(5\pGameL, 13\pGameL);
\path[tcell] (0\pGameL, 0\pGameL) rectangle +(5\pGameL, 12\pGameL);
\path[tcell] (5\pGameL, 0\pGameL) rectangle +(5\pGameL, 10\pGameL);
\path[tcell] (10\pGameL, 4\pGameL) rectangle +(5\pGameL, 6\pGameL);
\path[tcell] (15\pGameL, 7\pGameL) rectangle +(5\pGameL, 23\pGameL);
\path[tcell] (20\pGameL, 7\pGameL) rectangle +(5\pGameL, 18\pGameL);
\path[tcell] (25\pGameL, 2\pGameL) rectangle +(5\pGameL, 23\pGameL);

\path[box2] (0\pGameL, 17\pGameL) rectangle +(10\pGameL, 2\pGameL) node[pos=0.5] {1};
\path[box2] (0\pGameL, 12\pGameL) rectangle +(15\pGameL, 5\pGameL) node[pos=0.5] {2};
\path[box2] (5\pGameL, 10\pGameL) rectangle +(10\pGameL, 2\pGameL) node[pos=0.5] {3};
\path[box2] (20\pGameL, 25\pGameL) rectangle +(10\pGameL, 5\pGameL) node[pos=0.5] {4};
\path[box2] (15\pGameL, 4\pGameL) rectangle +(10\pGameL, 3\pGameL) node[pos=0.5] {5};
\path[box2] (10\pGameL, 2\pGameL) rectangle +(15\pGameL, 2\pGameL) node[pos=0.5] {6};
\path[box2] (10\pGameL, 0\pGameL) rectangle +(20\pGameL, 2\pGameL) node[pos=0.5] {7};

\path[bin] (0\pGameL, 0\pGameL) rectangle (30\pGameL, 30\pGameL);
\end{tikzpicture}

\caption{Partition the space outside tall cells into rectangular regions by
extending the horizontal edges of tall cells (see \cref{thm:empty-to-rects}).
Each rectangular region containing a wide item is called a \emph{box}.
There are 7 boxes in this figure, which are shaded gray.}
\label{fig:compartmentalize:boxes}
\end{subfigure}
\hfill
\begin{subfigure}[t]{0.3\textwidth}
\centering
\ifcsname pGameL\endcsname\else\newlength{\pGameL}\fi
\setlength{\pGameL}{0.15cm}
\tikzset{bin/.style={draw,thick}}
\tikzset{binGrid/.style={draw,step=1\pGameL,{black!20}}}
\tikzset{witem/.style={draw={black!40},fill=shadedLightGreen}}
\tikzset{hitem/.style={draw={black!30},fill=myLightBlue}}
\tikzset{tcell/.style={draw}}
\tikzset{box/.style={fill={black!10}}}
\tikzset{box2/.style={draw,semithick}}
\begin{tikzpicture}

\path[box] (0\pGameL, 12\pGameL) rectangle +(15\pGameL, 4\pGameL);
\path[box] (5\pGameL, 10\pGameL) rectangle +(10\pGameL, 2\pGameL);
\path[box] (20\pGameL, 26\pGameL) rectangle +(10\pGameL, 4\pGameL);
\path[box] (15\pGameL, 4\pGameL) rectangle +(10\pGameL, 2\pGameL);
\path[box] (10\pGameL, 2\pGameL) rectangle +(15\pGameL, 2\pGameL);
\path[box] (10\pGameL, 0\pGameL) rectangle +(20\pGameL, 2\pGameL);

\path[witem] (0\pGameL, 12\pGameL) rectangle +(10\pGameL, 2\pGameL);
\path[witem] (0\pGameL, 15\pGameL) rectangle +(10\pGameL, 1\pGameL);
\path[witem] (0\pGameL, 14\pGameL) rectangle +(15\pGameL, 1\pGameL);
\path[witem] (15\pGameL, 5\pGameL) rectangle +(10\pGameL, 1\pGameL);
\path[witem] (5\pGameL, 10\pGameL) rectangle +(10\pGameL, 2\pGameL);
\path[witem] (15\pGameL, 0\pGameL) rectangle +(10\pGameL, 1\pGameL);
\path[witem] (15\pGameL, 1\pGameL) rectangle +(15\pGameL, 1\pGameL);
\path[witem] (10\pGameL, 2\pGameL) rectangle +(10\pGameL, 2\pGameL);
\path[witem] (20\pGameL, 26\pGameL) rectangle +(10\pGameL, 1\pGameL);

\path[hitem] (0\pGameL, 22\pGameL) rectangle +(1\pGameL, 8\pGameL);
\path[hitem] (2\pGameL, 20\pGameL) rectangle +(2\pGameL, 8\pGameL);
\path[hitem] (7\pGameL, 21\pGameL) rectangle +(1\pGameL, 9\pGameL);
\path[hitem] (8\pGameL, 20\pGameL) rectangle +(1\pGameL, 9\pGameL);
\path[hitem] (10\pGameL, 22\pGameL) rectangle +(2\pGameL, 8\pGameL);
\path[hitem] (12\pGameL, 17\pGameL) rectangle +(2\pGameL, 8\pGameL);
\path[hitem] (14\pGameL, 20\pGameL) rectangle +(2\pGameL, 10\pGameL);
\path[hitem] (17\pGameL, 19\pGameL) rectangle +(2\pGameL, 11\pGameL);
\path[hitem] (0\pGameL, 1\pGameL) rectangle +(1\pGameL, 8\pGameL);
\path[hitem] (2\pGameL, 0\pGameL) rectangle +(1\pGameL, 12\pGameL);
\path[hitem] (4\pGameL, 1\pGameL) rectangle +(2\pGameL, 8\pGameL);
\path[hitem] (7\pGameL, 0\pGameL) rectangle +(2\pGameL, 9\pGameL);
\path[hitem] (11\pGameL, 4\pGameL) rectangle +(1\pGameL, 6\pGameL);
\path[hitem] (26\pGameL, 10\pGameL) rectangle +(2\pGameL, 9\pGameL);
\path[hitem] (16\pGameL, 12\pGameL) rectangle +(1\pGameL, 12\pGameL);
\path[hitem] (17\pGameL, 7\pGameL) rectangle +(1\pGameL, 11\pGameL);
\path[hitem] (18\pGameL, 7\pGameL) rectangle +(2\pGameL, 7\pGameL);
\path[hitem] (21\pGameL, 9\pGameL) rectangle +(2\pGameL, 13\pGameL);
\path[hitem] (25\pGameL, 8\pGameL) rectangle +(1\pGameL, 16\pGameL);
\path[hitem] (29\pGameL, 2\pGameL) rectangle +(1\pGameL, 8\pGameL);
\path[hitem] (27\pGameL, 2\pGameL) rectangle +(2\pGameL, 8\pGameL);
\path[hitem] (28\pGameL, 10\pGameL) rectangle +(1\pGameL, 9\pGameL);

\path[tcell] (0\pGameL, 19\pGameL) rectangle +(5\pGameL, 11\pGameL);
\path[tcell] (5\pGameL, 19\pGameL) rectangle +(5\pGameL, 11\pGameL);
\path[tcell] (10\pGameL, 17\pGameL) rectangle +(5\pGameL, 13\pGameL);
\path[tcell] (0\pGameL, 0\pGameL) rectangle +(5\pGameL, 12\pGameL);
\path[tcell] (5\pGameL, 0\pGameL) rectangle +(5\pGameL, 10\pGameL);
\path[tcell] (10\pGameL, 4\pGameL) rectangle +(5\pGameL, 6\pGameL);
\path[tcell] (15\pGameL, 7\pGameL) rectangle +(5\pGameL, 23\pGameL);
\path[tcell] (20\pGameL, 7\pGameL) rectangle +(5\pGameL, 18\pGameL);
\path[tcell] (25\pGameL, 2\pGameL) rectangle +(5\pGameL, 23\pGameL);

\path[box2] (0\pGameL, 12\pGameL) rectangle +(15\pGameL, 4\pGameL) node[pos=0.5] {2};
\path[box2] (5\pGameL, 10\pGameL) rectangle +(10\pGameL, 2\pGameL) node[pos=0.5] {3};
\path[box2] (20\pGameL, 26\pGameL) rectangle +(10\pGameL, 4\pGameL) node[pos=0.5] {4};
\path[box2] (15\pGameL, 4\pGameL) rectangle +(10\pGameL, 2\pGameL) node[pos=0.5] {5};
\path[box2] (10\pGameL, 2\pGameL) rectangle +(15\pGameL, 2\pGameL) node[pos=0.5] {6};
\path[box2] (10\pGameL, 0\pGameL) rectangle +(20\pGameL, 2\pGameL) node[pos=0.5] {7};

\path[bin] (0\pGameL, 0\pGameL) rectangle (30\pGameL, 30\pGameL);
\end{tikzpicture}

\caption{For each box, discard some items and shift horizontal edges
to make their $y$-coordinates multiples of $\epsCont$.
Boxes that continue to contain a wide item are now wide compartments.}
\label{fig:compartmentalize:boxes2}
\end{subfigure}
\hfill
\begin{subfigure}[t]{0.3\textwidth}
\centering
\ifcsname pGameL\endcsname\else\newlength{\pGameL}\fi
\setlength{\pGameL}{0.15cm}
\tikzset{bin/.style={draw,thick}}
\tikzset{binGrid/.style={draw,step=1\pGameL,{black!20}}}
\tikzset{witem/.style={draw={black!30},fill=myLightGreen}}
\tikzset{hitem/.style={draw={black!30},fill=myLightBlue}}
\tikzset{tcell/.style={draw,semithick}}
\tikzset{box2/.style={draw}}
\begin{tikzpicture}

\path[witem] (0\pGameL, 12\pGameL) rectangle +(10\pGameL, 2\pGameL);
\path[witem] (0\pGameL, 15\pGameL) rectangle +(10\pGameL, 1\pGameL);
\path[witem] (0\pGameL, 14\pGameL) rectangle +(15\pGameL, 1\pGameL);
\path[witem] (15\pGameL, 5\pGameL) rectangle +(10\pGameL, 1\pGameL);
\path[witem] (5\pGameL, 10\pGameL) rectangle +(10\pGameL, 2\pGameL);
\path[witem] (15\pGameL, 0\pGameL) rectangle +(10\pGameL, 1\pGameL);
\path[witem] (15\pGameL, 1\pGameL) rectangle +(15\pGameL, 1\pGameL);
\path[witem] (10\pGameL, 2\pGameL) rectangle +(10\pGameL, 2\pGameL);
\path[witem] (20\pGameL, 26\pGameL) rectangle +(10\pGameL, 1\pGameL);

\path[hitem] (0\pGameL, 22\pGameL) rectangle +(1\pGameL, 8\pGameL);
\path[hitem] (2\pGameL, 20\pGameL) rectangle +(2\pGameL, 8\pGameL);
\path[hitem] (7\pGameL, 21\pGameL) rectangle +(1\pGameL, 9\pGameL);
\path[hitem] (8\pGameL, 20\pGameL) rectangle +(1\pGameL, 9\pGameL);
\path[hitem] (10\pGameL, 22\pGameL) rectangle +(2\pGameL, 8\pGameL);
\path[hitem] (12\pGameL, 17\pGameL) rectangle +(2\pGameL, 8\pGameL);
\path[hitem] (14\pGameL, 20\pGameL) rectangle +(2\pGameL, 10\pGameL);
\path[hitem] (17\pGameL, 19\pGameL) rectangle +(2\pGameL, 11\pGameL);
\path[hitem] (0\pGameL, 1\pGameL) rectangle +(1\pGameL, 8\pGameL);
\path[hitem] (2\pGameL, 0\pGameL) rectangle +(1\pGameL, 12\pGameL);
\path[hitem] (4\pGameL, 1\pGameL) rectangle +(2\pGameL, 8\pGameL);
\path[hitem] (7\pGameL, 0\pGameL) rectangle +(2\pGameL, 9\pGameL);
\path[hitem] (11\pGameL, 4\pGameL) rectangle +(1\pGameL, 6\pGameL);
\path[hitem] (26\pGameL, 10\pGameL) rectangle +(2\pGameL, 9\pGameL);
\path[hitem] (16\pGameL, 12\pGameL) rectangle +(1\pGameL, 12\pGameL);
\path[hitem] (17\pGameL, 7\pGameL) rectangle +(1\pGameL, 11\pGameL);
\path[hitem] (18\pGameL, 7\pGameL) rectangle +(2\pGameL, 7\pGameL);
\path[hitem] (21\pGameL, 9\pGameL) rectangle +(2\pGameL, 13\pGameL);
\path[hitem] (25\pGameL, 8\pGameL) rectangle +(1\pGameL, 16\pGameL);
\path[hitem] (29\pGameL, 2\pGameL) rectangle +(1\pGameL, 8\pGameL);
\path[hitem] (27\pGameL, 2\pGameL) rectangle +(2\pGameL, 8\pGameL);
\path[hitem] (28\pGameL, 10\pGameL) rectangle +(1\pGameL, 9\pGameL);

\path[tcell] (0\pGameL, 16\pGameL) rectangle +(5\pGameL, 14\pGameL) node[pos=0.5] {1};
\path[tcell] (5\pGameL, 16\pGameL) rectangle +(5\pGameL, 14\pGameL) node[pos=0.5] {2};
\path[tcell] (10\pGameL, 16\pGameL) rectangle +(5\pGameL, 14\pGameL) node[pos=0.5] {3};
\path[tcell] (0\pGameL, 0\pGameL) rectangle +(5\pGameL, 12\pGameL) node[pos=0.5] {4};
\path[tcell] (5\pGameL, 0\pGameL) rectangle +(5\pGameL, 10\pGameL) node[pos=0.5] {5};
\path[tcell] (10\pGameL, 4\pGameL) rectangle +(5\pGameL, 6\pGameL) node[pos=0.5] {6};
\path[tcell] (15\pGameL, 6\pGameL) rectangle +(5\pGameL, 24\pGameL) node[pos=0.5] {7};
\path[tcell] (20\pGameL, 6\pGameL) rectangle +(5\pGameL, 20\pGameL) node[pos=0.5] {8};
\path[tcell] (25\pGameL, 2\pGameL) rectangle +(5\pGameL, 24\pGameL) node[pos=0.5] {9};

\path[box2] (0\pGameL, 12\pGameL) rectangle +(15\pGameL, 4\pGameL);
\path[box2] (5\pGameL, 10\pGameL) rectangle +(10\pGameL, 2\pGameL);
\path[box2] (20\pGameL, 26\pGameL) rectangle +(10\pGameL, 4\pGameL);
\path[box2] (15\pGameL, 4\pGameL) rectangle +(10\pGameL, 2\pGameL);
\path[box2] (10\pGameL, 2\pGameL) rectangle +(15\pGameL, 2\pGameL);
\path[box2] (10\pGameL, 0\pGameL) rectangle +(20\pGameL, 2\pGameL);

\path[bin] (0\pGameL, 0\pGameL) rectangle (30\pGameL, 30\pGameL);
\end{tikzpicture}

\caption[Obtaining tall compartments]%
{Wide compartments divide each column into rectangular regions.
Each such region containing a tall item is a tall compartment.
There are 9 tall compartments in this figure.}
\label{fig:compartmentalize:final}
\end{subfigure}
\caption{Obtaining compartments}
\label{fig:compartmentalize-2}
\end{figure}

For each box $i$, slice and discard some items from the bottom of the box
and increase $y_1(i)$ so that it becomes a multiple of $\epsCont$.
Then slice and discard some items from the top of the box
and reduce $y_2(i)$ so that it becomes a multiple of $\epsCont$.
The total area of items discarded is less than $2\epsCont$.
If $i$ continues to contain a wide item, it becomes a wide compartment.
Now all wide items belong to some wide compartment
(see \cref{fig:compartmentalize:boxes2}).

Each column has 0 or more wide compartments crossing it.
These wide compartments divide the column into rectangular regions.
Each region that contains a tall item is a tall compartment
(see \cref{fig:compartmentalize:final}).

Therefore, by removing wide and small items of area less than
$6|\Tcal|\epsCont/\epsLarge \le \eps$, we get a compartmental packing of items
where there are at most $(1/\epsLarge-1)|\Tcal|$ tall compartments
and at most $3(1/\epsLarge-1)|\Tcal| + 1$ wide compartments.
\end{proof}

\rthmStruct*
\begin{proof}
Consider a fractional packing of $\Itild$ into $m \defeq \fopt(\Itild)$ bins.
By \cref{thm:disc-hor-pos,thm:thin-bp:compartmentalize}, in each bin,
we can discard items of area at most $2\eps$ from the bin
and get a compartmental packing of the remaining items.

Let $X$ be the set of wide and small discarded items
and let $Y$ be the set of tall discarded items.
For each item $i \in X$, if $w(i) \le 1/2$, slice it using a horizontal cut in the middle
and place the pieces horizontally next to each other to get a new item
of width $2w(i)$ and height $h(i)/2$. Repeat until $w(i) > 1/2$.
Now pack the items in bins by stacking them one-over-the-other
so that for each item $i \in X$, $x_1(i) = 0$.
This will require less than $2a(X) + 1$ bins,
and the packing will be compartmental.

Similarly, we can get a compartmental packing of $Y$ into $2a(Y) + 1$ bins.
Since $a(X \cup Y) < 2\eps m$, we will require less than $4\eps m + 2$ bins.
Therefore, the total number of compartmental bins used to pack $\Itild$
is less than $(1 + 4\eps)m + 2$.
\end{proof}

\subsection{Enumerating Packing of Compartments}
\label{sec:enum-configs}

We will compute the optimal fractional compartmental packing of $\Itild$ in two steps.
First, for each bin, we will guess the compartments in the bin.
Each such packing of compartments into bins is called a \emph{configuration}.
Then we will fractionally pack the items into the compartments.

There can be at most $\nW \defeq 3(1/\epsLarge-1)|\Tcal| + 1$ wide compartments in a bin.
Each wide compartment can have $(1/\epsCont)^2$ $y$-coordinates
of the top and bottom edges and at most $|\Tcal|^2/2$ $x$-coordinates
of the left and right edges, where $\epsCont \defeq \eps\epsLarge/6|\Tcal|$.
The rest of the space is for tall compartments.
Therefore, the number of configurations is at most
\[ \nC \defeq \left((1/\epsCont)^2|\Tcal|^2/2\right)^{\nW}
    \le \left(\frac{3|\Tcal|^2}{\eps\epsLarge}\right)^{6|\Tcal|/\epsLarge}
    \le \left(1+\frac{1}{\eps\epsLarge}\right)^{
        \left(1+\frac{1}{\eps\epsLarge}\right)^{2/\epsLarge + 1}}. \]
Since each configuration can have at most $n$ bins, the number of combinations
of configurations is at most $(n+1)^{\nC}$.

Therefore, we can iterate over all possible bin packings of empty compartments in $O(n^{\nC})$ time.
Let $\iterPackings(\Itild)$ be an algorithm for this, i.e.,
$\iterPackings(\Itild)$ outputs the set of all possible bin packings of empty compartments into
at least $\smallceil{a(\Itild)}$ and at most $n$ bins,
where $n$ is the number of items in $\Itild$.

\subsection{Packing Items Into Compartments}
\label{sec:feas-lp}

For each bin packing of empty compartments, we will try to
fractionally pack the items into the bins.
Formally, let $P$ be a packing of empty compartments into bins.
We will create a feasibility linear program, called $\FP(\Itild, P)$,
that is feasible iff wide and tall items in $\Itild$ can be packed
into the compartments in $P$.
If $\FP(\Itild, P)$ is feasible, then small items can also be
fractionally packed since $P$ contains at least $a(\Itild)$ bins.

Let $w_1', w_2', \ldots, w_p'$ be the distinct widths of wide compartments in $P$.
Let $U_j$ be the set of wide compartments in $P$ having width $w_j'$.
Let $h(U_j)$ be the sum of heights of the compartments in $U_j$.
By \cref{defn:thin-bp:compartmental}, we know that $p \le |\Tcal|^2/2$.
Let $w_1, w_2, \ldots, w_r$ be the distinct widths of items in $\Wtild$
(recall that $\Wtild$ is the set of wide items in $\Itild$).
Let $\Wtild_j$ be the items in $\Wtild$ having width $w_j$.
Let $h(\Wtild_j)$ be the sum of heights of all items in $\Wtild_j$.
By \cref{thm:lingroup-n}, we get $r \le 1/\eps\epsLarge$.

Let $C \defeq [C_0, C_1, \ldots, C_r]$ be a vector, where
$C_0 \in [p]$ and $C_j \in \mathbb{Z}_{\ge 0}$ for $j \in [r]$.
$C$ is called a \emph{wide configuration} iff
$w(C) \defeq \sum_{j=1}^r C_jw_j \le w_{C_0}'$.
Intuitively, a wide configuration $C$ represents a set of wide items that can be placed
side-by-side into a compartment of width $w_{C_0}'$.
Let $\Ccal$ be the set of all wide configurations.
Then $|\Ccal| \le p/\epsLarge^r$, which is a constant.
Let $\Ccal_j \defeq \{C \in \Ccal: C_0 = j\}$.

To pack $\Wtild$ into wide compartments,
we must determine the height of each configuration.
Let $x \in \mathbb{R}_{\ge 0}^{|\Ccal|}$ be a vector where
$x_C$ denotes the height of configuration $C$.
Then $\Wtild$ can be packed into wide compartments according to $x$ iff
$x$ is a feasible solution the following feasibility linear program,
named $\FP_W(\Itild, P)$:
\[ \begin{array}{*4{>{\displaystyle}l}}
\sum_{C \in \Ccal} C_jx_C &\ge h(\Wtild_j)
    & \forall j \in [r]
    & \qquad\textrm{($\Wtild_j$ should be covered)}
\\[1.75em] \sum_{C \in \Ccal \textrm{ and } C_0 = j} x_C &\le h(U_j)
    & \forall j \in [p]
    & \qquad\textrm{($\Ccal_j$ should fit in $U_j$)}
\\[1.75em] x_C &\ge 0 & \forall C \in \Ccal
\end{array} \]

Let $x^*$ be an extreme point solution to $\FP_W(\Itild, P)$
(if $\FP_W(\Itild, P)$ is feasible).
By Rank Lemma\footnote{\rankLemmaNote}, at most $p+r$ entries of $x^*$ are non-zero.
Since the number of variables and constraints is constant,
$x^*$ can be computed in constant time.

Let $\Htild$ be the set of tall items in $\Itild$.
Items in $\Htild$ have at most $1/\eps\epsLarge$ distinct heights.
Let there be $q$ distinct heights of tall compartments in $P$.
By \cref{defn:thin-bp:compartmental}, we know that $q \le 1/\epsCont = 6|\Tcal|/\eps\epsLarge$.
We can similarly define \emph{tall configurations} and we can similarly define
a feasibility linear program for tall items, named $\FP_H(\Itild, P)$.
$\Htild$ can be packed into tall compartments in $P$ iff $\FP_H(\Itild, P)$ is feasible.
Let $y^*$ be an extreme point solution to $\FP_H(\Itild, P)$.
Then $y^*$ can be computed in constant time and
$y^*$ has at most $q + 1/\eps\epsLarge$ positive entries.

Therefore, $\Itild$ can be packed into $P$ iff the feasibility linear program
$\FP(\Itild, P) \defeq \FP_W(\Itild, P) \wedge \FP_H(\Itild, P)$ is feasible.

The solution $(x^*, y^*)$ shows us how to split each compartment into \emph{shelves},
where each shelf corresponds to a configuration $C$
and the shelf can be split into $C_j$ \emph{containers} of width $w_j$
and one container of width $w_{C_0}' - w(C)$.
Let there be $m$ bins in $P$. After splitting the configurations across compartments,
we get at most $p + q + 2/\eps\epsLarge + m(\nW + \nH)$ shelves.

\subsection{Converting a Fractional Packing to a Non-Fractional Packing}
\label{sec:greedy-cont}

Let there be $m$ bins in a packing $P$ of empty compartments into bins.
Suppose it is possible to pack $\Itild$ into $P$.
Let $x^*$ and $y^*$ be extreme-point solutions to
$\FP_W(\Itild, P)$ and $\FP_H(\Itild, P)$, respectively.
This gives us a fractional compartmental packing of $\Itild$ into $m$ bins.
We will now show how to convert this to a non-fractional compartmental packing
by removing some items of small total area.
Formally, we give an algorithm called $\greedyCPack(\Itild, P, x^*, y^*)$.
It returns a pair $(Q, D)$, where $Q$ is a (non-fractional) compartmental
bin packing of items $\Itild - D$, where the compartments in the bin are as per $P$.
$D$ is called the set of discarded items, and we will prove that $a(D)$ is small.

For a configuration $C$ in a wide compartment, there is a container
of width $w_{C_0}' - w(C)$ available for packing small items.
Hence, there are $p + q + 2/\eps\epsLarge + m(\nW + \nH)$ containers available
inside compartments for packing small items.
By \cref{thm:empty-to-rects}, we can partition the space outside compartments into
at most $m(3(\nW + \nH) + 1)$ containers.
Therefore, the total number of containers available for packing small items is at most
\[ m_S \defeq (p + q + 2/\eps\epsLarge) + m(4(\nW + \nH) + 1)
\le \left(\frac{|\Tcal|^2}{2} + \frac{6|\Tcal|}{\eps\epsLarge}
    + \frac{2}{\eps\epsLarge}\right) + \frac{16|\Tcal|}{\epsLarge} m. \]

Greedily assign small items to small containers, i.e., keep assigning small items
to a container till the area of items assigned to it is at least
the area of the container, and then resume from the next container.
Each small item will get assigned to some container.
For each container $C$, pack the largest possible prefix of the assigned items using
the Next-Fit Decreasing Height (NFDH) algorithm.
By \cref{thm:nfdh-small}, the area of unpacked items would be
less than $\epsSmall + \delta + \epsSmall\delta$. Summing over all containers,
we get that the unpacked area is less than
$(\epsSmall + \delta + \epsSmall\delta)m_S \le 3\epsSmall m_S$.

For each $j$, greedily assign wide items from $\Wtild_j$ to containers of width $w_j$,
i.e., keep assigning items till the height of items exceeds the height of the container.
Each wide item will get assigned to some container.
Then discard the last item from each container.
For each shelf in a wide compartment having configuration $C$,
the total area of items we discard is at most $\delta w(C)$.
Similarly, we can discard tall items of area at most $\delta h(C)$
from each shelf in a tall compartment having configuration $C$.

Hence, across all configurations, we discard wide and tall items of area at most
\[ \delta((p + q + 2/\eps\epsLarge) + m(\nW + \nH))
\le \delta\left(\frac{|\Tcal|^2}{2} + \frac{6|\Tcal|}{\eps\epsLarge}
    + \frac{2}{\eps\epsLarge}\right) + \frac{4\delta|\Tcal|}{\epsLarge}m. \]

Therefore, for $(Q, D) \defeq \greedyCPack(\Itild, P, x^*, y^*)$, we get
\begin{equation}
\label{eqn:greedy-discard-area}
a(D) < \frac{52|\Tcal|\epsSmall}{\epsLarge} m
    + 4\epsSmall\left(\frac{|\Tcal|^2}{2} + \frac{6|\Tcal|}{\eps\epsLarge}
        + \frac{2}{\eps\epsLarge}\right),
\end{equation}
where $m$ is the number of bins used by $P$.

\subsection{The \texorpdfstring{$\thinCPack$}{skewedCPack} Algorithm}
\label{sec:thinCPack}

We now summarize the $\thinCPack$ algorithm for bin packing $\delta$-\thin{} items $I$
(see \cref{algo:thinCPack} for a more precise description).
First, use $\round$ on $I$, i.e., let $(\Itild, \Imed) \defeq \round(I)$.
Then enumerate all packings $P$ of compartments into bins as per \cref{sec:enum-configs}.
For each packing $P$, check if $\Itild$ can be fractionally packed into $P$
by solving the feasibility linear program (see \cref{sec:feas-lp}).
If yes, then use a solution to the feasibility linear program to
compute a (non-fractional) compartmental packing of $\Itild - D$
using $\greedyCPack$ (see \cref{sec:greedy-cont}),
where $D$ is the set of items discarded by $\greedyCPack$.
Then pack $\Imed \cup D$ into bins using the Next-Fit Decreasing Height (NFDH) algorithm.
Output the best bin packing of $I$ across all choices of $P$.

\begin{algorithm}[htb]
\caption{$\thinCPack_{\eps}(I)$: Packs a set $I$ of $\delta$-\thin{} rectangular items
into bins without rotating the items.}
\label{algo:thinCPack}
\begin{algorithmic}[1]
\State $(\Itild, \Imed) = \round_{\eps}(I)$.
\State Initialize $\Qbest$ to \Null.
\For{$P \in \iterPackings(\Itild)$}
    \Comment{$\iterPackings$ is defined in \cref{sec:enum-configs}.}
    \State $x^* = \opt(\FP_W(\Itild, P))$.
    \Comment{$\FP_W$ and $\FP_H$ are defined in \cref{sec:feas-lp}.}
    \LineComment{If $\FP_W(\Itild, P)$ is feasible,
        $x^*$ is an extreme-point solution to $\FP_W(\Itild, P)$.}
    \LineComment{If $\FP_W(\Itild, P)$ is infeasible, $x^*$ is \Null.}
    \State $y^* = \opt(\FP_H(\Itild, P))$.
    \If{$x^* \neq \Null$ and $y^* \neq \Null$}
        \Comment{if $\Itild$ can be packed into $P$}
        \State $(Q, D) = \greedyCPack(\Itild, P, x^*, y^*)$.
        \Comment{$\greedyCPack$ is defined in \cref{sec:greedy-cont}.}
        \State $Q_D = \operatorname{NFDH}(D \cup \Imed)$.
        \If{$Q \cup Q_D$ uses less bins than $\Qbest$}
            \State $\Qbest = Q \cup Q_D$.
        \EndIf
    \EndIf
\EndFor
\State \Return $\Qbest$
\end{algorithmic}
\end{algorithm}

Recall the function $f$ from \cref{sec:thin-bp:remmed}.
Since $\epsSmall \defeq f(\epsLarge)$, we get
\begin{equation}
\label{eqn:epsSmall}
\epsSmall = f(\epsLarge)
= \frac{\eps\epsLarge}{104(1+1/\eps\epsLarge)^{2/\epsLarge-2}}
\le \frac{\eps\epsLarge}{104|\Tcal|}.
\end{equation}
The last inequality follows from the fact that
$|\Tcal| \le (1+1/\eps\epsLarge)^{2/\epsLarge-2}$.

\rthmThinCPack*
\begin{proof}
In an optimal fractional compartmental bin packing of $\Itild$,
let $P^*$ be the corresponding packing of empty compartments into bins.
Hence, $P^*$ contains $m \defeq \fcopt(\Itild)$ bins.
Since $\iterPackings(\Itild)$ iterates over all packings of compartments into bins,
$P^* \in \iterPackings(\Itild)$.
Since wide and tall items in $\Itild$ can be packed into the compartments of $P^*$,
we get that $x^*$ and $y^*$ are not \Null.
By \cref{thm:nfdh-wide-tall}, the number of bins used by NFDH to pack
$\Imed \cup D$ is less than $2a(\Imed \cup D)/(1-\delta) + 3 + 1/(1-\delta)$.
Therefore, the number of bins used by $\thinCPack(I)$ is less than
\begin{align*}
& m + \frac{2a(\Imed \cup D)}{1-\delta} + 3 + \frac{1}{1-\delta}
\\ &< m + \frac{2\eps}{1-\delta}a(I)
    + \frac{2\epsSmall}{1-\delta}\left(\frac{52|\Tcal|}{\epsLarge} m
        + 4\left(\frac{|\Tcal|^2}{2} + \frac{6|\Tcal|+2}{\eps\epsLarge}\right)\right)
    + 3 + \frac{1}{1-\delta}
    \tag{by \cref{eqn:greedy-discard-area} and $a(\Imed) \le \eps a(I)$}
\\ &= \left(1 + \frac{104\epsSmall|\Tcal|}{\epsLarge(1-\delta)}\right)m
    + \frac{2\eps}{1-\delta}a(I)
    + 3 + \frac{1}{1-\delta} + \frac{8\epsSmall}{1-\delta}\left(
        \frac{|\Tcal|^2}{2} + \frac{6|\Tcal|+2}{\eps\epsLarge}\right)
\\ &= \left(1 + \frac{\eps}{1-\delta}\right)m + \frac{2\eps}{1-\delta}a(I)
    + 3 + \frac{1}{13(1-\delta)}\left(\frac{\eps\epsLarge|\Tcal|}{2}
        + 19 + \frac{2}{|\Tcal|}\right).
    \tag{by \cref{eqn:epsSmall}}
\end{align*}
By \cref{thm:struct,thm:thin-bp:lingroup-opt-compare}, we get
\[ m = \fcopt(\Itild) < (1+4\eps)\fopt(\Itild) + 2
< (1+4\eps)(1+\eps)\opt(I) + 4 + 8\eps. \]
Therefore, the number of bins used by $\thinCPack(I)$ is less than
\begin{align*}
& \left((1+4\eps)(1+\eps)\left(1 + \frac{\eps}{1-\delta}\right)
    + \frac{2\eps}{1-\delta}\right)\opt(I)
\\ &\qquad + (4 + 8\eps)\left(1 + \frac{\eps}{1-\delta}\right) + 3
    + \frac{1}{13(1-\delta)}\left(\frac{\eps\epsLarge|\Tcal|}{2} + 19 + \frac{2}{|\Tcal|}\right)
\\ &\le (1+20\eps)\opt(I)
    + \frac{1}{13}\left(1 + \frac{1}{\eps\epsLarge}\right)^{2/\epsLarge - 2} + 23.
    \tag{since $\delta \le \epsLarge \le \eps \le 1/2$}
\end{align*}
\end{proof}

\end{document}